\NewDocumentCommand{\setlisttikz}{mm}
{
	\clist_clear_new:c { l_jens_#1_array_clist }
	\clist_set:cn { l_jens_#1_array_clist } { #2 }
}
\NewExpandableDocumentCommand{\listitem}{mm}
{
	\clist_item:cn { l_jens_#1_array_clist } { #2 }
}
\newtheorem*{rep@theorem}{\rep@title}
\newcommand{\newreptheorem}[2]{%
	\newenvironment{rep#1}[1]{%
		\def\rep@title{#2 \ref{##1}}%
		\begin{rep@theorem}}%
		{\end{rep@theorem}}}
\DeclareMathOperator*{\argmin}{arg\,min}
\newcommand{\rr}{\mathbb{R}}
\newcommand{\nn}{\mathbb{N}}
\newcommand{\betab}{\boldsymbol{\beta}}
\newcommand{\psib}{\boldsymbol{\psi}}
\newcommand{\lambdab}{\boldsymbol{\lambda}}
\newcommand{\eg}{F_{\tree_v}}
\newcommand{\ec}{F_{\gamma}}
\newcommand{\er}{F_{\tree}}
\newcommand{\jlc}{J_{\gamma_L,\gamma_C}}
\newcommand{\jcg}{J_{\gamma_C,v}}
\newcommand{\jlg}{J_{\gamma_L,v}}
\newcommand{\jrc}{J_{\tau_L,\gamma_C}}
\newcommand{\jrg}{J_{\tau_L,v}}
\newcommand{\jrl}{J_{\tau_L,\gamma_L}}
\newcommand{\ub}{\boldsymbol{u}}
\newcommand{\yb}{\boldsymbol{y}}
\newcommand{\qb}{\boldsymbol{q}}
\newcommand{\ab}{\boldsymbol{a}}
\newcommand{\bb}{\boldsymbol{b}}
\newcommand{\stack}{\textrm{Stack}}
\newcommand{\tree}{\mathcal{G}}
\newcommand{\var}{\textrm{Var}}
\newcommand{\cov}{\textrm{Cov}}
\newcommand{\children}{\textrm{Child}}
\newcommand{\level}{\textrm{Level}}
\newcommand{\card}{\textrm{Card}}
\newtheorem{theorem}{Theorem}
\newtheorem{lemma}{Lemma}
\theoremstyle{definition}
\newtheorem{remark}{Remark}
\newtheorem{example}{Example}
\date{\today}
\title{Least Squares Estimation For Hierarchical Data}
\author{Ryan Cumings-Menon}
\author{Pavel Zhuravlev}
\thanks{We thank the Editor, the Associate Editor, two anonymous referees, Aref Dajani, Sourya Dey, Justin Doty, Mark Fleischer, Caleb Floyd, Daniel Kifer, Philip Leclerc, Mary Pritts, and Rolando Rodr\'{i}guez for their helpful comments and suggestions. The views expressed in this paper are those of the author and not those of the U.S. Census Bureau. The Census Bureau has reviewed this data product to ensure appropriate access, use, and disclosure avoidance protection of the confidential source data (Project No. 7502798, Disclosure Review Board (DRB) approval number: CBDRB-FY24-CED005-0002). Works created by U.S. Government employees are not subject to copyright in the United States, pursuant to 17 U.S.C. § 105.}
\begin{document}

	\begin{abstract}
		The U.S. Census Bureau's 2020 Disclosure Avoidance System (DAS) bases its output on noisy measurements, which are population tabulations added to realizations of mean-zero random variables. These noisy measurements are observed for a set of hierarchical geographic levels, \textit{e.g.}, the U.S. as a whole, states, counties, census tracts, and census blocks. The Census Bureau released the noisy measurements generated in the DAS executions for the two primary 2020 Census data products, in part to allow data users to assess uncertainty in 2020 Census tabulations introduced by disclosure avoidance. This paper describes an algorithm that can leverage the hierarchical structure of the input data in order to compute very high dimensional least squares estimates in a computationally efficient manner. Afterward, we show that this algorithm's output is equal to the generalized least squares estimator, describe how to find the variance of linear functions of this estimator, and provide a numerical experiment in which we compute confidence intervals of tabulations based on this estimator. We also describe an accompanying Census Bureau experimental data product that applies this estimator to the publicly available noisy measurements to provide data users with the inputs required to derive confidence intervals for all tabulations that were included in the 2020 Redistricting Data File, for the U.S., state, county, and census tract geographic levels. 
	\end{abstract}
	
	\maketitle
	
	\section{Introduction}

	As described by \cite{abowd20222020} and \cite{cumings2023disclosure} in more detail, the U.S. Census Bureau's 2020 Disclosure Avoidance System (DAS) uses formally private methods \citep{dwork2006calibrating, bun2016concentrated} to protect the confidentiality of 2020 Census respondents. One important advantage of formally private methods is that they allow for a greater level of transparency relative to more classical disclosure limitation methods. For example, the Census Bureau has released the noisy measurements that were used to create the tabulations found in the published 2020 Census statistical data products \citep{DHCddps}; these are defined as demographic cross tabulations within specific geographic regions (\textit{e.g.}, the count of respondents that identify as both Black and Asian in Rhode Island, the count of respondents that identify as Hispanic or Latino in the U.S. as a whole, \textit{etc.}) added to realizations of mean-zero random variables. Since the distribution of each noisy measurement is also public, these noisy measurements allow users to compute confidence intervals (CIs) for the confidential cross tabulations, \textit{i.e.}, the cross tabulations prior to the application of disclosure limitation methods.\footnote{Note that we do not attempt to account for uncertainty in the underlying confidential counts of the decennial census in this paper, such as non-response and enumeration errors; instead, we view these confidential counts prior to the application of disclosure limitation methods as unknown and non-random population parameters.} URLs for these noisy measurements are available in Table 15 of \citep{cumings2023disclosure}.

	Many possible CIs can be derived by using only a small subset of the full set of noisy measurements. For example, one can use only the noisy measurement for the Asian population in Rhode Island (RI) to derive a CI for this same population count. However, it is also possible to improve on these na\"ive estimators by leveraging more information from the noisy measurements. In particular, some additional notation will be helpful to describe the unique estimator that leverages all of the noisy measurements, along with the sense in which this estimator can be viewed as optimal. Specifically, let $\betab\in \nn^N$ denote the vector of confidential histogram counts, \textit{i.e.}, the flattened fully-saturated contingency table, and let $\ub \in \rr^M $ denote the vector of independent and mean-zero random variables added to the confidential tabulations when defining the noisy measurements used by DAS.\footnote{While the counts based on the output of DAS can be biased in some cases and are generally dependent on one another, the random variables generated internally in DAS, \textit{i.e.}, the vector $\ub\in\rr^M$ in \eqref{eq:yb_def_intro}, are each mean-zero and independent random variables. In other words, the post-processing methods used within DAS take the unbiased and independent noisy measurements as input, but these properties are sacrificed by the post-processing methods used by DAS in order to satisfy other desiderata, such as consistency between the tabulations in different geographic levels and nonnegativity.} Thus, the vector of noisy measurements for either of the two main 2020 Census data products, \textit{i.e.}, the Redistricting Data File or the Demographic and Housing Characteristics File (DHC), is given by
	\begin{align}
		\yb &= S \betab + \ub,  \label{eq:yb_def_intro}
	\end{align}
	where $S\in\rr^{M\times N}$ has full column rank. Using this notation, the generalized least squares (GLS) estimator and its variance matrix are
	\begin{align}
		\tilde{\betab} &= \argmin_{\betab} (S \betab - \yb)^\top \var(\ub)^{-1} (S \betab - \yb)  = \left(S^\top \var(\ub)^{-1} S\right)^{-1} S^\top \var(\ub)^{-1} \yb  \label{eq:est_betab_naive}  \\
		\var(\tilde{\betab}) &=\left(S^\top \var(\ub)^{-1} S\right)^{-1}. \label{eq:var_betab_naive}
	\end{align}
	Aitken's Theorem states that this estimator is the best linear unbiased estimator (BLUE), which implies that for any alternative unbiased estimator for $\betab$ of the form $\check{\betab} = A\yb,$ we have $\var(\tilde{\betab}) \leq \var(\check{\betab}),$ in the sense that $ \var(\check{\betab})- \var(\tilde{\betab})$ is positive semidefinite \citep{aitken1935iv}.

	At this point, asymptotic normality of this estimator (\textit{i.e.}, in the limit as the number of noisy measurements, $M,$ diverges) could be used to derive a full-information CI of the unobserved value $\qb^\top \betab,$ for any given user-defined vector $\qb \in\rr^{N},$ using both $\qb^\top \tilde{\betab}$ and its variance. However, for the use case we have in mind, it is not feasible to compute this estimator directly because the dimension of both $\betab$ and $\yb$ (\textit{i.e.}, $M$ and $N,$ respectively) is over 10 billion for the 2020 Redistricting Data File, which is the smaller of the two main decennial census data products. 

	The purpose of this paper is to describe a computationally efficient method for computing $\qb^\top \tilde{\betab}$ and its variance to derive full-information CIs for counts of respondents in both arbitrary demographic groups and any geographic region. While this estimator uses the same noisy measurements generated in the 2020 production DAS executions as input, note that this estimator will not generally be equal to that of the 2020 Census tabulation, since DAS uses different post-processing steps to ensure additional constraints hold, such as non-negativity constraints. Also, note that we view the population parameter $\betab$ as fixed throughout the paper, and our main goal is compute CIs on this population parameter to account for uncertainty that is due solely to the application of statistical disclosure limitation methods, rather than other sources of uncertainty in decennial census counts, such as non-response and enumeration errors. The main property that allows for computational feasibility is that the geographic units used to define these noisy measurements are defined hierarchically in a rooted tree, \textit{e.g.}, the vertex corresponding to the U.S. as a whole splits into states, which are in turn further divided into vertices corresponding to more granular geographies, all the way down to the most granular set of vertices, census blocks, as illustrated in the next section in Figure \ref{fig:spine}.

	While our motivating use case concerns 2020 Census data, we expect the methods described here to be of interest in other settings in which the input data are also hierarchical. As described in Section \ref{assumptions} in more detail, the primary requirements for our proposed approach, in addition to the standard Gauss-Markov assumptions, include: 1) The data must be hierarchical, in the sense that there exists a rooted tree such that the unknown population parameter (\textit{e.g.}, a vector of population counts) associated with each non-leaf vertex is equal to the sum of those of its child vertices; and 2) The observations associated with each vertex in this tree must be independent of the observations of every other vertex. The first of these two assumptions can be viewed as a requirement that we place on the matrix $S.$ Example \ref{eg:intro} describes our requirement regarding the unobserved coefficients being defined in a hierarchical manner in the case of a particularly simple tree, along with the implication of this assumption on the structure of $S.$ Note that this example denotes a set of column vectors, $\{\boldsymbol{q}_i\}_i,$ concatenated together as $\stack(\{\boldsymbol{q}_i\}_i),$ which we also use in the remainder of the paper.

	\begin{example} \label{eg:intro}
		We will encode the hierarchy of units for which we observe noisy measurements using the rooted tree denoted by $\tree.$ Suppose $\tree$ is composed of the root vertex $r$ and that this vertex has the child vertices $c$ and $d.$ Suppose that we observe both a total population and a Voting\_Age (\textit{i.e.}, both the population count for those that are below 18 years old and the count of those that are at least 18) noisy measurement for each of these three vertices. In other words, if we denote the confidential (\textit{i.e.}, unobserved) Voting\_Age counts for each vertex $v\in \{r, c,d\}$ by $\betab(v)\in\nn^2,$ then the noisy measurements we observe for vertex $v$ are defined as
		\begin{align} \label{vertex_specific_nmfs}
			\yb(v) = S(v) \betab(v) + \ub(v)
		\end{align}
		where the vertex-specific design matrix $S(v)\in \rr^{m\times n}$ is defined as 
		\[S(v) = \begin{bmatrix}
			1 & 1 \\
			1 & 0 \\
			0 & 1
		\end{bmatrix},\]
		which encodes the total population query in the first row and both levels of the Voting\_Age query in the final two rows, and where each element of $\ub(v)\in \rr^3$ is distributed as $\ub(v)[i]\sim \textrm{iid}(0, \sigma^2_i);$ note that the specific distribution used in our primary use case is described in Section \ref{sec:setting}.

		Using this notation, the vectorized noisy measurements for all three vertices can be defined in a similar way as \eqref{eq:yb_def_intro}; specifically, 
		\begin{align*}
			\yb &=\stack(\{\yb(r),\yb(c),\yb(d)\}) = S \betab + \ub \\
            &=  \begin{bmatrix}
				S(r) & S(r) \\
				S(c) & 0 \\
				0 & S(d)
			\end{bmatrix}  \stack(\{\betab(c),\betab(d)\}) + \stack(\{\ub(r), \ub(c), \ub(d)\}) \\
			&= \begin{bmatrix}
				1 & 1 & 1 & 1 \\
				1 & 0 & 1 & 0\\
				0 & 1 & 0 & 1 \\
				1 & 1 & 0 & 0 \\
				1 & 0 & 0 & 0\\
				0 & 1 & 0 & 0\\
				0 & 0 & 1 & 1 \\
				0 & 0 & 1 & 0 \\
				0 & 0 & 0 & 1
			\end{bmatrix} \stack(\{\betab(c),\betab(d)\}) + \stack(\{\ub(r), \ub(c), \ub(d)\}). 
		\end{align*}

        Our requirement that the input data are hierarchical amounts to requiring that the noisy measurements of each vertex $v\in \tree$ depend on the histograms of its set of descendant leaves (say, $D)$ in the same way, meaning $\yb(v) = S(v)\left(\sum_{d\in D} \betab(d)\right) +\ub(v).$ In the context of this example, this ensures we can define the noisy measurements using the individual vertex definition given in \eqref{vertex_specific_nmfs} or a definition that defines all noisy measurements simultaneously, as in \eqref{eq:yb_def_intro}.  \qed
	\end{example}

	An implementation of our proposed methods for our primary use case is also available at \url{https://github.com/uscensusbureau/DAS_2020_GLS_Uncertainty_Evaluation}; as described in Section \ref{sec:upcoming_ci_release} in more detail, this codebase was used for a recent experimental data product that provides the required inputs to generate CIs for all published 2020 Redistricting Data File tabulations down to the census tract geographic level.

	After introducing notation in the next subsection, the remainder of this section describes our primary motivating use case in more detail, and reviews the previous literature on methods related to the ones we describe here. Both of our main results on properties of the proposed estimator, which, respectively, show that our proposed algorithm outputs the GLS estimator and describe a case in which inference is exact, are provided in Section \ref{sec:main_results} and proved in the appendices. Section \ref{sec:two_pass_est} describes our proposed approach for estimating the full-information generalized least squares (GLS) estimator, which we call the Two-Pass Algorithm because this algorithm performs operations recursively from the leaf vertices up to the root vertex and then from the root vertex down to the leaves. Section \ref{sec:cov} describes a computationally efficient way to compute the covariance between the GLS estimates of arbitrary pairs of vertices of the tree. Section \ref{sec:summary} describes how the proposed algorithms can be used to compute CIs for user-defined queries in arbitrary geographic regions. Section \ref{sec:numerical_exp} provides a numerical experiment of our proposed approach, to explore the accuracy of the resulting CI estimates and the computational requirements of our proposed algorithms. In Section \ref{sec:connection_w_hays_et_al} we show that our proposed method is a generalization of the estimator described by \cite{hay2010boosting}. Our proposed algorithms were also used to create a recent statistical data product release, which is described in Section \ref{sec:upcoming_ci_release}, and Section \ref{sec:conclusion} concludes.

	\subsection{Notation} \label{sec:notation}

	Throughout the paper we denote matrices using uppercase, and vectors using bold lowercase. Given the matrices $A,B\in\rr^{N\times N}$ we use $A\leq B$ to denote the property that all eigenvalues of $B-A$ are real and the minimum eigenvalue of $B-A$ is non-negative. Also, let $A \otimes B$ denote the Kronecker product of $A,B\in \rr^{M\times N}.$ Given the random vectors $\boldsymbol{a}\in\rr^M$ and $\boldsymbol{b}\in\rr^N,$ each with elements with finite variance, we use $\var(\boldsymbol{a})\in\rr^{M\times M}$ to denote the variance matrix of $\boldsymbol{a}$ and $\cov(\boldsymbol{a}, \boldsymbol{b})\in\rr^{M\times N}$ to denote the covariance matrix between $\boldsymbol{a}$ and $\boldsymbol{b}.$ The length $N$ column vector with each element equal to $k\in\rr$ is denoted by $\boldsymbol{k}_N,$ and when there is little risk of confusion, we omit the subscript and write $\boldsymbol{k}$ instead. We denote the $N\times N$ identity matrix by $I_N,$ the $i^\textrm{th}$ row of $A\in \rr^{M\times N}$ by $A[i,\cdot],$ and the $j^\textrm{th}$ column of $A$ by $A[\cdot,j].$ We also denote the cardinality of the finite set $S$ by $\card(S).$

	Since we primarily use terminology from graph theory (rather than the terminology used by the Census Bureau), it will be helpful to describe this notation here. Specifically, let the rooted tree be denoted by $\tree.$ To denote the subtree of $\tree$ rooted at vertex $v\in \tree,$ \textit{i.e.}, the subset of $v$ that includes $v\in\tree$ and all of its descendants, we use $\tree_v.$ We assume that $\tree$ is defined so that the shortest path from any given leaf vertex to the root vertex all have a length equal to $L,$ and for any level index $l\in\{0,\dots, L\},$ we use $\level(\tree, l)$ to denote the set of vertices in level $l.$ We also use $\children(v)$ to denote the set of child vertices of $v\in\tree.$

	For each vertex $v\in\tree,$ we will denote the vector of (unknown) independent variables associated with the vertex by $\betab(v)\in\rr^n.$ Throughout the paper, we assume that parent-child consistency holds; in other words, for any vertex $v\in\tree$ that is not a leaf, we assume 
	
	\begin{align}
		\sum_{c\in\children(v)} \betab(c) = \betab(v). \label{eq:par_child_cons}
	\end{align}
	
	\noindent
	We also associate with each vertex $v\in \tree$ a full column rank design matrix $S(v)\in \rr^{m\times n},$ and a vector of observations defined by $\yb(v) = S(v) \betab(v) + \ub(v),$ where $\ub(v)\in\rr^{m}$ is a mean-zero random variable with finite variance. Note that we do not assume that the design matrix $S(v),$ or the distribution of $\ub(v),$ is the same for each vertex $v\in \tree.$\footnote{One implication of our use of $m$ and $n$ to denote the number of rows and columns of $S(v)$ throughout the paper is that this implicitly requires that these dimensions are the same for each vertex $v\in\tree;$ however, we actually only require the number of columns of $S(v),$ \textit{i.e.}, $n\in\nn,$ to be the same for each vertex.} In the context of our main use case, $\ub(v)[i]$ is independent of the remaining elements of $\ub(v),$ and also independent of all elements of $\ub(c)$ for each vertex $c\in\tree$ that is not equal to $v.$ However, our proposed approach actually only requires the latter of these two conditions.  

    We perform various stacking operations on the attributes of the vertices below, \textit{e.g.}, $\stack(\{S(v)\betab(v)\}_{v\in\tree}).$ These operations require that all sets of vertices that we use in the paper are totally ordered to ensure that the vertex ordering used within these stack operations are consistent with one another, \textit{e.g.}, so that $\stack(\{\yb(v) - S(v)\betab(v)\}_{v\in\tree}) = \stack(\{\ub(v)\}_{v\in\tree}).$

	\subsection{The Decennial Census Setting} \label{sec:setting}
	
	The official documentation provided by the Census Bureau often uses terms that differ from those that are commonly used in graph theory. This section briefly outline a few of these alternative terms to describe how the notation introduced in the previous subsection relates to our primary motivating use case. Specifically, the Census Bureau's concept of a \textit{geographic spine} can alternatively be viewed as a rooted tree. The spines containing the primary geographic units for which decennial census tabulations are released are known as the U.S. or Puerto Rico (PR) \textit{tabulation spines}, which have root vertices corresponding to, respectively, the U.S. as a whole (\textit{i.e.}, the region defined as the union of all 50 states and Washington, DC) and PR. The geographic extent of each vertex in any spine corresponds to an element of a partition of the geographic extent of the root vertex. The collection of vertices corresponding to elements within the same partition is called a \textit{geographic level}. For example, in the tabulation U.S. spine, the U.S. geographic level contains only one vertex, corresponding to the geographic extent of the U.S. as a whole, and the state level consists of the children of the U.S. vertex, \textit{i.e.}, 51 units, each corresponding to either one of the 50 states or Washington DC. The other geographic levels on the tabulation U.S. spine are the county, census tract, block group, and block levels. Figure \ref{fig:spine} provides a graphical representation of the tabulation U.S. spine, including an example of a path from the U.S. geographic unit to a census block. 
	
	The data product accompanying this paper contains the information required to create CIs for geographic units on these tabulation spines. However, the noisy measurements that were leveraged to create these estimates were defined using alternative geographic spines, which were called optimized spines. For the purposes of this paper, we do not require more detail on how DAS defines the optimized spine internally prior to generating the noisy measurements, aside from the fact that this internal spine is distinct from the one for which we provide count estimates and CIs, \textit{i.e.}, the tabulation geographies.\footnote{More detail on why optimized spines are used internally in DAS, rather than the tabulation spine directly, as well as how the optimized spines are defined, can be found in \citep{cumings2024geographic}.}


	\begin{figure}
		\centering
		\begin{tikzpicture}[
			level 1/.style = {sibling distance = 1.7cm}, 
			level 2/.style = {sibling distance = 1.7cm},
			level 3/.style = {sibling distance = 1.7cm},
			level 4/.style = {sibling distance = 1.7cm},
			level 5/.style = {sibling distance = 1.7cm},
			]
			\setlisttikz{geographic levels}{U.S.,State,County,Tract,Block Group,Block}
			\node (root) {U.S.}
			child {
				node (alabama) {Alabama (01)}
				child {
					node (autauga) {Autauga (01001)}
					child {
						node (tract1) {01001020100}
						child {
							node (bg1) {010010201001}
							child {
								node (b1) {0100102010011000}
							}
							child {node (b2) { }}
							child {node (b3) { }}
						}
						child {node (bg2) { }}
						child {node (bg3) { }}
					}
					child {node (tract2) { }}
					child {node (tract3) { }}
				}
				child {node (county2) { }}
				child {node (county3) { }}
			}
			child {node (state2) { }}
			child {node (state3) { }};
			
			\draw[dotted, very thick,<->] (b2) to [bend right=15] (b3);
			\draw[dotted, very thick,<->] (bg2) to [bend right=15] (bg3);
			\draw[dotted, very thick,<->] (tract2) to [bend right=15] (tract3);
			\draw[dotted, very thick,<->] (state2) to [bend right=15] (state3);
			\draw[dotted, very thick,<->] (county2) to [bend right=15] (county3);
			
			\coordinate (topRight) at ($(current bounding box.north east) + (2,0)$);
			
			\foreach \y  in {0, 1, 2, 3, 4, 5} {
				\node[label] at ($(topRight) + (0, -\y * 1.505 -.26)$) {\listitem{geographic levels}{\y+1} Geolevel $(l=\y)$};
			};
		\end{tikzpicture}
		\caption{This is a graphical depiction of the 2020 tabulation spine. The right column provides the geographic level names and indices. The left side of the figure provides an example of a path between the root and a leaf vertex. The census geographic codes for the geographic units on this path are provided in parentheses.}
		\label{fig:spine}
	\end{figure}

	Both the numerical experiment described in Section \ref{sec:numerical_exp} and the accompanying data product described in Section \ref{sec:upcoming_ci_release}, leverage noisy measurements defined in the histogram schema of the persons universe of the 2020 Redistricting Data File. For each geographic unit, these tabulations are composed of cross products of the following attributes, as described by \cite{abowd20222020} in more detail.

	\begin{itemize}
		\item Household or Group Quarters Type (HHGQ), 8 levels: Provides counts of those living in households, correctional facilities for adults, juvenile facilities, nursing facilities/skilled-nursing facilities, other institutional facilities, college/university student housing, military quarters, and those living in other non-institutional facilities
		\item Voting Age (VOTING\_AGE), 2 levels: Provides counts of individuals that are age 17 or younger and those that are age 18 or older
		\item Hispanic or Latino Origin (HISPANIC), 2 levels: Provides counts of individuals that are Hispanic/Latino and those that are not Hispanic or Latino
		\item Census Race (CENRACE), 63 levels: Provides counts of individuals that identify as each combination of Black/African American, American Indian/Native Alaskan, Asian, Native Hawaiian/Pacific Islander, White, and some other race, except ``none of the above."
	\end{itemize}

	As described above, the tabulations used to define each noisy measurement in the noisy measurements is defined as the cross product of subsets of one or more of these attributes. After each confidential scalar element of the tabulation is computed, the final noisy measurement is defined by adding a mean-zero independent random variable to this scalar. Specifically, these random variables each follow a \textit{discrete Gaussian} distribution \citep{canonne2020discrete}, which has a probability mass function given by
	\[f(x) \propto \exp\left(-x^2 / (2 \sigma^2) \right),\]
	but, unlike the Gaussian distribution, it has a support given by the integers. The parameter $\sigma^2$ for each of these random variables depends on both the vertex and tabulation, and was set to be consistent with a $\rho$-zCDP privacy accounting framework \citep{bun2016concentrated}; for more detail, see \cite{abowd20222020}.

	Note that \cite{DHCddps} provides instructions for downloading the production 2020 Redistricting Data File and DHC noisy measurements. The DHC noisy measurements use a larger schema (\textit{i.e.}, one that is strictly more granular) than the Redistricting Data File schema to support the estimation of more detailed tabulations. For example, while the persons universe Redistricting Data File schema described above consists of $n = 2,016 = 8 \times 2 \times 2 \times 63$ detailed histogram cells in each vertex, the persons universe DHC schema consists of 1,227,744 detailed cells in each vertex.

	\subsection{Previous Literature} \label{sec:previous_literature}
	
	Our proposed estimation algorithm is a generalization of the approach described by \cite{hay2010boosting}, in the sense that these two approaches provide identical estimates when $m=n=1,$ $\var(\ub(v))$ is the same for each vertex $v\in\tree,$ and the number of children of each parent vertex is constant. More detail on this connection is provided in Section \ref{sec:connection_w_hays_et_al}. Note that, in addition to providing an algorithm for the GLS estimate itself, we also provide a method that computes the exact variance of the estimate, rather than a bound on this variance. \cite{honaker2015efficient} builds on \citep{hay2010boosting} by providing additional analysis and a simulation exercise. Also, \cite{xu2013differentially} describes variants of the algorithm proposed by \cite{hay2010boosting}, including those that allow for parent vertices to have differing numbers of children and for these numbers of children to be chosen adaptively based on the data. In the differential privacy literature, mechanisms that add noise to linear queries, as is done in \eqref{eq:yb_def_intro}, and then return a GLS estimator based on the observations are examples of matrix mechanisms, which was originally described by \cite{li2010optimizing}.

	There are also a few similarities between the approach described here and \textit{forward-backward} algorithms described in the literature on hidden Markov models on trees; see for example, \citep{willsky2002multiresolution}. However, these approaches are not directly related to the one proposed here because they assume the data generating process satisfies a Markov property, which is precluded in our setting by our parent-child consistency assumption provided in \eqref{eq:par_child_cons}. 

    Other work has been done to compensate for various types of errors in the statistical data products published by the Census Bureau. In particular, \cite{agarwal2021causal} provide methods to compensate for a much more general class of errors than the ones we consider here, including non-response and enumeration errors, without strong assumptions on the parametric distribution of the errors used for the application of statistical disclosure limitation methods. As described previously, one of the main advantages of the Census Bureau's adoption of formally private methods is that the methods themselves can be transparently communicated, which is not possible for the disclosure limitation methods that were used by the Census Bureau for previous decennial censuses, including the distributions of the noisy measurements that are generated within the DAS, \textit{i.e.}, $\yb$ in \eqref{eq:yb_def_intro}, the algorithms that post-processes these noisy measurements, the code repositories in which these algorithms were implemented, and the realizations of the noisy measurements themselves \citep{abowd20222020,cumings2023disclosure}. Our proposed approach differs from that of \cite{agarwal2021causal} because we only characterize the uncertainty due to disclosure avoidance, since, unlike other sources of error, the publicly available noisy measurements can be leveraged to accurately model these errors using a closed-form distribution. However, in the context of the decennial census, errors from other sources, including non-response and enumeration errors, are often more significant than those due to the application of disclosure limitation methods. Other work has focused on characterizing these other errors and modeling their magnitudes in the absence of uncertainty due to the application of disclosure limitation methods. For example, coverage estimates are provided in U.S. Census Bureau reports based on the 2020 Post-Enumeration Survey (PES), such as \citep{khubba2022national}. Also, \cite{schafer2021} describe two approaches to leverage the PES to create conservative bounds on errors unrelated to disclosure avoidance in the context of the 2010 Census.

    There are also other feasible approaches to solve large least-squares problems using iterative approaches that leverage sparsity in design matrices. For example, the least-squares estimator computed by our approach could alternatively be computed using the stochastic gradient descent algorithm or the MINRES algorithm proposed by \cite{paige1975solution}. In contrast to these alternatives, our proposed approach also provides the variance of arbitrary linear products with the least-squares estimator, which is required to construct confidence intervals in our primary use case. Since our proposed approach uses a direct (rather than iterative) algorithm, our approach also avoids possible concerns related to convergence and numerical tolerances.

	\subsection{Assumptions} \label{assumptions}

	We use several interrelated assumptions throughout the paper, most of which were already described above. These are also summarized in the table below for later reference.

	\begin{enumerate}[label=(\roman*)]
		\item\label{assump:spine_struct} $\tree$ is a rooted tree, and, for each leaf vertex $v\in\tree,$ there are $L$ edges between $v$ and the root of $\tree.$ Also, for each parent vertex $v,$ \eqref{eq:par_child_cons} holds, \textit{i.e.}, $\sum_{c\in\children(v)} \betab(c) = \betab(v).$ 
		\item\label{assump:srank} For each $v\in\tree,$ the design matrix $S(v)\in \rr^{m(v) \times n}$ is fixed and has full column rank.
		\item\label{assump:u_iid} For each $v\in\tree,$ $\yb(v)=S(v) \betab(v) + \ub(v),$ where $\ub(v)$ is a mean zero random vector with variance matrix $\var(\ub(v))$ and, for each $w\in\tree$ not equal to $v,$ $\ub(w)$ is independent of $\ub(v).$
		\item\label{assump:normality} For each $v\in\tree,$ $\ub(v)\sim N(\boldsymbol{0}, \var(\ub(v))).$ 
	\end{enumerate}

	We assume these conditions hold throughout the paper with the exception of Assumption \ref{assump:normality}, which is only needed to prove that our proposed method provides exact inference in finite samples. Assumption \ref{assump:spine_struct} can be understood as the requirement that the dataset be in a hierarchical format. Note that spines that do not have $L$ edges between the root and each leaf vertex can still be used in this framework after adding vertices between each leaf and its parent vertex, so this assumption is primarily for notational convenience. The added vertices will not impact the final estimates as long as their $\var(\ub(v))$ is defined as a diagonal matrix with diagonal entries equal to infinity.

	Assumptions \ref{assump:srank} and \ref{assump:u_iid} are standard in work related to linear regressions, since they are a variant of the assumptions required for the Aitken's Theorem, which is closely related to the Gauss-Markov Theorem and is provided for completeness in Lemma \ref{lem:appendix:gauss_markov_thm} \citep{aitken1935iv}. Note that it is possible to prove our main result even after relaxing \ref{assump:srank} by removing the assumption that $S(v)$ is fixed for each $v\in\tree,$ and instead only assuming it is exogenous, \textit{i.e.}, $E(\ub | S) = 0;$ for more detail, see \citep{greene2003econometric}. 

    It is also worth pointing out that we assume $\var(\ub(v))$ is known for each $v\in\tree$ throughout our paper. This condition holds in our primary use case, since internally DAS draws each error vector $\ub(v)$ from a predefined distribution, as described in Section \ref{sec:setting} in more detail. However, there are also multiple ways to use our proposed approach in cases in which this exact variance matrix is unknown that would still be consistent. First, one option would be to simply use the modeling assumption that the errors are homoscedastic, \textit{i.e.}, $\var(\ub(v))=I\sigma^2.$ Second, one could also use an approach that is analogous to feasible GLS \citep{greene2003econometric}, but some care must be taken when using this approach to ensure the initial variance error estimates imply the errors are uncorrelated between each pair of vertices $w,v\in\tree,$ as described in Assumption \ref{assump:u_iid}. This can be done for each $v\in\tree$ by estimating the initial variance matrix $\var(\ub(v))$ using only attributes associates with vertex $v,$ such as $S(v)$ and $\yb(v).$ Consistency of this approach would require the number of observations associated with each vertex to diverge.

	\subsection{Main Results} \label{sec:main_results}

	The first main result of this paper shows that the algorithm proposed here is a generalization of that of the one proposed by \cite{hay2010boosting}, which is stated below and proved in Section \ref{sec:connection_w_hays_et_al}. 

	\begin{reptheorem}{thm:hay_et_al_equality}
		Suppose Assumptions \ref{assump:spine_struct}-\ref{assump:u_iid} hold and that, for every $v\in\tree,$ $S(v)$ is equal to the matrix $\begin{bmatrix} 1 \end{bmatrix}$ and $\var(\ub(v))$ is equal to the same scalar value. Also, suppose that each parent vertex $v\in\tree$ has exactly $k\in\nn$ children. Then the estimator described by \cite{hay2010boosting} is identical to the estimator $\{\tilde{\betab}(v)\}_{v\in\tree}$ proposed in Section \ref{sec:two_pass_est}.
	\end{reptheorem}

	As described above, we also show that the estimator provided by our approach is the GLS estimator, which is stated below and proved in Appendix \ref{sec:blue_proof}.

	\begin{reptheorem}{thm:appendix:beta_tilde_is_blue}
		If Assumptions \ref{assump:spine_struct}-\ref{assump:u_iid} hold, then for each $v\in\tree$  $\tilde{\betab}(v),$ as defined in Section \ref{sec:two_pass_est}, and the value of $\tilde{\betab}_{H,\qb}$ returned from Algorithm \ref{alg:est_ci} is the full-information BLUE for $\betab(v).$ 
	\end{reptheorem}

	Before describing the next result, it may be helpful to restate that we view the population parameter $\betab$ in \eqref{eq:yb_def_intro} as fixed throughout the paper, and our main goal is compute confidence intervals on this population parameter, particularly for cases in which there are a large number of observations, \textit{i.e.}, $\yb$ in \eqref{eq:yb_def_intro}. In particular, since we essentially condition on $\betab$ in this paper, we do not derive inferences on errors that are unrelated to the application of statistical disclosure limitation methods, even though these other sources of error can be significant, because the statistical disclosure limitation methods used for the 2020 Census are transparent enough so that this uncertainty can be characterized without requiring strong modeling assumptions.

    With this in mind, the next main result is stated below and proved in Appendix \ref{sec:appendix:beta_tilde_is_normal}. Unlike the previous two results, a normality assumption is required for this result to hold in finite samples, \textit{i.e.}, when the dimension of $\yb$ in \eqref{eq:yb_def_intro} is fixed. As described above, for the primary use case we have in mind, $\ub(v)$ is distributed as a \textit{discrete} Gaussian, so this assumption does not quite hold in this use case. We provide a numerical experiment in Section \ref{sec:numerical_exp} to verify that the proposed approach appears to work well for the use cases we have in mind even when this assumption does not hold. In practice we expect our proposed approach will typically provide CIs with good coverage values in other use cases as well, even when Assumption \ref{assump:normality} does not hold, as long as there are a reasonably large number of observations, as is also described in Appendix \ref{sec:appendix:beta_tilde_is_normal} in more detail.

	\begin{reptheorem}{thm:appendix:beta_tilde_is_normal}
		If Assumptions \ref{assump:spine_struct}-\ref{assump:normality} hold, then $\tilde{\betab}_{H,\qb}$ is normally distributed, and the $1-\alpha$ CI of $\betab_{H,\qb}$ output from Algorithm \ref{alg:est_ci} is statistically valid.
	\end{reptheorem}

    The final result stated in this section provides the time complexity of our main algorithm; this result is proved in Appendix \ref{sec:appendix:time_complexity}. To compare this time complexity with that of the direct approach of computing $\tilde{\betab}$ using \eqref{eq:est_betab_naive}, if $M$ and $N$ are defined so that $S\in\rr^{M\times N},$ this direct approach has a time complexity of $O(M^2 N + N^3),$ which can be simplified to $O(M^2 N)$ because $M\geq N$ by Assumption \ref{assump:srank}. For the case in which a fixed proportion of the vertices of $\tree$ are leaves, then, using this notation and the assumptions introduced in the statement of the next theorem, the time complexity of this direct approach can be written as $O(m^2 n V^3)$ because $M$ and $N,$ respectively, are proportional to $m V$ and $n V.$ In other words, our proposed approach provides a time complexity that is lower than the direct approach by a factor of $V^2,$ so it is particularly advantageous for hierarchies that have a large number of vertices. This is the case for our primary use case because there are over 6 million leaf vertices in the U.S. geographic spine.

    \begin{reptheorem}{thm:appendix:time_complexity}
        Suppose that for each vertex $v\in\tree,$ $\yb(v) \in\rr^m$ and $\betab(v)\in \rr^n.$ Also, let $V$ be defined as the total number of vertices in $\tree,$ \textit{i.e.}, $V=\sum_l \card(\level(\tree, l)).$ Then the time complexity of Algorithm \ref{alg:two_pass_est} is $O(m^2 n V).$
	\end{reptheorem}

	\section{The Two-Pass Algorithm} \label{sec:two_pass_est}

	This section describes our proposed approach for computing the GLS estimator, which is provided in pseudocode in Algorithm \ref{alg:two_pass_est}. Table \ref{notation} provides a brief summary of the definitions used in this algorithm. The derivations of the formulas used in this algorithm are provided in Appendix \ref{sec:derivation_of_two_pass_algorithm}.

    At a high level, Algorithm \ref{alg:two_pass_est} performs a series of three updates on state variables within each vertex $v\in\tree,$ which include an estimate of $\betab(v)$ and its variance matrix. Each of the three updates to the state variable estimate of $\betab(v)$ defines this estimate as the GLS estimator that leverages progressively larger information sets, \textit{i.e.}, the subsets of observations used for estimation. Specifically, the first step of the algorithm is to define the estimate of each vertex $v$ as the GLS estimator based only on the observations associated with vertex $v$ itself. Second, in the fine-to-coarse recursion, this state variable is updated as the inverse-variance-weighted mean of this initial estimate and an estimate defined as the sum over the fine-to-coarse recursion estimates of the children of vertex $v,$ which is the GLS estimate for the sample defined as all observations associated with vertices in the subtree $\tree_v,$ as proved in Theorem \ref{thm:appendix:g_cond_g_minus_is_blue_for_any_level}. Similarly, in the coarse-to-fine recursion, which is the third and final update, this information set is expanded further to also include the observations of higher levels of $\tree$ by projecting the estimate from the fine-to-coarse recursion onto the set of estimates that satisfy parent-child consistency constraint \eqref{eq:par_child_cons}.

    \begin{table}[tbh]
		\centering
		\begin{tabular}{ l l }
			$\tree$ & The rooted tree \\
			$\tree_v$ & The subtree of $\tree$ rooted at vertex $v$ \\
			$\level(\tree, l)$ & The set of vertices that are $l\in\{0,1,\dots,L\}$ edges from the root vertex of $\tree$ \\
			$\children(v)$ & The set of vertices in $\tree$ that are children of $v$ \\
			$S(v) \in\rr^{m\times n}$ & The design matrix of vertex $v$ \\
			$\yb(v)\in\rr^m$ & The observations of vertex $v$ \\
			$\ub(v)\in\rr^m$ & The error component of the observations of vertex $v$ \\
			$\var (\boldsymbol{z})$ & The variance matrix of the random variable $\boldsymbol{z}$  \\
			$\betab(v)\in\rr^n$ & The (unobserved) vector of independent variables of vertex $v$ \\ 
			$\hat{\betab}(v | v)\in\rr^n$ & Estimate for vertex $v$ based on $\yb(v)$ \\ 
			$\hat{\betab}(v | v-)\in\rr^n$ & Estimate for vertex $v$ based on observations in $\tree_v$ \\ 
			$\tilde{\betab}(v)\in\rr^n$ & Estimate for vertex $v$ based on observations in all vertices in $\tree$ 
		\end{tabular}
		\caption{This table summarizes the definitions we use to compute the GLS estimator in Algorithm \ref{alg:two_pass_est}.} \label{notation}
	\end{table}

    \begin{algorithm}[tbh]
		\DontPrintSemicolon
		\SetKwInOut{Output}{return}
		\SetKwInOut{Input}{input}
		\Input{$\tree:$ The rooted tree, with the following objects associated with each vertex $v\in\tree :$ the vector of observations $\yb(v),$ the design matrix $S(v),$ and the variance matrix of the errors $V(\ub(v)).$}
		\tcp{Compute GLS estimates using only observations within each vertex:}
		$\hat{\betab}(v | v) \gets \left(S(v)^\top \var(\ub(v))^{-1} S(v)\right)^{-1} S(v)^\top \var(\ub(v))^{-1} \yb(v) $\;
		$\var(\hat{\betab}(v | v)) \gets \left(S(v)^\top \var(\ub(v))^{-1} S(v)\right)^{-1}$\;
		\tcp{Initialize fine-to-coarse recursion at leaf vertices:}
		\For{$v \in \level(\tree, L)$}{
			$\hat{\betab}(v | v-) \gets \hat{\betab}(v | v)$\;
			$\var(\hat{\betab}(v | v-)) \gets \var(\hat{\betab}(v | v))$}
		\tcp{Perform fine-to-coarse recursion:}
		\For{$l \in \{L-1, L-2, \dots, 0\}$}{
			\For{$v\in \level(\tree, l)$}{
				$\hat{\betab}(v | \children(v)-)  \gets \sum_{c\in\children(v)} \hat{\betab}(c| c-)$\;
				$\var(\hat{\betab}(v | \children(v)-))  \gets \sum_{c\in\children(v)} \var(\hat{\betab}(c| c-))$\;
				$\var(\hat{\betab}(v | v-)) \gets \left(\var(\hat{\betab}(v | v))^{-1} + \var(\hat{\betab}(v | \children(v)-))^{-1}\right)^{-1} $ \;
				$ \hat{\betab}(v | v-)  \gets \var(\hat{\betab}(v | v-)) \left(\var(\hat{\betab}(v | v))^{-1} \hat{\betab}(v | v) + \var(\hat{\betab}(v | \children(v)-))^{-1} \hat{\betab}(v | \children(v)-) \right)$}}
		\tcp{Initialize coarse-to-fine recursion at the root vertex, $r=\level(\tree, 0)[0]:$}
		$\tilde{\betab}(r) \gets \hat{\betab}(r | r-)$\;
		$\var(\tilde{\betab}(r)) \gets \var(\hat{\betab}(r | r-))$\;
		\tcp{Perform coarse-to-fine recursion:}
		\For{$l \in \{0, 1, \dots, L-1\}$}{
			\For{$v\in \level(\tree, l)$}{
				\For{$c\in\children(v)$}{
					$ A(c) \gets \var(\hat{\betab}(c | c-)) \left( \sum_{c'\in\children(v)} \var(\hat{\betab}(c' | c'-)) \right)^{-1}$\;
					$ \tilde{\betab}(c) \gets \hat{\betab}(c | c-) + A(c) \left(\tilde{\betab}(v) - \sum_{c'\in\children(v)} \hat{\betab}(c' | c'-)\right) $\;
					$\var(\tilde{\betab}(c)) \gets \var( \hat{\betab}(c | c-))   - A(c) \var( \hat{\betab}(c | c-))  + A(c)\var( \tilde{\betab}(v)) A(c)^\top$}}}
		\Output{$\{\tilde{\betab}(v), \var(\tilde{\betab}(v)), A(v),\var(\hat{\betab}(v  |  v-))\}_{v\in\tree}$}
		\caption{TwoPassGLS: Returns matrices and vectors that can be used in Algorithm \ref{alg:est_ci} to generate point estimates and CIs of user-defined queries for an arbitrary set of leaves} \label{alg:two_pass_est}
	\end{algorithm}

	A description of how this algorithm fits into the primary use case for this paper may also be helpful. Specifically, the final state variables $\{\tilde{\betab}(v)\}_{v\in\tree}$ and $\{\var(\tilde{\betab}(v))\}_{v\in\tree}$ are directly related to the standard GLS estimate, $\tilde{\betab},$ and its variance matrix, as defined in \eqref{eq:est_betab_naive}-\eqref{eq:var_betab_naive}, because $\tilde{\betab} = \stack(\{\tilde{\betab}(v)\}_{v\in\level(\tree,L)})$ and the diagonal blocks of $\var(\tilde{\betab})$ are given by $\{\var(\tilde{\betab}(v))\}_{v\in\level(\tree,L)}.$ No more information is required to estimate CIs that only depend on the estimate of a single vertex $v\in\tree,$ but, in more general cases, we also require $\cov(\tilde{\betab}(c),\tilde{\betab}(d))$ for arbitrary $c,d\in\tree.$ The next section describes how to compute these covariance matrices, and Section \ref{sec:summary} describes our proposed method to construct CIs using these estimates and covariance matrices. 
	
	\section{Covariance Between Estimates of Arbitrary Vertices} \label{sec:cov}

    This section provides an algorithm to compute $\cov(\tilde{\betab}(c),\tilde{\betab}(d))$ for arbitrary $c,d\in\tree.$ To do this, some additional notation will be helpful. First, let $ c \wedge d$ denote the unique vertex that is the closest common ancestor of vertices $c,d$ in $\tree.$ Also, let $\omega(c, d)$ denote the unique shortest path from $c$ to $d$ in $\tree.$ We also denote the $t^\textrm{th}$ element (respectively, the $t^\mathrm{th}$ element from the end) of this path by $\omega(c,d)[t-1]$ $(\omega(c,d)[-t]).$ In other words, for $c,d\in\tree,$ the shortest path from $c$ to $d$ is given by $\omega(c, d)[0]$ $(=c), \omega(c, d)[1],\dots, \omega(c,d)[-2], \omega(c, d)[-1]$ $(=d).$ Detailed derivations of $\cov(\tilde{\betab}(c),\tilde{\betab}(d))$ are provided in Appendix \ref{sec:appendix:cov_mat_derivations}, and the resulting covariances can be found in Algorithm \ref{alg:cov_est}.

	\begin{algorithm}[tbh]
		\DontPrintSemicolon
		\SetKwInOut{Output}{return}
		\SetKwInOut{Input}{input}
		\Input{$\tree:$ The rooted tree}
		\Input{$\{\tilde{\betab}(v), \var(\tilde{\betab}(v)), A(v),\var(\hat{\betab}(v  |  v-))\}_{v\in\tree}:$ The output of Algorithm \ref{alg:two_pass_est}}
		\Input{$c,d\in\tree:$ This algorithm returns $\cov(\tilde{\betab}(c), \tilde{\betab}(d))$}
		\If{$c=d$}{
			\Output{$\var(\tilde{\betab}(c))$}
		}
		\If{$d = c \wedge d $}{
			\tcp{See \eqref{eq:cov_ancestors} for this case:}
			\Output{$\left(\prod_{k\in\omega(c,d) / d} A(k) \right)\var(\tilde{\betab}(v))$} 
		}
		\If{$c=c\wedge d$}{
			\tcp{Switch the inputs $c,d$ and transpose the output of this function:}
			\Output{ComputeCovariance$ \left(\tree, \{\tilde{\betab}(v), \var(\tilde{\betab}(v)), A(v),\var(\hat{\betab}(v  |  v-))\}_{v\in\tree}, d,c \right)^\top$}
		}
		\tcp{The remaining case follows from \eqref{eq:cov_not_ancestors}:}
		$c'\gets \omega(c,c\wedge d)[-2]$ \;
		$d'\gets \omega(c \wedge d, d)[1]$ \; 
		$\cov(\tilde{\betab}(c'), \tilde{\betab}(d'))  \gets A(c') \var(\tilde{\betab}(c' \wedge d') )  A(d')^\top - A(c') \var(\hat{\betab}(d'| d'-))$ \;
		\Output{$ \left(\prod_{k \in \omega(c, c') / \{c'\}} A(k)\right) \cov(\tilde{\betab}(c'), \tilde{\betab}(d')) \left(\prod_{k \in \omega(d', d) / \{d'\}} A(k)^\top \right)$}
		\caption{ComputeCovariance: Computes the matrix $\cov(\tilde{\betab}(c), \tilde{\betab}(d))$ for arbitrary $c,d\in\tree$ } \label{alg:cov_est}
	\end{algorithm}
	
	The derivations in this section can be used to derive the off-diagonal blocks of $\var(\tilde{\betab}),$ as defined in \eqref{eq:var_betab_naive}, so, combined with the derivation for the diagonal blocks provided in the previous section, these results allow for the computation of arbitrary elements of the variance matrix of the standard GLS estimate. The next section describes how Algorithms \ref{alg:two_pass_est} and \ref{alg:cov_est} can be used to compute CIs for linear combinations of the unknown population parameter, $\betab.$ 

	\section{Confidence Interval Estimation} \label{sec:summary}

	The previous two sections describe each of the state variable updates required for the two-pass estimation approach and the computation of $\cov(\tilde{\betab}(c), \tilde{\betab}(d))$ for arbitrary $c,d\in\tree,$ as summarized in Algorithms \ref{alg:two_pass_est} and \ref{alg:cov_est}, respectively. Algorithm \ref{alg:est_ci} uses the output of these initial algorithms to produce a CI of the population parameter
	\begin{align}
		\beta_{H,\qb} = \sum_{v\in H} \qb^\top \betab(v),
	\end{align}
	where $\qb\in\rr^n$ and $H\subset \level(\tree, L)$ are user-defined inputs. Specifically, these CIs are centered at 
	\begin{align}
		\tilde{\beta}_{H,\qb} = \sum_{v\in H} \qb^\top \tilde{\betab}(v),
	\end{align}
	and the CI width is defined to ensure statistical validity. 

	After describing our implementation of Algorithm \ref{alg:est_ci} in the next subsection, we will provide a numerical experiment to assess the accuracy of these CIs in practice for our motivating use case. While Theorem \ref{thm:appendix:beta_tilde_is_normal} is not applicable in this use case because $\ub$ is not normally distributed, there are still reasons to be optimistic about the accuracy of the CIs output from Algorithm \ref{alg:est_ci}. First, as described in Section \ref{sec:setting}, the distribution of $\ub$ is closely related to the Gaussian distribution, \textit{i.e.}, it is the discrete Gaussian distribution \citep{canonne2020discrete}. Second, as described briefly in Appendix \ref{sec:appendix:beta_tilde_is_normal}, even when $\ub$ is not normally distributed, the resulting CIs are still statistically valid asymptotically as the number of observations, \textit{i.e.}, the dimension of $\yb$ in \eqref{eq:yb_def_intro}, diverges.

	\begin{algorithm}[tbh]
		\DontPrintSemicolon
		\SetKwInOut{Output}{return}
		\SetKwInOut{Input}{input}
		\Input{$\tree:$ The rooted tree }
		\Input{$\{\tilde{\betab}(v), \var(\tilde{\betab}(v)), A(v),\var(\hat{\betab}(v  |  v-))\}_{v\in\tree}:$ the output of Algorithm \ref{alg:two_pass_est}}
		\Input{$\qb\in\rr^n:$ The linear counting query of interest}
		\Input{$H\subset \level(\tree,L):$ The subset of the leaves of $\tree$ for which to estimate the linear query  }
		\Input{$\alpha:$ This algorithm will return the $1-\alpha$ CI of $\beta_{H,\qb}$}
        $J\gets H$
		\tcp{To reduce the runtime when $J$ contains many vertices, it can optionally be redefined:}
        \For{$l\in \{L-1, \dots, 0\}$}{
            \For{$v \in \level(\tree, l)$}{
                \If{$  \forall c \in \children(v), \; c\in J$}{
                    $J\gets \left( J / \children(v)\right) \cup \{v\}$ \;
                }
            }
        }
		$\var(\tilde{\betab}_{J,\qb})\gets 0 $\;
		$\tilde{\betab}_{J,\qb} \gets 0$\;
		\For{$c\in J$}{
			$\tilde{\betab}_{J,\qb} \gets \tilde{\betab}_{J,\qb} + \qb^\top \tilde{\betab}(c)$\;
			\For{$d\in J$}{
				$\cov(\tilde{\betab}(c),\tilde{\betab}(d))\gets$ ComputeCovariance$ \left(\tree, \{\tilde{\betab}(v), \var(\tilde{\betab}(v)), A(v),\var(\hat{\betab}(v  |  v-))\}_{v\in\tree}, c,d \right)$ \;
				$\var(\tilde{\betab}_{J,\qb})  \gets \var(\tilde{\betab}_{J,\qb}) + \qb^\top\cov(\tilde{\betab}(c),\tilde{\betab}(d)) \qb$\;
			}
		}
		\tcp{$\Phi(\cdot)$ denotes the cumulative distribution function of the standard normal distribution:}
		$k \gets  \sqrt{\var(\tilde{\betab}_{J,\qb})} \Phi^{-1}(1-\alpha/2) $ \;
		\Output{$\tilde{\betab}_{J,\qb}, (\tilde{\betab}_{J,\qb} - k, \tilde{\betab}_{J,\qb} + k)$}
		\caption{EstimateConfidenceInterval: Returns the estimate and CI of a user-defined query evaluated in an arbitrary subset of the leaf vertices of $\tree$} \label{alg:est_ci}
	\end{algorithm}

	\subsection{Computational Considerations} \label{sec:computational_considerations}
	
	For many use cases, the limiting factor for the feasibility of our proposed algorithms will be the need to compute inverses of $n\times n$ matrices. For example, one possible use case would be to apply these methods to the noisy measurements of the Redistricting Data File, and, since $n$ is equal to $2,016$ in this case, the algorithm can be applied directly because it is straightforward to invert $2,016 \times 2,016$ matrices. Since there are over 6 million vertices in the tree for our primary use case and it is ideal to store the output of Algorithm \ref{alg:two_pass_est} to avoid rerunning it in the future, our implementation of the methods described here also uses a few techniques to reduce the size of the matrices that are output from Algorithm \ref{alg:two_pass_est}, including using an output format that only stores the upper triangle of symmetric matrices and applying a compression algorithm before saving these outputs. These techniques also reduce the amount of data that must be transferred between nodes in our clusters, which improves runtime and stability. 

    Unlike the noisy measurements for Redistricting Data File DAS executions, the schemas used to generate the noisy measurements for the persons and units universes of the DHC DAS executions result in the dimension of $\betab(v)$ for each $v\in\tree$ being more than 1.2 million. Since it would not be possible to invert matrices with more than 1.2 million rows and columns, a preprocessing step must be carried out before executing Algorithm \ref{alg:two_pass_est} that marginalizes the noisy measurements to a lower dimension. As an example, the DHCP Table PCT12 provides Sex $\times $ Age tabulations for each geographic entity, using an Age recode that includes 103 Age categories. To estimate CIs for this table, one could marginalize the noisy measurements to this table schema to reduce the dimension of $\betab(v)$ to only 206. 

	Another algorithmic modification that is worth mentioning is one that reduces the computational cost of Algorithm \ref{alg:est_ci} when the geographic entity is far from the tree $\tree,$ \textit{i.e.}, when the cardinality of $J$ in Algorithm \ref{alg:est_ci} is high. Note that this algorithm's computational complexity scales quadratically in $\card(J)$ because $O(\card(J)^2)$ covariance matrices are computed in the nested for loop of this algorithm. Rather than performing the operations in this nested for loop, we can instead start at level $L$ of the tree and perform a fine-to-coarse recursion, with each step computing the set of estimates for entities, each with a geographic extent defined as the intersection of that of $\cup_{v\in H}v$ and that of $v\in \level(\tree, l).$ Specifically, for each level $l=\{L-1, \dots,0\},$ we can consider each parent vertex $ v\in\level(\tree, l)$ with children in $H,$ and redefine the attributes associated with this vertex that were defined in Algorithm \ref{alg:two_pass_est} as
	\begin{align*}
		\tilde{\betab}(v) \gets  \sum_{c\in\children(v)\cap H} \tilde{\betab}(c), \;& \;  \var(\tilde{\betab}(v)) \gets \var\left(\sum_{c\in\children(v)\cap H} \tilde{\betab}(c) \right), \\
		A(v) \gets \left(\sum_{c\in\children(v)\cap H} A(c)\right) A(v), \;& \textrm{and }  \var(\hat{\betab}(v)| v-) \gets  \var(\hat{\betab}(v)| v-) \left(\sum_{c\in\children(v)\cap H} A(c)\right)^\top.	
	\end{align*}
	Afterward, the elements of $\children(v)$ in $H$ are replaced with $v.$ It is straightforward to use the derivations in Appendix \ref{sec:appendix:cov_mat_derivations} to verify that this approach results in the same estimate and covariance as the one used in Algorithm \ref{alg:est_ci}, but it has a computational complexity that only scales quadratically in $\max_{v\in\tree} \card(\children(v)).$ Regardless, our implementation used the simpler approach described in Algorithm \ref{alg:est_ci} because our primary interest is in CIs for geographic entities on the tabulation spine, which results in $\card(J)$ being fairly low in all geographic entities of interest.

	\section{Numerical Experiment} \label{sec:numerical_exp}

	In this section we provide a numerical experiment using a Python/PySpark implementation of the algorithms we proposed above. Rather than using all noisy measurements generated by a full U.S. DAS execution, in this numerical experiment, we used the publicly available noisy measurements from the persons universe Redistricting Data File Puerto Rico (PR) DAS execution of the 2010 Demonstration Data Product; instructions for downloading these noisy measurements are available at \cite{DHCddps}. This DAS execution used settings that were identical to that of the 2020 production DAS execution that was used to publish the 2020 PR persons tables in the Redistricting Data File, other than the fact that the 2010 Census data was used as input instead of the 2020 Census data.

	This numerical experiment focuses on estimating CIs for 2010 census block groups. As described previously, the spine used internally in DAS executions (\textit{i.e.}, the rooted tree, $\tree)$ is distinct from the tabulation spine, and census block groups are the set of tabulation spine geographic units that are furthest from the internal DAS spine. As a result, census block groups provide an interesting set of off-spine geographic entities for the purpose of testing our proposed estimation approaches.

	We will consider CIs at the 0.9 and 0.95 confidence levels in particular. In the context of this numerical experiment, we will use \textit{empirical coverage}, or \textit{coverage}, of a set of CIs to refer to the proportion of the CIs in this set that contain the true population count. We will estimate the coverage of CIs for several sets of queries in this section. 

	Since we know each true count is nonnegative, we can optionally increase the coverage of each CI and reduce its width, by rounding the endpoints of the CI up to zero in the event either endpoint is negative, which we will refer to as a \textit{nonnegative CI}. To see that this improves the empirical coverage, consider an initial CI with both endpoints that are strictly negative. In these cases, since the confidential CEF-based count is always non-negative, this initial CI cannot contain the CEF-based count, but after rounding both endpoints up to zero, it will contain the CEF-based count when this count is equal to zero. Tables \ref{tab:coverage} and \ref{tab:width} provide the empirical coverage and the average width, respectively, of the 0.9 and 0.95 CIs, both before and after rounding up to zero, for three sets of queries. These tables use ``All Redistricting Tabulations" to refer to all persons Redistricting Data File tabulations published in 2020, \textit{i.e.}, tables P1, P2, P3, P4, and P5, as described by \cite{redDataFile}, ``Total Population" to refer to the total population, and ``CENRACE" to refer to the collection of queries that provide the population counts in each of the 63 race combination categories of this attribute, as described in Section \ref{sec:setting}. For each of these three sets of queries, Tables \ref{tab:coverage} and \ref{tab:width} consider the CIs for each query in the set, evaluated in each 2010 vintage PR tabulation census block group. 

	\begin{table}[tbh]
		\begin{tabular}{lllll} 
			& \multicolumn{4}{c}{CI Type}                          \\ \cline{2-5} 
			Query & \multicolumn{1}{c}{0.90 CI} & 0.95 CI & 0.90 CI, Nonnegative & 0.95 CI, Nonnegative \\ \hline
			All Redistricting Tabulations    & 0.8994 &	0.9496 &	0.9473 & 0.9737  \\
			Total Population                 & 0.8922 & 0.9363 & 0.8922 & 0.9363  \\
			CENRACE                          & 0.9002 & 0.9498 & 0.9479 & 0.9739
		\end{tabular}
        \caption{CI Empirical Coverage: For each of the four CIs considered, and each of the three sets of queries considered, this table provides the CI coverage, which is the proportion of CIs that contain the true CEF-based count out of all queries in the given query set and all block groups in PR.} \label{tab:coverage}
	\end{table}

	Table \ref{tab:coverage} shows that coverage of the CIs before rounding up to zero, has a coverage that is fairly close to the confidence level of the CI considered, for each of the three sets of queries. The final two columns demonstrate that rounding these endpoints up to zero improves the coverage of these CIs, particularly for sets of tabulations with confidential CEF-based counts that are highly sparse at the block-group geographic level, \textit{i.e.}, both the CENRACE tabulations and the union of all Redistricting Data File tabulations. Table \ref{tab:width} shows that rounding up to zero can also significantly narrow the widths of the CIs for these more granular queries. These CIs are also shown to be fairly wide on average. In part, this is because census block groups were not on the optimized spine used internally in Redistricting Data File DAS executions.\footnote{Several improvements to our proposed methods are described in Section \ref{sec:upcoming_ci_release} that significantly reduce the CI widths, including using the publicly-available linear equality constraints used within DAS and also leveraging the noisy measurements from the 2020 production DHC DAS execution. The CIs in the data product accompanying this paper are much narrower than the ones presented here in part because they were computed in a TwoPassGLS execution that leveraged these improvements.}

    \begin{table}[tbh]
		\begin{tabular}{lllll}
			& \multicolumn{4}{c}{CI Type}                          \\ \cline{2-5} 
			Query & \multicolumn{1}{c}{0.90 CI} & 0.95 CI & 0.90 CI, Nonnegative & 0.95 CI, Nonnegative \\ \hline
			All Redistricting Tabulations    & 100.8 & 120.1 & 54.84 & 65.74  \\
			Total Population                 & 314.4 & 374.6 & 313.1 & 373.2  \\
			CENRACE                          & 97.81 & 116.6 & 50.66 & 60.78
		\end{tabular}
        \caption{CI Averaged Widths: For each of the four CIs considered, and each of the three sets of queries considered, this table provides the CI width, averaged over all queries in the given query set and all block groups in PR.} \label{tab:width}
	\end{table}

	Note that the analyses above assess statistical validity of these CIs at only two confidence levels. To provide a way to assess statistical validity at all confidence levels, we also constructed the $Z-$scores of a set of estimates, which for each estimate $\tilde{\beta}_{H,\qb}$ is defined as,
	
	\begin{align}
		z_{H,\qb} = \left(\tilde{\beta}_{H,\qb} - \beta_{H,\qb} \right) / \sqrt{\var(\tilde{\beta}_{H,\qb})}.
	\end{align}
	
	\noindent
	If $\ub(v)$ were normally distributed for each $v\in\tree,$ then we would have $z_{H,\qb}\sim N(0,1),$ so we can asses the error due to our normality modeling assumption by computing the $Z-$scores for a set of queries to see how closely these scores appear to follow a standard normal distribution. To do this, we computed the $Z-$scores for the total population query for each of the 2,543 census block groups in PR. Figure \ref{fig:q_q_plot} provides a Q-Q plot based on these $Z-$scores. For each blue point, the horizontal axis provides the empirical quantiles of the 2,543 sample $Z-$scores and the vertical axis provides the corresponding theoretical quantile value based on a standard normal distribution. The red line provides the identity function. Since the blue points are very close to the identity function, this set of $Z-$scores appear to follow a Gaussian distribution quite closely. This plot also implies that, for any $\alpha\in (0,1),$ the $1-\alpha$ CIs for this set of queries would have empirical coverage close to $1-\alpha.$ 
	
	\begin{figure}[tbh]
		\centering
		\includegraphics[width=90mm]{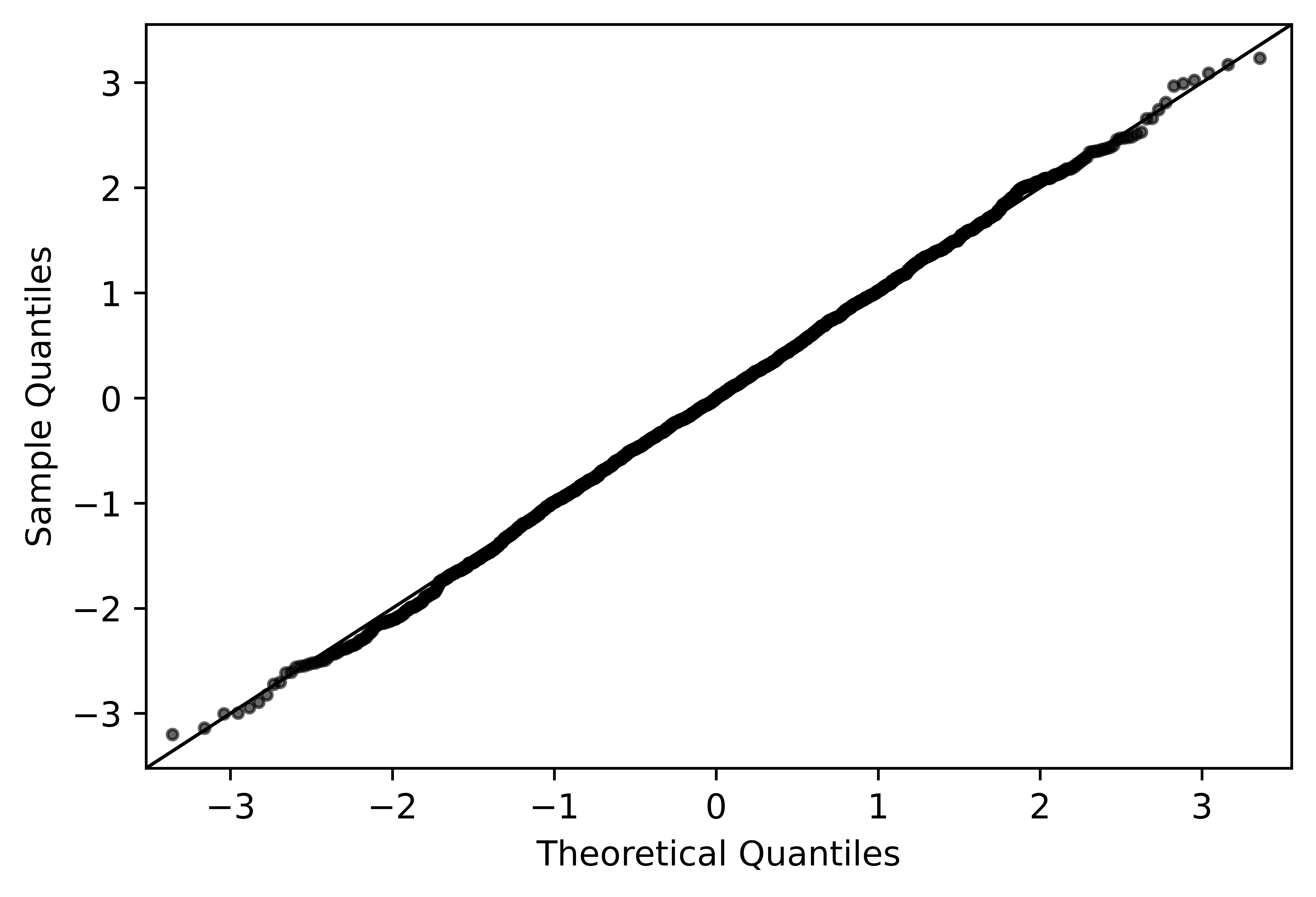}
		\caption{Q-Q Plot: The points in this plot can be viewed as points on a parametric curve, parameterized by the quantile $q\in(0,1).$ The horizontal axis provides the $q$ quantile of the standard normal distribution and the vertical axis provides the sample quantile of the $Z-$scores of the total population estimates for tabulation block groups. The line provides the identity function, $f(x) = x.$}
        \label{fig:q_q_plot}
	\end{figure}
	
	\section{Comparison with \texorpdfstring{\cite{hay2010boosting}}{Hay et al. (2010)}}
	\label{sec:connection_w_hays_et_al}
	
	As described above, the approach proposed in this paper can be viewed as a generalization of the approach proposed by \cite{hay2010boosting}, which we will describe in more detail in this section. To do so, we will assume we are in the setting described by \cite{hay2010boosting}. In other words, we will suppose that, for every $v\in\tree,$  $S(v)$ is equal to the $1\times 1 $ matrix, $[1],$ each vertex $v\in\tree$ has $k\in\nn$ children, and that the privacy loss budget allocated to each query, and therefore also the variance of each observation, is the same. Although it is not assumed by \cite{hay2010boosting}, we will also suppose the variance of each query is equal to one, since this variance does not appear in their algorithm, and this convention provides a slight simplification to the notation below.
	
	In this setting \cite{hay2010boosting} show that the GLS estimator can be found using the two-step approach reproduced in the following lemma using our notation.
	
	\begin{lemma} \label{lem:hay_boosting}
		(Theorem 3 from \cite{hay2010boosting}) If, for every $v\in\tree,$ $S(v)$ is equal to the matrix $[1],$ each vertex $v\in\tree$ has $k\in\nn$ children, and the variance of each observation is one, then the GLS solution can be found using the recurrence relation
		\begin{align} \label{eq:hay_boosting}
			\check{\betab}(v) =& \begin{cases} 
				z(v) & v\in\level(\tree, 0) \\
				z(v) + \frac{1}{k} \left( \check{\betab}(v) - \sum_{c\in\children(v)} z(c) \right) & v\not\in\level(\tree,0) \textrm{ and } v \in \children(v),
			\end{cases}
		\end{align}
		where
		\begin{align}
			z(v) = \begin{cases} 
				y(v) & v\in\level(\tree, L) \\
				\left(\frac{k^{L - l +1}- k^{L-l}}{k^{L-l+1}-1}\right) y(v) +  \left(\frac{k^{L-l} - 1}{k^{L-l+1}-1} \right)\sum_{c\in\children(v)} z(c) & v\in\level(\tree,l) \textrm{ and } l \neq L
			\end{cases}
		\end{align}
	\end{lemma}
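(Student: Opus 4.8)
\emph{Proof proposal.} The statement is quoted from \cite{hay2010boosting} (their Theorem~3), so one legitimate option is simply to cite that reference. For completeness, and because the same computation is reused in the proof of Theorem~\ref{thm:hay_et_al_equality}, I would instead give a derivation inside the framework of this paper: I would show that, in the present setting, the recurrences defining $z(\cdot)$ and $\check{\betab}(\cdot)$ coincide respectively with the fine-to-coarse quantities $\hat{\betab}(\cdot|\cdot-)$ and the final estimates $\tilde{\betab}(\cdot)$ produced by the two-pass algorithm of Section~\ref{sec:two_pass_est}; that these solve the GLS problem then follows from Theorem~\ref{thm:appendix:beta_tilde_is_blue} together with the discussion following \eqref{eq:down2} identifying $\stack(\{\tilde{\betab}(u)\}_{u\in\level(\tree,L)})$ with the GLS estimator \eqref{eq:est_betab_naive}.

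\emph{Step 1: the fine-to-coarse pass reproduces $z$.} Here $\hat{\betab}(g|g)=y(g)$ and $\var(\hat{\betab}(g|g))=1$ for every $g\in\tree$. First I would prove, by induction moving up the levels, that $\var(\hat{\betab}(g|g-))$ depends on $g$ only through its level; writing $\sigma_l^2$ for its value at level $l$, the variance update \eqref{eq:up_g_g_minus1} together with the fact that all $k$ children of a level-$l$ vertex have the common variance $\sigma_{l+1}^2$ yields the scalar recursion $\sigma_l^{-2}=1+(k\sigma_{l+1}^2)^{-1}$ with $\sigma_L^2=1$. A one-line check shows this is solved by $\sigma_l^2=\tfrac{k^{L-l+1}-k^{L-l}}{k^{L-l+1}-1}$, so that $1-\sigma_l^2=\tfrac{k^{L-l}-1}{k^{L-l+1}-1}$. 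Substituting these into \eqref{eq:up_g_g_minus2}, whose weight on $\hat{\betab}(g|g)=y(g)$ is $\sigma_l^2$ and whose weight on $\sum_{c\in\children(g)}\hat{\betab}(c|c-)$ is $\sigma_l^2/(k\sigma_{l+1}^2)=1-\sigma_l^2$, reproduces exactly the recurrence defining $z(g)$; the leaf case is immediate from the initialization. By induction, $z(g)=\hat{\betab}(g|g-)$ for every $g\in\tree$.

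\emph{Step 2: the coarse-to-fine pass reproduces $\check{\betab}$.} Since all siblings at a given level share the variance $\sigma_l^2$, the coefficient $A(c)$ of \eqref{eq:A_of_c_def} collapses to the scalar $1/k$ for every non-root $c$. With the initialization \eqref{eq:down_init1}--\eqref{eq:down_init2}, the root estimate equals $\hat{\betab}(g|g-)=z(g)=\check{\betab}(g)$, and for $c\in\children(g)$ the update \eqref{eq:down1} becomes $\tilde{\betab}(c)=\hat{\betab}(c|c-)+\tfrac{1}{k}\bigl(\tilde{\betab}(g)-\sum_{c'\in\children(g)}\hat{\betab}(c'|c'-)\bigr)$. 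Using $z=\hat{\betab}(\cdot|\cdot-)$ from Step~1 and inducting down the levels, this is precisely \eqref{eq:hay_boosting}, so $\check{\betab}(c)=\tilde{\betab}(c)$ for all $c\in\tree$. (One should read the correction term of \eqref{eq:hay_boosting} as summing over the children of the parent $g$, i.e.\ the siblings of $v$; the index set displayed as $\children(v)$ should be $\children(g)$.) Theorem~\ref{thm:appendix:beta_tilde_is_blue} then identifies $\{\tilde{\betab}(g)\}_{g\in\tree}$ with the GLS estimator, which completes the argument.

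\emph{Main obstacle.} The only step requiring genuine computation is the geometric-series bookkeeping in Step~1: guessing the closed form of $\sigma_l^2$ and verifying that the two weights arising in \eqref{eq:up_g_g_minus2} match the coefficients $\tfrac{k^{L-l+1}-k^{L-l}}{k^{L-l+1}-1}$ and $\tfrac{k^{L-l}-1}{k^{L-l+1}-1}$ in the definition of $z$. Once those are in hand, Step~2 is essentially automatic, since $A(c)=1/k$ makes \eqref{eq:down1} literally the stated recurrence, and the remaining work is just two straightforward inductions over the levels of $\tree$.
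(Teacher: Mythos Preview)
Your proposal is correct. The paper does not supply its own proof of this lemma: it is stated as a quotation of Theorem~3 of \cite{hay2010boosting} and left unproved, so your first option of simply citing that reference is exactly what the paper does. Your alternative derivation---identifying $z(\cdot)$ with $\hat{\betab}(\cdot\mid\cdot-)$ via the closed form for $\sigma_l^2$, then identifying $\check{\betab}(\cdot)$ with $\tilde{\betab}(\cdot)$ via $A(c)=1/k$, and finally invoking Theorem~\ref{thm:appendix:beta_tilde_is_blue}---is precisely the computation the paper performs in the proof of Theorem~\ref{thm:hay_et_al_equality} (with the same closed form for the level-$l$ variance and the same ratio-of-weights check), together with an extra appeal to the BLUE theorem to recover the GLS conclusion. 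So your route is not genuinely different from the paper's; it simply packages the Theorem~\ref{thm:hay_et_al_equality} argument together with Theorem~\ref{thm:appendix:beta_tilde_is_blue} to give a self-contained proof rather than an external citation. Your observation about the index set in \eqref{eq:hay_boosting}---that the sum should range over $\children(g)$ (the siblings of $v$) rather than $\children(v)$---is also correct and is needed for the recurrence to match \eqref{eq:down1}.
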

	
	Below we show that our proposed estimator is a generalization of the one proposed by \cite{hay2010boosting}.
	
	\begin{theorem} \label{thm:hay_et_al_equality}
		Suppose Assumptions \ref{assump:spine_struct}-\ref{assump:u_iid} hold and that, for every $v\in\tree,$ $S(v)$ is equal to the matrix $\begin{bmatrix} 1 \end{bmatrix}$ and $\var(\ub(v))$ is equal to the same scalar value. Also, suppose that each parent vertex $v\in\tree$ has exactly $k\in\nn$ children. Then the count estimator described by \cite{hay2010boosting} is identical to the count estimator $\{\tilde{\betab}(v)\}_{v\in\tree}$ proposed in Section \ref{sec:two_pass_est}.
	\end{theorem}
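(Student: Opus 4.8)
\emph{Proof proposal.} The plan is to specialize the Two-Pass Algorithm of Section~\ref{sec:two_pass_est} to the stated setting and then match it, recursion by recursion, with the two-step procedure of Lemma~\ref{lem:hay_boosting}. A preliminary reduction: the output $\{\tilde{\betab}(g)\}_{g\in\tree}$ of Algorithm~\ref{alg:two_pass_est} is unchanged if every $\var(\ub(g))$ is multiplied by a common positive scalar, since that scalar cancels out of every inverse-variance-weighted mean in \eqref{eq:up_g_g_minus1}--\eqref{eq:up_g_g_minus2}, out of $A(c)$ in \eqref{eq:A_of_c_def}, and out of \eqref{eq:down1}; hence we may assume the common variance equals $1$, as in Lemma~\ref{lem:hay_boosting}. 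With $S(g)=[1]$ and $\var(\ub(g))=1$, equations \eqref{eq:up_init1}--\eqref{eq:up_init4} reduce to $\hat{\betab}(g\mid g)=y(g)$, $\var(\hat{\betab}(g\mid g))=1$ for all $g$, and $\hat{\betab}(g\mid g-)=y(g)$ at leaves. Also, Assumption~\ref{assump:spine_struct} together with the constant-fanout hypothesis forces $\tree$ to be the complete $k$-ary tree of depth $L$.

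The core step is the fine-to-coarse pass. By the symmetry of the tree and the update rules I would show, by induction on the level from $l=L$ downward, that $\var(\hat{\betab}(g\mid g-))$ depends only on the level $l$ of $g$; call this common value $v_l$. Then \eqref{eq:up_g_g_minus1} reads $v_l=\bigl(1+(k v_{l+1})^{-1}\bigr)^{-1}$ with $v_L=1$, which I would solve in closed form to get
\[
v_l \;=\; \frac{k^{L-l+1}-k^{L-l}}{k^{L-l+1}-1},
\qquad\text{so that}\qquad
\frac{v_l}{k v_{l+1}} \;=\; \frac{1}{k v_{l+1}+1} \;=\; \frac{k^{L-l}-1}{k^{L-l+1}-1}.
\]
Substituting $\hat{\betab}(g\mid g)=y(g)$, $\hat{\betab}(g\mid\children(g)-)=\sum_{c\in\children(g)}\hat{\betab}(c\mid c-)$, and these two identities into \eqref{eq:up_g_g_minus2} shows that $\hat{\betab}(g\mid g-)$ obeys precisely the recurrence defining $z(g)$ in Lemma~\ref{lem:hay_boosting}; since the two agree at the leaves, a second induction up the tree yields $\hat{\betab}(g\mid g-)=z(g)$ for every $g\in\tree$.

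For the coarse-to-fine pass, the closed form for $v_l$ collapses \eqref{eq:A_of_c_def} to the scalar $A(c)=v_{l+1}/(k v_{l+1})=1/k$ at every non-root vertex $c$ (at level $l+1$). At the root, \eqref{eq:down_init1} and the previous step give $\tilde{\betab}(g)=\hat{\betab}(g\mid g-)=z(g)=\check{\betab}(g)$. For a non-root $c$ with parent $g$, equation \eqref{eq:down1} becomes
\[
\tilde{\betab}(c)=\hat{\betab}(c\mid c-)+\tfrac1k\Bigl(\tilde{\betab}(g)-\textstyle\sum_{c'\in\children(g)}\hat{\betab}(c'\mid c'-)\Bigr),
\]
and substituting $\hat{\betab}(\cdot\mid\cdot-)=z(\cdot)$ together with the inductive hypothesis $\tilde{\betab}(g)=\check{\betab}(g)$ turns this into the top-down recurrence for $\check{\betab}(c)$ in \eqref{eq:hay_boosting}. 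Inducting down the tree gives $\tilde{\betab}(g)=\check{\betab}(g)$ for all $g\in\tree$, which is the asserted identity of estimators.

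I expect the only real obstacle to be the elementary but error-prone algebra in the core step: solving the scalar recursion for $v_l$ and verifying that the two coefficients it produces coincide with the rational functions of $k$ in the definition of $z(\cdot)$ (in particular, that these coefficients sum to $1$, which is exactly unbiasedness of $z$). Everything else is routine bookkeeping with the induction. As a remark, combining this theorem with Theorem~\ref{thm:appendix:beta_tilde_is_blue} recovers the GLS characterization asserted in Lemma~\ref{lem:hay_boosting}.
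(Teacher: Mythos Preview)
Your proposal is correct and follows essentially the same route as the paper's own proof: reduce to unit variance, solve the scalar recursion for $v_l$ in closed form (your expression $\frac{k^{L-l+1}-k^{L-l}}{k^{L-l+1}-1}$ equals the paper's $\frac{(k-1)k^L}{k^{L+1}-k^l}$), match the coefficients in \eqref{eq:up_g_g_minus2} against those in the definition of $z(g)$, and then observe that $A(c)=1/k$ collapses \eqref{eq:down1} to the Hay et al.\ top-down recurrence. The only cosmetic difference is that the paper compares the \emph{ratio} of the two weights (after noting both are affine combinations with weights summing to $1$) whereas you verify each weight directly, and you are somewhat more explicit than the paper in the coarse-to-fine step.
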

	\begin{proof}
		First we will show that $z(v)=\hat{\betab}(v|v-)$ under the assumptions of the theorem. To do so, for $v\in\level(\tree, l),$ we will begin by deriving a convenient functional form for $w(l)=\var(\hat{\betab}(v|v-))$ under the conditions of the theorem. Using the assumptions of this theorem, for any $v\in\level(\tree,L),$ we have $w(L)=\var(\yb(v))=1.$ Using notation introduced in Section \ref{sec:two_pass_est}, we have $\var(\hat{\betab}(v|\children(v)-))=k w(l)$ for any $v\in\level(\tree, l-1),$ so the variance of the inverse-variance-weighted mean implies 
		\begin{align*}
			w(l-1)&=\frac{1}{\left( 1+1/(k w(l))\right)} = 1-\frac{1}{1+k w(l)}
		\end{align*}
		\noindent
		After solving this recursive equation with the boundary condition that $w(L)=1,$ we have,
		\begin{align*}
			w(l)=\frac{(k-1) k^{L}}{k^{L+1}-k^l} 
		\end{align*}
		\noindent
		For any $v\in\level(\tree,l),$ \eqref{eq:up_g_g_minus1} implies
		\begin{align*} 
			\hat{\betab}(v | v-) = w(l) \left( \yb(v) + \frac{1}{k w(l+1)} \sum_{c\in\children(v)} \hat{\betab}(c | c-) \right). 
		\end{align*}
		
		Note that, for any $v\in\level(\tree,L),$ we have $z(v)=\hat{\betab}(v|v-).$ Also, in the case of vertices in levels above level $L,$ the definition of $z(v)$ and $ \hat{\betab}(v|v-)$ follows a similar form, in the sense that, for any $v\in\level(\tree,l)$ with $l<L,$ both estimators are weighted means of $\yb(v)$ and the sum over the estimates of the children of $v.$ In the case of both $z(v)$ and $ \hat{\betab}(v|v-),$ the weights of these two terms sum to one, so we will establish that $z(v)=\hat{\betab}(v|v-)$ for all $v\in\level(\tree,l)$ with $l<L$ by showing that the ratio of these weights are the same for both of these estimators. In the case of $\hat{\betab}(v | v-),$ this ratio is given by $k w(l+1),$ and in the case of $z(v)$ this ratio is given by 
		\begin{align*}
			\frac{k^{L - l +1}- k^{L-l}}{k^{L-l} - 1}.
		\end{align*}
		\noindent
		Thus, to establish that $z(v)=\hat{\betab}(v|v-)$ for all $v\in\level(\tree,l)$ with $l<L,$ we need to show
		\begin{align*}
			\frac{k^{L - l +1}- k^{L-l}}{k^{L-l} - 1} = k w(l+1) & \iff \frac{k^{L - l +1}- k^{L-l}}{k^{L-l} - 1} = \frac{(k-1) k^{L+1}}{k^{L+1}-k^{l+1}} \\
			& \iff \frac{k^L (k-1)}{k^L-k^l}=\frac{k^L (k-1)}{k^L-k^l},
		\end{align*}
		\noindent
		which implies $z(v)=\hat{\betab}(v|v-)$ for all $v\in\tree.$ 
		
		The final result follows from the fact that, after replacing $z(v)$ with $ \hat{\betab}(v|v-)$ in the recurrence relation in \eqref{eq:hay_boosting}, this recursive formula is equal to the coarse-to-fine recursion described in Section \ref{sec:two_pass_est}.
	\end{proof}

	\section{Production 2020 CIs Data Product} \label{sec:upcoming_ci_release}
	
	To allow the public to benefit from the CI method proposed in this paper without implementing and running the algorithms, the DAS team executed our implementation of these methods on the 2020 production noisy measurements to estimate CIs for each query in the persons universe of the 2020 Redistricting Data File at the tract level and above in the tabulation U.S. and Puerto Rico spines. To provide a sense of the number of CIs in this data product, a total of 292 CIs were produced for each tabulation geographic unit, and the U.S. tabulation spine includes 83,883 tracts, 3,143 counties, 51 states, and the root geographic unit consisting of the U.S. as a whole.\footnote{The Census Bureau Geography Division includes Washington DC in the state geographic level as a ``state equivalent" geographic unit.} These CIs and directions for using them are available at \url{https://registry.opendata.aws/census-2020-pl94-gls/}. 
	
    We made several changes to our implementation prior to this execution to improve accuracy relative to the implementation used in the numerical experiment in the previous section. First, to leverage additional information encoded in the equality constraints used by DAS, which are also available in the publicly-available noisy measurements, for each vertex $v\in\tree$ that is above the block geographic level, we added additional rows to $S(v)$ and elements to $\yb(v)$ that encode the equality constraints associated with $v$ used within the DAS production executions. While we may explore implementing these constraints using a Karush–Kuhn–Tucker (KKT) matrix in the future, the version of our codebase used for this particular data product only ensures these constraints hold approximately, by defining the variance of these additional elements of $\yb(v)$ to be 1/16,384.

	Second, this implementation actually executes Algorithm \ref{alg:two_pass_est} two times, once using  the persons Redistricting Data File noisy measurements as input and once using the DHCP noisy measurements as input, marginalized to the persons redistricting data file schema consisting of 2,016 histogram cell counts for each vertex. Afterward, the final set of CIs for each geographic entity are computed by combining the output of these two executions in an optimal manner. Specifically, for each execution $i\in\{0,1\}$ of Algorithm \ref{alg:two_pass_est}, and for each geographic entity $H,$ the approach proposed in this paper is used to compute the histogram estimate $\tilde{\betab}_H^{(i)}\in\rr^n.$ Afterward, these two estimates are combined with the optimal inverse-variance-weighted mean
	
	\[\tilde{\betab}_H = \var(\tilde{\betab}_H) \left( \var\left(\tilde{\betab}_H^{(0)}\right)^{-1} \tilde{\betab}_H^{(0)} + \var\left(\tilde{\betab}_H^{(1)}\right)^{-1} \tilde{\betab}_H^{(1)} \right), \]
	where
	\[\var(\tilde{\betab}_H) = \left( \var\left(\tilde{\betab}_H^{(0)}\right)^{-1}  + \var\left(\tilde{\betab}_H^{(1)}\right)^{-1} \right)^{-1},\]
	as described by Lemma \ref{lem:inv_var_wtd_mean} in more detail, and then the final CI for each query $q$ is computed using a similar approach as in Algorithm \ref{alg:est_ci}, \textit{i.e.}, by $q^\top \tilde{\beta}_{H}\pm  \Phi^{-1}(1-\alpha/2) \sqrt{q^\top \var(\tilde{\beta}_{H}) q}.$ 

	To provide an idea of the computational requirements of our implementation of the proposed algorithms for the numerical experiment explored here, we will describe the cluster settings and runtimes required for our implementation of this algorithm. The codebase itself is available at \url{https://github.com/uscensusbureau/DAS_2020_GLS_Uncertainty_Evaluation}. This implementation was executed on Amazon Web Services (AWS) using Elastic Map Reduce (EMR) version 6.15 and Apache Spark version 3.4. Each node in the cluster was an AWS r6i.8xlarge virtual machine, which has 32 virtual cores and 244GiB of RAM. We configured the cluster so that it had one primary node and 40 core nodes, and configured Spark so that there were 40 executors, each with 11 spark executor cores. Computing the CIs required first executing Algorithm \ref{alg:two_pass_est} using the 2020 Persons Redistricting Data File noisy measurements as input, which had a runtime of 9.7 hours. Next, we marginalized the 2020 DHCP noisy measurements to the schema used by the Persons Redistricting Data File and then used these marginalized noisy measurements as input in Algorithm \ref{alg:two_pass_est}, which had a runtime of 31.7 hours. In the final step we computed the histogram estimate and variance matrix of each tabulation geographic unit at the tract level and above separately in both of these two spines and combined them using the approach outlined in the previous paragraph, which had a runtime of 16.0 hours.

	\section{Conclusion} \label{sec:conclusion}
	
	This paper describes a two-pass estimation approach for hierarchical data that is capable of providing GLS-based estimates and CIs. We also provide a numerical exercise to demonstrate feasibility of our proposed methods in our motivating use case. As described in Section \ref{sec:connection_w_hays_et_al}, this two-pass estimator can be viewed as a generalization of the approach described by \cite{hay2010boosting}. We also describe a statistical data product based on our proposed approach in Section \ref{sec:upcoming_ci_release}. 
	
	In the context of our motivating use case, an alternative to using the GLS estimator described above is to simply derive point estimates from the noisy measurements directly. For example, one can estimate the total population of a vertex using the total population primitive DP answer of the vertex directly. An alternative estimator for this same query answer can be derived by summing over the total population noisy measurements of the children of this parent vertex. For use cases that do not require a high level of accuracy, these simple estimates may be adequate. However, the full-information GLS estimator provides a unique estimate without requiring further choices of the user, which has the effect of enhancing reproducibility. As described in \ref{sec:main_results}, the GLS estimate has the added benefit of being the best linear unbiased estimator, which results in CIs that are narrower than those of these simpler estimators.

	\bibliographystyle{apalike}
	\bibliography{references}

@article{willsky2002multiresolution,
  title={Multiresolution Markov models for signal and image processing},
  author={Willsky, Alan S},
  journal={Proceedings of the IEEE},
  volume={90},
  number={8},
  pages={1396--1458},
  year={2002},
  publisher={IEEE}
}

@techreport{khubba2022national,
  title={National census coverage estimates for people in the United States by demographic characteristics},
  author={Khubba, Shadie and Heim, Krista and Hong, Jinhee},
  year={2022},
  institution={U.S. Census Bureau}
}

@techreport{schafer2021,
  title={Block-Level Simulation of Non-Sampling Variability in Decennial Census Population Counts},
  author={Schafer, Joseph and Bell, William},
  year={2021},
  institution={U.S. Census Bureau},
  note={\url{https://www.census.gov/library/working-papers/2021/adrm/CED-WP-2021-007.html}}
}

@article{agarwal2021causal,
  title={Causal inference with corrupted data: Measurement error, missing values, discretization, and differential privacy},
  author={Agarwal, Anish and Singh, Rahul},
  journal={arXiv:2107.02780},
  year={2021},
  publisher={arXiv}
}

@article{paige1975solution,
  title={Solution of sparse indefinite systems of linear equations},
  author={Paige, Christopher C and Saunders, Michael A},
  journal={SIAM journal on numerical analysis},
  volume={12},
  number={4},
  pages={617--629},
  year={1975},
  publisher={SIAM}
}

@article{cumings2023disclosure,
	author = {Cumings-Menon, Ryan and Ashmead, Robert and Kifer, Daniel and Leclerc, Philip and Spence, Matthew and Zhuravlev, Pavel and Abowd, John M.},
	journal = {Harvard Data Science Review},
	number = {3},
	year = {2025},
	month = {July},
	note = {https://doi.org/10.1162/99608f92.f1065159},
	publisher = {The MIT Press},
	title = {Disclosure Avoidance for the 2020 {Census} {Demographic} and {Housing} {Characteristics} {File}},
	volume = {7},
}

@article{aitken1935iv,
  title={On least squares and linear combination of observations},
  author={Aitken, Alexander C},
  journal={Proceedings of Royal Statistical Society},
  volume={55},
  pages={42--48},
  year={1935},
}

@misc{DHCddps,
   note = {\url{https://www.census.gov/programs-surveys/decennial-census/decade/2020/planning-management/process/disclosure-avoidance/2020-das-development.html}},
   title = {{Developing the DAS: Demonstration Data and Progress Metrics}},
   year = {2023},
   author = {{U.S. Census Bureau}},
}

@misc{redDataFile,
   note = {\url{https://www.census.gov/programs-surveys/decennial-census/about/rdo/summary-files.html}},
   title = {{Decennial Census P.L. 94-171 Redistricting Data}},
   year = {2023},
   author = {{U.S. Census Bureau}},
}

@book{greene2003econometric,
  title={Econometric analysis},
  author={Greene, William H},
  publisher={Prentice Hall},
  year={2003}
}

@article{hay2010boosting,
  title={Boosting the Accuracy of Differentially Private Histograms Through Consistency},
  author={Hay, Michael and Rastogi, Vibhor and Miklau, Gerome and Suciu, Dan},
  journal={Proceedings of the VLDB Endowment},
  volume={3},
  number={1},
  year={2010}
}

@article{cumings2024geographic,
  title={Geographic Spines in the 2020 {C}ensus Disclosure Avoidance System},
  author={Cumings-Menon, Ryan and Ashmead, Robert and Kifer, Daniel and Leclerc, Philip and Ocker, Jeffrey and Ratcliffe, Michael and Zhuravlev, Pavel and Abowd, John},
  journal={Journal of Privacy and Confidentiality},
  volume={14},
  number={3},
  year={2024}
}

@article{canonne2020discrete,
  title={The discrete {G}aussian for differential privacy},
  author={Canonne, Cl{\'e}ment L and Kamath, Gautam and Steinke, Thomas},
  journal={Advances in Neural Information Processing Systems},
  volume={33},
  pages={15676--15688},
  year={2020}
}

@inproceedings{bun2016concentrated,
  title={Concentrated differential privacy: Simplifications, extensions, and lower bounds},
  author={Bun, Mark and Steinke, Thomas},
  booktitle={Theory of Cryptography Conference},
  pages={635--658},
  year={2016},
  organization={Springer}
}

@inproceedings{dwork2006calibrating,
  title={Calibrating noise to sensitivity in private data analysis},
  author={Dwork, Cynthia and McSherry, Frank and Nissim, Kobbi and Smith, Adam},
  booktitle={Theory of cryptography conference},
  pages={265--284},
  year={2006},
  organization={Springer}
}

@inproceedings{li2010optimizing,
  title={Optimizing linear counting queries under differential privacy},
  author={Li, Chao and Hay, Michael and Rastogi, Vibhor and Miklau, Gerome and McGregor, Andrew},
  booktitle={Proceedings of the twenty-ninth ACM SIGMOD-SIGACT-SIGART symposium on Principles of database systems},
  pages={123--134},
  year={2010}
}

@article{honaker2015efficient,
  title={Efficient use of differentially private binary trees},
  author={Honaker, James},
  journal={Theory and Practice of Differential Privacy (TPDP 2015), London, UK},
  volume={2},
  pages={26--27},
  year={2015}
}

@article{xu2013differentially,
  title={Differentially private histogram publication},
  author={Xu, Jia and Zhang, Zhenjie and Xiao, Xiaokui and Yang, Yin and Yu, Ge and Winslett, Marianne},
  journal={The VLDB journal},
  volume={22},
  pages={797--822},
  year={2013},
  publisher={Springer}
}

@article{abowd20222020,
	author = {Abowd, John M. and Ashmead, Robert and Cumings-Menon, Ryan and Garfinkel, Simson and Heineck, Micah and Heiss, Christine and Johns, Robert and Kifer, Daniel and Leclerc, Philip and Machanavajjhala, Ashwin and Moran, Brett and Sexton, William and Spence, Matthew and Zhuravlev, Pavel},
	journal = {Harvard Data Science Review},
	number = {Special Issue 2},
	year = {2022},
	month = {June},
	note = {https://hdsr.mitpress.mit.edu/pub/7evz361i},
	publisher = {The MIT Press},
	title = {The 2020 {Census} {Disclosure} {Avoidance} {System} {TopDown} {Algorithm}},
	volume = { },
}

@article{henderson1981deriving,
  title={On deriving the inverse of a sum of matrices},
  author={Henderson, Harold V and Searle, Shayle R},
  journal={SIAM review},
  volume={23},
  number={1},
  pages={53--60},
  year={1981},
  publisher={SIAM}
}
	
	\appendix
	\section{Derivation of Algorithm for Computing the Two-Pass GLS Estimator} \label{sec:derivation_of_two_pass_algorithm}

	\subsection{Fine-to-Coarse Recursion: Initialization Step} \label{sec:bottom_up_pass_init}
	
	The first step of the algorithm is to initialize state variables of each vertex $v\in\tree$ as
	
	\begin{align}
		\hat{\betab}(v | v) &= \left(S(v)^\top \var(\ub(v))^{-1} S(v)\right)^{-1} S(v)^\top \var(\ub(v))^{-1} \yb(v), \label{eq:up_init1} \\
		\var(\hat{\betab}(v | v)) &= \left(S(v)^\top \var(\ub(v))^{-1} S(v)\right)^{-1}. \label{eq:up_init2}
	\end{align}
	
	\noindent
	For each of the leaf of the tree, \textit{i.e.}, each vertex in $\level(\tree, L),$ also let 
	
	\begin{align}
		\hat{\betab}(v | v-) &= \hat{\betab}(v | v) \label{eq:up_init3} \\
		\var(\hat{\betab}(v | v-)) &= \var(\hat{\betab}(v | v)) \label{eq:up_init4}
	\end{align}
	
	\subsection{Fine-to-Coarse Recursion: Recursion Step} \label{sec:bottom_up_pass_recursion_step}
	
	The updates described in this section are performed first for vertices in $\level(\tree, L-1),$ and afterward in each level moving up the tree. Specifically, given $\{\hat{\betab}(c| c-)\}_{c\in\children(v)}$ and $\{\var(\hat{\betab}(c | c-))\}_{c\in\children(v)},$ let
	
	\begin{align}
		\hat{\betab}(v | \children(v)-)  &= \sum_{c\in\children(v)} \hat{\betab}(c| c-), \\
		\var(\hat{\betab}(v | \children(v)-))  &= \sum_{c\in\children(v)} \var(\hat{\betab}(c| c-)).
	\end{align}
	
	Now, since we have two independent estimates of $\betab(v),$ \textit{i.e.}, $\hat{\betab}(v | v)$ and $\hat{\betab}(v | \children(v)-),$ for each vertex $v,$ we can define a new estimate as the linear combination of these estimates that has the lowest possible variance, which is the inverse-variance weighted mean of these two vector estimates, as described in Lemma \ref{lem:inv_var_wtd_mean}. In other words, in this step, we update the state variables using
	
	\begin{align} 
		&\var(\hat{\betab}(v | v-)) = \left(\var(\hat{\betab}(v | v))^{-1} + \var(\hat{\betab}(v | \children(v)-))^{-1}\right)^{-1} \label{eq:up_g_g_minus1} \\
		&\hat{\betab}(v | v-)  = \var(\hat{\betab}(v | v-)) \left(\var(\hat{\betab}(v | v))^{-1} \hat{\betab}(v | v) + \var(\hat{\betab}(v | \children(v)-))^{-1} \hat{\betab}(v | \children(v)-) \right). \label{eq:up_g_g_minus2}
	\end{align}
	
	\subsection{Coarse-to-Fine Recursion: Initialization Step}
	
	Prior to starting the coarse-to-fine recursion, we initialize the state variables of the root vertex $r$ as
	
	\begin{align}
		\tilde{\betab}(r) &= \hat{\betab}(r |r-) \label{eq:down_init1}  \\
		\var(\tilde{\betab}(r)) &= \var(\hat{\betab}(r | r-)).\label{eq:down_init2}
	\end{align}
	
	\subsection{Coarse-to-Fine Recursion: Recursion Step} \label{sec:top_down_pass_recursion_step}
	
	The updates in the coarse-to-fine recursion can be viewed as solutions to the optimization problem
	
	\begin{align}
		\{\tilde{\betab}(c)\}_{c\in\children(v)} &= \argmin_{ \{\betab(c)\}_{c\in\children(v)} }  \sum_{c\in\children(v)}  \lVert \var(\hat{\betab}(c | c-))^{-1/2}  \left(\hat{\betab}(c | c-) - \betab(c)\right) \rVert_2^2 \textrm{  such that:} \nonumber \\
		& \sum_{c\in\children(v)} \betab(c) = \tilde{\betab}(v). \label{eq:top_down_proj_opt_prob}
	\end{align}
	
	The KKT conditions of this optimization problem are
	
	\begin{align*}
		& \var(\hat{\betab}(c | c-))^{-1}  \left(\tilde{\betab}(c) - \hat{\betab}(c | c-)\right) = \lambdab \;\; \forall \; c\in\children(v) \\
		& \sum_{c\in\children(v)} \tilde{\betab}(c) = \tilde{\betab}(v).
	\end{align*}
	
	\noindent
	These two conditions imply that $\lambdab$ can be found using 
	
	\begin{align*}
		& \sum_{c\in\children(v)} \tilde{\betab}(c)-\hat{\betab}(c | c-) = \sum_{c\in\children(v)} \var(\hat{\betab}(c | c-)) \lambdab \\
		\implies & \tilde{\betab}(v) - \sum_{c\in\children(v)} \hat{\betab}(c | c-) = \sum_{c\in\children(v)} \var(\hat{\betab}(c | c-)) \lambdab \\
		\implies & \lambdab = \left( \sum_{c\in\children(v)} \var(\hat{\betab}(c | c-)) \right)^{-1} \left( \tilde{\betab}(v) - \sum_{c\in\children(v)} \hat{\betab}(c | c-) \right).
	\end{align*}
	
	\noindent
	After substituting this value of $\lambdab$ into the first KKT condition, we have
	
	\begin{align}
		\tilde{\betab}(c) = \hat{\betab}(c | c-) + A(c) \left(\tilde{\betab}(v) - \sum_{c'\in\children(v)} \hat{\betab}(c' | c'-)\right), \label{eq:down1}
	\end{align}
	
	\noindent
	where
	\begin{align}
		A(c) = \var(\hat{\betab}(c | c-)) \left( \sum_{c'\in\children(v)} \var(\hat{\betab}(c' | c'-)) \right)^{-1}. \label{eq:A_of_c_def}
	\end{align}
	
	Next we will derive $\var(\tilde{\betab}(c)).$ To do so, first let $\widetilde{Q}_v$ be defined so that $\tilde{\betab}(v) =  \tau + \widetilde{Q}_v \sum_{c\in\children(v)} \hat{\betab}(c | c-) ,$ where $\tau$ is a random variable that is independent of $\{\hat{\betab}(c | c-)\}_{c\in\children(v)},$ \textit{i.e.}, $\tau$ is the component of $\tilde{\betab}(v)$ that is a linear function of the observations of vertex $v$ and the vertices that are ancestors of $v.$ Also, let $ B_c=\hat{\betab}(c | c-) - A(c) \sum_{c'\in\children(v)} \hat{\betab}(c' | c'-),$ so that $\tilde{\betab}(c) = B_c + A(c)\tilde{\betab}(v)$. To derive $\var(\tilde{\betab}(c)),$ we will first derive a two intermediate properties. First, the definition of  $A(c) $ implies
	
	\begin{align} \label{eq:ac_def_implication}
		A(c) \left( \sum_{c'\in\children(v)} \var( \hat{\betab}(c' | c'-)) \right) = \var( \hat{\betab}(c | c-)).
	\end{align}
	
	\noindent
	Second, we have
	
	\begin{align}
		&\cov(B_c, \tilde{\betab}(v)) = \cov\left(\hat{\betab}(c | c-) - A(c) \sum_{c'\in\children(v)} \hat{\betab}(c' | c'-) ,  \tau   + \widetilde{Q}_v \sum_{c'\in\children(v)} \hat{\betab}(c' | c'-) \right)\nonumber \\
		& = \var(\hat{\betab}(c | c-)) \widetilde{Q}_v^\top - A(c) \left(\sum_{c'\in\children(v)} \var(\hat{\betab}(c' | c'-) )\right) \widetilde{Q}_v^\top \nonumber \\
		& =  \left( \var(\hat{\betab}(c | c-)) - \var(\hat{\betab}(c | c-)) \right) \widetilde{Q}_v^\top  = 0, \label{eq:cov_b_and_tilde_beta_is_zero}
	\end{align}
	
	\noindent 
	where the penultimate equality above follows from \eqref{eq:ac_def_implication}.  

    Next, \eqref{eq:cov_b_and_tilde_beta_is_zero} implies that $\var(\tilde{\betab}(c))$ can be written as
	
	\begin{align}
		\var(\tilde{\betab}(c)) =& \cov(\tilde{\betab}(c), \tilde{\betab}(c)) = \cov(B_c + A(c) \tilde{\betab}(v), B_c + A(c) \tilde{\betab}(v) ) \nonumber \\
		=& \cov(B_c, B_c )+ \cov(B_c, A(c) \tilde{\betab}(v) ) + \cov(A(c) \tilde{\betab}(v), B_c) + \cov( A(c) \tilde{\betab}(v),  A(c) \tilde{\betab}(v) ) \nonumber \\
		=& \var(B_c) + A(c) \var(\tilde{\betab}(v)) A(c)^\top \nonumber  \\
		=& \var( \hat{\betab}(c | c-)) -  \var( \hat{\betab}(c | c-)) A(c)^\top   - A(c) \var( \hat{\betab}(c | c-)) \nonumber \\
		&+ A(c) \left(\sum_{c'\in\children(v)} \var(\hat{\betab}(c' | c'-) )\right) A(c)^\top + A(c)\var( \tilde{\betab}(v)) A(c)^\top, \nonumber 
	\end{align}
	
	\noindent
	so \eqref{eq:ac_def_implication} implies
	
	\begin{align}
		\var(\tilde{\betab}(c)) =& \var( \hat{\betab}(c | c-)) -  \var( \hat{\betab}(c | c-)) A(c)^\top   - A(c) \var( \hat{\betab}(c | c-)) \nonumber \\
		& + \var( \hat{\betab}(c | c-)) A(c)^\top + A(c)\var( \tilde{\betab}(v)) A(c)^\top \nonumber \\ 
		=&  \var( \hat{\betab}(c | c-))   - A(c) \var( \hat{\betab}(c | c-))  + A(c)\var( \tilde{\betab}(v)) A(c)^\top. \label{eq:down2}
	\end{align}
	This final equality is provides the final definition of the outputs of Algorithm \ref{alg:two_pass_est}. The next section provides a proof that the estimator output from this algorithm is identical the GLS estimator defined in \eqref{eq:est_betab_naive}.

    \section{Covariance Matrix Derivations} \label{sec:appendix:cov_mat_derivations}

    In this section, we will derive the covariance matrix $\cov(\tilde{\betab}(c), \tilde{\betab}(d))$ for the three types of adjacency relationships between vertices $c,d\in\tree$ described in Figure \ref{fig:adj_struct_cov}. To do so, we will use the notation outlined in Section \ref{sec:cov} and Appendix \ref{sec:derivation_of_two_pass_algorithm}. It may be helpful to point out that the derivations for each covariance matrix $\cov(\tilde{\betab}(c),\tilde{\betab}(d))$ for $c,d\in\tree$ in this section follow a similar pattern as the derivations of $\var(\tilde{\betab}(c))$ in the preceding section. Specifically, each such derivation involves first recursively substituting instances of $\tilde{\betab}(\cdot)$ in the covariance function inputs for its definition in \eqref{eq:down1}, until the only instance of $\tilde{\betab}(\cdot)$ is the closest common ancestor of $c$ and $d$ in the tree $\tree.$ While initially this adds a significant number of terms to the expression of $\cov(\tilde{\betab}(c),\tilde{\betab}(d)),$ it has the advantage of simplifying the dependency relationships between all possible pairs of random variables that appear in the resulting formula. The second step is to simplify each covariance matrix expression using \eqref{eq:ac_def_implication}, as is also done in the previous section to derive $\var(\tilde{\betab}(c)).$ 

    \begin{figure}[tbh]
		\centering
		\begin{subfigure}{0.215\textwidth}
			\begin{tikzpicture}
				[thick,level distance=20mm, sibling distance=30mm]
				\node (uv) {$c\land d$}
				child { node (u) {$c$} } 
				child { node (v) {$d$} };
			\end{tikzpicture}
			\caption{ }
		\end{subfigure}%
		~ 
		\begin{subfigure}{0.105\textwidth} 
			\begin{tikzpicture}
				[thick,level distance=20mm, sibling distance=30mm]
				\node (uv) {$d = c\land d$}
				child { node (u) {$c$}  edge from parent[dotted, very thick] } ;
			\end{tikzpicture}
			\caption{ }
		\end{subfigure}%
		~ 
		\begin{subfigure}{0.215\textwidth}
			\begin{tikzpicture}
				[thick,level distance=10mm, sibling distance=15mm]
				\node (uv) {$c\land d $}
				child {node (up) {$c'$}  child { node (u) {$c$}  edge from parent[dotted, very thick] } child[transparent] { node (ue) { } } }
				child {node (vp) {$d'$} child[transparent] { node (ve) { } } child { node (v) {$d$}   edge from parent[dotted, very thick] }  };
			\end{tikzpicture}
			\caption{ } \label{fig:adj_struct_cov:c}
		\end{subfigure}
		\caption{In this section we derive $ \cov(\tilde{\betab}(c), \tilde{\betab}(d))$ for the three types of adjacency relationships between $c,d\in\tree$ represented in the subplots above: $c$ and $d$ are siblings (A); $d$ is an ancestor vertex of $c$ (B); and cases in which neither $c$ or $d$ are a direct descendant of the other (C). Dotted edges are used to denote the portions of $\omega(c,d)$ that include an arbitrary number of vertices.} \label{fig:adj_struct_cov}
	\end{figure}

    First, consider the case in which $c,d\in\tree$ are sibling vertices, \textit{i.e.}, $c,d\in\children(c\land d),$ and let $ B_v=\hat{\betab}(v | v-) - A(v) \sum_{v'\in\children(c\land d)} \hat{\betab}(v' | v'-)$ for each $v\in\{c,d\}.$ In this case, property \eqref{eq:ac_def_implication} implies
	\begin{align}
		\cov(B_c, B_d) =&  A(c)\left(\sum_{v\in\children(c \wedge d)} \var(\hat{\betab}(v | v-))\right) A(d)^\top \nonumber \\
		& -\var(\hat{\betab}(c| c-)) A(d)^\top - A(c) \var(\hat{\betab}(d| d-)) \nonumber \\
		=&  \var(\hat{\betab}(c| c-)) A(d)^\top -\var(\hat{\betab}(c| c-)) A(d)^\top - A(c) \var(\hat{\betab}(d| d-)) \nonumber \\
		=& - A(c) \var(\hat{\betab}(d| d-)), \label{eq:cov_sibling_intmdt_res}
	\end{align}
	so
	\begin{align}
		\cov(\tilde{\betab}(c), \tilde{\betab}(d)) =& \cov(B_c + A(c) \tilde{\betab}(c \wedge d), B_d + A(d) \tilde{\betab}(c \wedge d) ) \nonumber \\
		=& \cov(B_c, B_d )+ \cov(B_c, A(d) \tilde{\betab}(c \wedge d) )  \nonumber \\
		& + \cov(A(c) \tilde{\betab}(c \wedge d), B_d)  + \cov( A(c) \tilde{\betab}(c \wedge d),  A(d) \tilde{\betab}(c \wedge d) ) \nonumber \\
		=& A(c) \var(\tilde{\betab}(c \wedge d) )  A(d)^\top - A(c) \var(\hat{\betab}(d| d-)). \label{eq:cov_sibling}
	\end{align}

	Second, to derive $ \cov(\tilde{\betab}(c), \tilde{\betab}(d))$ for the case in which $d$ is an ancestor of $c,$ first suppose we are in the subcase in which $c\in\children(d).$ In this subcase we have 
	\begin{align}
		\cov(\tilde{\betab}(c), \tilde{\betab}(d)) &= \cov(B_c + A(c) \tilde{\betab}(d), \tilde{\betab}(d) ) = \cov(B_c, \tilde{\betab}(d) ) + A(c)\cov(  \tilde{\betab}(d), \tilde{\betab}(d) )  \nonumber \\
		& = A(c)\var ( \tilde{\betab}(d) ). \label{eq:cov_gc}
	\end{align}
	The covariance for the case in which $d=c\land d$ but $c\notin \children(d)$ can be derived in a similar manner.\footnote{Note that the formulas for $\cov(\tilde{\betab}(d), \tilde{\betab}(c))$ for cases in which $d$ an ancestor of $c$ also follow from our derivations here because $ \cov(\tilde{\betab}(d), \tilde{\betab}(c))=\cov(\tilde{\betab}(c), \tilde{\betab}(d))^\top.$} Specifically, as in the derivation for \eqref{eq:cov_gc}, for each $t\in\{\omega(c,d)[0], \dots,\omega(c,d)[-2]\},$ the term $B_{t}$ in the first input of the covariance function can simply be ignored because these terms are independent of the second input to $\cov(\cdot)$ (\textit{i.e.}, $\tilde{\betab}(c\land d)=\tilde{\betab}(d))$ by definition of each $B_t,$ which leads to the simple formula
	\begin{align}
		\cov(\tilde{\betab}(c), \tilde{\betab}(d)) &= \cov\left(\left( \prod_{k\in\omega(c,d) / d} A(k)\right) \tilde{\betab}(d), \tilde{\betab}(d) \right)   \nonumber \\
		&=\left(\prod_{k\in\omega(c,d) / d} A(k) \right)\var(\tilde{\betab}(d)). \label{eq:cov_ancestors}
	\end{align}

	For the third case we consider, suppose neither $c$ or $d$ are a direct descendant of the other, which is a generalization of the first case we considered. Let $c'=\omega(c,c\land d)[-2]$ and $d'= \omega(c\land d, d)[1],$ as depicted in Figure \ref{fig:adj_struct_cov:c}. To derive $\cov(\tilde{\betab}(c), \tilde{\betab}(d))$ for all $c,d$ vertex pairs included in this case, note that \eqref{eq:cov_ancestors} implies
	\begin{align*}
		\cov(\tilde{\betab}(c), \tilde{\betab}(c'))=\left(\prod_{k\in\omega(c,c') / c'} A(k) \right)\var(\tilde{\betab}(c')),
	\end{align*}
	and likewise
	\begin{align*}
		\cov(\tilde{\betab}(d'), \tilde{\betab}(d))=\var(\tilde{\betab}(d'))\left(\prod_{k\in\omega(d,d') / d'} A(k) \right)^\top.
	\end{align*}
	Since \eqref{eq:down1} is linear in $\tilde{\betab}(v)$ and the elements of $\{\tilde{\betab}(t)\}_{t\in \omega(c,c')/c'}$ are independent of those of $\{\tilde{\betab}(t)\}_{t\in \omega(d',d)/d'}$ after conditioning on $ \tilde{\betab}(c') $ and $\tilde{\betab}(d'),$ for this case, we have
	
	\begin{align}
		\cov(\tilde{\betab}(c), \tilde{\betab}(d)) &= \left(\prod_{k\in\omega(c,c') / c'} A(k) \right) \cov(\tilde{\betab}(c'), \tilde{\betab}(d')) \left(\prod_{k\in\omega(d,d') / d'} A(k) \right)^\top, \label{eq:cov_not_ancestors}
	\end{align}
	where $\cov(\tilde{\betab}(c'), \tilde{\betab}(d'))$ is defined as in \eqref{eq:cov_sibling}. 

	\section{Proof Two-Pass Estimator is the BLUE} \label{sec:blue_proof}

	\subsection{Additional Notation} \label{sec:appendix:notation}
	
	The proofs in this appendix consider GLS estimators based on a variety of different subsets of the observations, and more notation will be helpful to define these estimators. After providing these definitions, we summarize this notation in Table \ref{tab:appendix:notation} below. Note that throughout this appendix we assume Assumptions \ref{assump:spine_struct}-\ref{assump:u_iid} hold, but we do not require Assumption \ref{assump:normality}.
	
	First, we will define a few subsets of vertices of $\tree.$ These will primarily be used in the definitions of subsets of objects associated with the vertices of $\tree,$ such as to denote subsets of the observations, error terms, and the row vectors of the design matrices corresponding to these  observations. Specifically, for a given parent vertex $v\in\tree,$ let $\gamma$ be defined as the set of vertices that are descendants of $v,$ $\gamma_L$ be defined as the set of leaf vertices that are descendants of $v,$ $\gamma_C$ as shorthand for $\children(v).$ Also, let $\tau_L = \level(\tree, L).$ 
	
	As described previously, we use these sets of vertices to refer to subsets of objects. Specifically, given the set of vertices $W,$ let $\betab_{W}=\stack(\{\betab(w)\}_{w\in W}),$ $\ub_W= \stack(\{\ub(w)\}_{w\in W}),$ and $\yb_{W} =\stack(\{\yb(w)\}_{w\in W}).$ Also, let $V_{W}$ be defined as the block diagonal matrix $\var(\ub_{W}).$ These sets are also used to distinguish between several design matrices. Specifically, given the set of vertices $W,$ let $F_W$ be defined as a block matrix with block $(i,j)$ given by $S(W[i])$ if $\left(W\cap \tau_L\right)[j]$ is either equal to $W[i]$ or is a descendant of $\gamma[i],$ and a block of zeros otherwise. 
	
	Throughout this appendix, we will continue to assume that every set of vertices is totally ordered. As described in Section \ref{sec:setting}, this is primarily to ensure that, for a given set of vertices $W,$ the vertex order used to define $F_W,$ $\ub_W,$ and $\yb_W$ match one another. Also, solely to keep our notation slightly more concise in Appendix \ref{sec:appendix:tdp}, for each set of vertices $W$ that we consider, we assume that if $W$ has any ancestors or descendants of vertex $v,$ then $W$ is defined so that ancestors of $v$ and $v$ itself are ordered before all other vertices and that the descendants of $v$ are ordered after all other vertices. We refer to a vertex set that is ordered in this way as being \textit{ordered with respect to} $v,$ and we describe the implication of this assumption that simplifies our notation slightly at the beginning of Appendix \ref{sec:appendix:tdp}. 
	
	We will also use a few matrices to define elementwise sums over the objects associated with sets of vertices. Specifically, for vertex subsets $W$ and $X,$ let $J_{W,X}$ be defined as a block matrix with block $(i,j)$ given by $I_n$ if $W[j]$ either is a descendant of $X[i]$ or is equal to $X[i]$ itself and a block of zeros otherwise. For example, this definition implies $\jlc \betab_{\gamma_L} = \betab_{\gamma_C}$ and also $\jrc \betab_{\tau_L} = \betab_{\gamma_C}.$ Note also that this definition implies that if $W= \level(\tree, l),X= \level(\tree, l'), $ and $Y= \level(\tree, l''),$ where $L\geq l>l'>l''\geq 0,$ then $J_{X,Y}J_{W,X} = J_{W,Y}.$
	
	\begin{table}[ht]
		\centering
		\begin{tabular}{ l l }
			$\gamma$ & $\{v\in \tree\} / \{v\} $ \\
			$\gamma_L$ & $\level(\tree, L) \cap \gamma$ \\
			$\gamma_C$ & $\children(v)$ \\
			$\tau_L$  & $\level(\tree, L)$ \\
			$\betab_{W} $ & $\stack(\{\betab(w)\}_{w\in W})$ \\
			$\ub_W $ & $ \stack(\{\ub(w)\}_{w\in W}) $ \\ 
			$\yb_{W} $ & $  \stack(\{\yb(w)\}_{w\in W})$ \\
			$V_W$ & $\var(\ub_W) $\\
			$F_W$ & The design matrix encoding the linear queries associated with vertices in $W$ \\
			$J_{W,X}$ & Defined so that $J_{W,X}\betab_{W} = \betab_X$
		\end{tabular}
		\caption{Additional notation used in the Appendix \ref{sec:blue_proof}. Note that $W$ and $X$ are totally ordered sets of vertices in $\tree$.} \label{tab:appendix:notation}
	\end{table}
	
	The next example provides the matrix $\eg,$ the GLS estimator of $\betab_{\gamma_L}$ based on the observations associated with vertices of $\tree_v,$ which is denoted by $\hat{\betab}_{\gamma_L},$ and the matrices $\jlc$ and $\jlg$ for the small example sub-graph shown in Figure \ref{fig:appendix:eg_spine_intro}.
	
	Also, we use $A^+$ to denote the Moore-Penrose pseudoinverse of the matrix $A\in\rr^{M\times N}.$ Given a positive semidefinite matrix $A\in\rr^{M\times N},$ let $A^{1/2}$ denote a matrix that satisfies $A = A^{1/2} (A^{1/2})^\top.$ Also, given two sets $A,B$ we use $A/B$ to denote the relative complement of $A$ in $B.$
	
	Since we make several references to Aitken's Theorem \citep{aitken1935iv} in the proofs in this Appendix, the next Lemma provides this result for completeness. Solely for notational convenience, we provide this theorem using the notation for the full-information model $\yb_{\tree} = \er \betab_{\tau_L} + \ub_\tree$ as an example. This result is closely related to the Gauss-Markov Theorem; for example, in this particular model, Aitken's Theorem simply follows from applying the Gauss-Markov Theorem to the transformed model $V_{\tree}^{-1/2} \yb_{\tree} = V_{\tree}^{-1/2}\er \betab_{\tau_L} + V_{\tree}^{-1/2}\ub_{\tree},$ which has homoscedastic errors, as required by the Gauss-Markov Theorem, since $\var(V_{\tree}^{-1/2}\ub_{\tree}) = V_{\tree}^{-1/2} V_{\tree} (V_{\tree}^{-1/2})^\top = I.$
	
	\begin{lemma} \label{lem:appendix:gauss_markov_thm}
		(Aitken's Theorem) \citep{aitken1935iv}. Using the notation above for the model $\yb_{\tree} = \er \betab_{\tau_L} + \ub_{\tree}$ for convenience, the GLS estimator
		\begin{align}
			\hat{\betab}_{\tau_L} = \argmin_{\betab\in\rr^{n \card(\tau_L)}} \lVert  V_{\tree}^{-1/2} (\er \betab - \yb_{\tree}) \rVert_2^2  = (\er^\top V_{\tree}^{-1} \er)^{-1}\er^\top V_{\tree}^{-1} \yb_{\tree} \label{eq:appendix:aitkens_thm1}
		\end{align}
		\noindent
		is the BLUE of $\betab_{\tau_L}.$ In other words, if $\check{\betab}_{\tau_L}$ is an alternative estimator that is also linear in $\yb_{\tree},$ \textit{i.e.}, there exists  a matrix $C$ such that $\check{\betab}_{\tau_L} =C\yb_{\tree},$  and is also unbiased, \textit{i.e.}, $E(\check{\betab}_{\tau_L})=\betab_{\tau_L},$ then $\var(\hat{\betab}_{\tau_L}) = (\er^\top V_{\tree}^{-1} \er)^{-1}$ satisfies
		\begin{align}
			& \var(\hat{\betab}_{\tau_L}) \leq \var(\check{\betab}_{\tau_L}) \iff  \var(D \hat{\betab}_{\tau_L}) \leq \var(D \check{\betab}_{\tau_L}), \label{eq:appendix:aitkens_thm2}
		\end{align}
		\noindent
		where $D \in \rr^{m \times n \card(\tau_L)}$ is an arbitrary matrix. 
	\end{lemma}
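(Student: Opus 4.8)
The plan is to derive the result directly from the definition of the GLS objective, mirroring the whitening remark made just before the lemma; every step is a routine computation, so I will only indicate the order. First I would establish the closed form for $\hat{\betab}_{\tau_L}$: expanding $\lVert V_{\tree}^{-1/2}(\er\betab - \yb_{\tree})\rVert_2^2 = (\er\betab - \yb_{\tree})^\top V_{\tree}^{-1}(\er\betab - \yb_{\tree})$ and setting the gradient in $\betab$ to zero gives the normal equations $\er^\top V_{\tree}^{-1}\er\,\betab = \er^\top V_{\tree}^{-1}\yb_{\tree}$. By Assumption~\ref{assump:srank}, $\er$ has full column rank (its block structure inherits full column rank from the matrices $S(g)$), so $\er^\top V_{\tree}^{-1}\er$ is invertible, and by convexity of the quadratic the unique minimizer is $(\er^\top V_{\tree}^{-1}\er)^{-1}\er^\top V_{\tree}^{-1}\yb_{\tree}$, which is the stated expression. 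Substituting $\yb_{\tree} = \er\betab_{\tau_L} + \ub_{\tree}$ yields $\hat{\betab}_{\tau_L} = \betab_{\tau_L} + (\er^\top V_{\tree}^{-1}\er)^{-1}\er^\top V_{\tree}^{-1}\ub_{\tree}$, so $E(\hat{\betab}_{\tau_L}) = \betab_{\tau_L}$ and $\var(\hat{\betab}_{\tau_L}) = (\er^\top V_{\tree}^{-1}\er)^{-1}\er^\top V_{\tree}^{-1}\er(\er^\top V_{\tree}^{-1}\er)^{-1} = (\er^\top V_{\tree}^{-1}\er)^{-1}$.

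Next I would prove optimality among linear unbiased estimators. Let $\check{\betab}_{\tau_L} = C\yb_{\tree}$; unbiasedness for every value of $\betab_{\tau_L}$ forces $C\er = I$, hence $\check{\betab}_{\tau_L} = \betab_{\tau_L} + C\ub_{\tree}$ and $\var(\check{\betab}_{\tau_L}) = CV_{\tree}C^\top$. Writing $C = (\er^\top V_{\tree}^{-1}\er)^{-1}\er^\top V_{\tree}^{-1} + \Delta$, the constraint $C\er = I$ becomes $\Delta\er = 0$, and expanding $CV_{\tree}C^\top$ the two cross terms equal $(\er^\top V_{\tree}^{-1}\er)^{-1}\er^\top V_{\tree}^{-1}V_{\tree}\Delta^\top = (\er^\top V_{\tree}^{-1}\er)^{-1}(\Delta\er)^\top = 0$ and its transpose, leaving $\var(\check{\betab}_{\tau_L}) = \var(\hat{\betab}_{\tau_L}) + \Delta V_{\tree}\Delta^\top$. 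Since $\Delta V_{\tree}\Delta^\top$ is symmetric with nonnegative eigenvalues (as $V_{\tree}$ is positive definite), the eigenvalue-based ordering $\leq$ defined in Section~\ref{sec:notation} gives $\var(\hat{\betab}_{\tau_L}) \leq \var(\check{\betab}_{\tau_L})$.

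Finally I would dispatch the equivalence in \eqref{eq:appendix:aitkens_thm2}: if $\var(\check{\betab}_{\tau_L}) - \var(\hat{\betab}_{\tau_L})$ has nonnegative eigenvalues, then so does $D\big(\var(\check{\betab}_{\tau_L}) - \var(\hat{\betab}_{\tau_L})\big)D^\top = \var(D\check{\betab}_{\tau_L}) - \var(D\hat{\betab}_{\tau_L})$, since $DMD^\top$ is positive semidefinite whenever $M$ is, which gives the forward direction for every $D$; the reverse direction is the special case $D = I$. I do not anticipate a genuine obstacle here — this is the classical Aitken/Gauss--Markov argument — and the only points needing care are clerical: confirming invertibility of $\er^\top V_{\tree}^{-1}\er$ from the block structure of $\er$ and $V_{\tree}$, and using the eigenvalue-based meaning of $\leq$ consistently so that ``the difference of variance matrices is positive semidefinite'' is exactly the statement being verified. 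A shorter route would be to pass to the whitened model $V_{\tree}^{-1/2}\yb_{\tree} = V_{\tree}^{-1/2}\er\betab_{\tau_L} + V_{\tree}^{-1/2}\ub_{\tree}$, which has identity error covariance, and quote the ordinary Gauss--Markov theorem; I would prefer the direct computation above since it produces the variance formula $(\er^\top V_{\tree}^{-1}\er)^{-1}$ at no extra cost.
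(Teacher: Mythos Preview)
Your proposal is correct and complete. The paper, however, does not supply a proof of this lemma at all: it cites the result as classical (Aitken, 1935) and, in the paragraph immediately preceding the statement, sketches only the whitening reduction $V_{\tree}^{-1/2}\yb_{\tree} = V_{\tree}^{-1/2}\er\betab_{\tau_L} + V_{\tree}^{-1/2}\ub_{\tree}$ to the ordinary Gauss--Markov setting. You mention this route yourself but instead carry out the direct $C = (\er^\top V_{\tree}^{-1}\er)^{-1}\er^\top V_{\tree}^{-1} + \Delta$ decomposition, which is the standard textbook argument and has the minor advantage of producing the variance identity $(\er^\top V_{\tree}^{-1}\er)^{-1}$ and the PSD remainder $\Delta V_{\tree}\Delta^\top$ explicitly without appealing to Gauss--Markov as a black box. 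Both routes are equivalent in substance; yours is self-contained, while the paper's one-line reduction is shorter but presupposes the homoscedastic theorem.
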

	\begin{remark}
		Continuing to use the notation above for the model $\yb_{\tree} = \er \betab_{\tau_L} + \ub_{\tree}$ solely for notational convenience, note that if $\er$ has full column rank, then $\hat{\betab}_{\tau_L}$ is unique. We will use one implication of uniqueness of the GLS estimator repeatedly in the proofs below. Specifically, suppose we have an estimate $f(\yb_{\tree})$ for $\betab_{\tau_L},$ where $f(\cdot)$ is a function that might be expressed in a complex form that is difficult to reason about directly. However, also suppose that we are able to show that $f(\yb_{\tree})$ is the GLS estimator for $\betab(v)$ when the information set is given by the observations $\yb_{\tree},$ \textit{i.e.}, $f(\yb_{\tree})$ is unbiased, linear in $\yb_{\tree},$ and its variance matrix obtains the lower bound in \eqref{eq:appendix:aitkens_thm2}. As a result, uniqueness of the GLS estimator implies that $f(\yb_{\tree})$ can be expressed in the conceptually simple form given by $(\er^\top V_{\tree}^{-1} \er)^{-1}\er^\top V_{\tree}^{-1} \yb_{\tree}.$ 
		
		Also, we make use of the second inequality in \eqref{eq:appendix:aitkens_thm2} to reason about the GLS estimator for the histograms of vertices in levels above $L.$ For example, Aitken's Theorem implies that the GLS estimator for $\betab_{\gamma_L}$ based on the information set given by the observations $\yb_{\tree_v}$ is given by $ \hat{\betab}_{\gamma_L}= (\eg^\top V_{\tree_v}^{-1} \eg)^{-1}\eg^\top V_{\tree_v}^{-1} \yb_{\tree_v},$ so the second inequality in \eqref{eq:appendix:aitkens_thm2} implies that the GLS of $\betab(v)$ for this same information set is $\jlg\hat{\betab}_{\gamma_L}= \jlg (\eg^\top V_{\tree_v}^{-1} \eg)^{-1}\eg^\top V_{\tree_v}^{-1} \yb_{\tree_v}.$ \qed
	\end{remark}
	
	\begin{figure}[tbh]
		\centering
		\begin{tikzpicture}[
			level 1/.style = {sibling distance = 4.7cm}, 
			level 2/.style = {sibling distance = 3.7cm},
			]
			\node (g) {$v$}
			child { node (1) {$\gamma_C[0]$ }
				child { node (3) {$\gamma_L[0]$}}
				child { node (4) {$\gamma_L[1]$ }}
			}
			child { node (2) { $\gamma_C[1]$}
				child {node (5) {$\gamma_L[2]$}}
			};
		\end{tikzpicture}
		\caption{The sub-graph $\tree_v$ considered in Example \ref{eg:appendix:ec}.}
		\label{fig:appendix:eg_spine_intro}
	\end{figure}
	\begin{example} \label{eg:appendix:ec}
		Consider the sub-graph $\tree_v$ for $v\in\level(\tree, L-2)$ shown in Figure \ref{fig:appendix:eg_spine_intro}. In this case $\eg$ is given by 
		\[ \eg = \begin{bmatrix}
			S(v) & S(v) &  S(v) \\
			S(\gamma_C[0]) & S(\gamma_C[0]) & 0 \\
			0 & 0 & S(\gamma_C[1])\\
			S(\gamma_L[0]) & 0 & 0 \\
			0 & S(\gamma_L[1])  & 0 \\
			0 & 0 &  S(\gamma_L[2])  
		\end{bmatrix}.\]
		\noindent
		As described in Lemma \ref{lem:appendix:gauss_markov_thm} in more detail, the GLS estimator of $\betab_{\gamma_L}$ based on the observations in the sub-graph $\tree_v$ can be expressed as 
		\[\hat{\betab}_{\gamma_L} = (\eg^\top V_{\tree_v}^{-1} \eg)^{-1}\eg^\top V_{\tree_v}^{-1} \yb_{\tree_v}.\]
		
		Also, in this case $\jlc$ and $\jlg$ are given by
		\[\jlc = \begin{bmatrix}
			I_n & I_n & 0 \\
			0 & 0 & I_n
		\end{bmatrix} \textrm{ and } \jlg = \begin{bmatrix}
			I_n & I_n & I_n 
		\end{bmatrix}.\]   \qed
	\end{example}
	
	\subsection{Fine-to-Coarse Recursion}
	
	In the main result of this section, we show that, for each $v\in\tree,$ $\hat{\betab}(v|v-)$ is the BLUE for the information set given by the observations in $\tree_v.$ The following first two intermediate lemmas provide linear algebra results that will be helpful to prove the main result of this subsection. 
	
	\begin{lemma} \label{lem:appendix:var_beta_hat_g}
		The following equality holds
		\begin{align*}
			(S(v)^\top S(v) + (\jlg (\ec^\top \ec)^{-1} \jlg^\top )^{-1} )^{-1} = \jlg (\eg^\top \eg)^{-1} \jlg^\top.
		\end{align*}
	\end{lemma}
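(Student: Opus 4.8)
The plan is to peel off the block structure of $\eg$ and then reduce the identity to two applications of the Woodbury matrix identity. First I would record the structural fact that makes $\eg^\top\eg$ tractable: $\eg = F_{\tree_g}$ has one row block per vertex of $\tree_g$, and since every leaf of $\tree_g$ is a descendant of $g$, the row block of $\eg$ indexed by $g$ itself is $S(g)$ repeated across all of the (leaf-indexed) column blocks, i.e.\ it equals $S(g)\jlg$; the remaining row blocks are, by the definition of $F_W$, exactly $\ec = F_{\gamma}$. Hence $\eg = \begin{bmatrix} S(g)\jlg \\ \ec \end{bmatrix}$ and therefore $\eg^\top\eg = \jlg^\top\bigl(S(g)^\top S(g)\bigr)\jlg + \ec^\top\ec$. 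Writing $Q := S(g)^\top S(g)$, $P := \ec^\top\ec$, and $J := \jlg$, the claimed equality becomes the purely linear-algebraic statement $J(J^\top Q J + P)^{-1}J^\top = \bigl(Q + (J P^{-1} J^\top)^{-1}\bigr)^{-1}$.

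Before manipulating inverses I would check that every matrix in sight is invertible. $Q$ is positive definite because $S(g)$ has full column rank (Assumption~\ref{assump:srank}); $P$ is positive definite because $\ec$ has full column rank, which holds since the leaf-level row blocks of $\ec$ alone form the block-diagonal matrix $\diag(\{S(w)\}_{w\in\gamma_L})$ whose blocks are full column rank; $J = \jlg$ has full row rank (each of its blocks is $I_n$); and consequently $R := J P^{-1} J^\top$ is positive definite, as is $J^\top Q J + P = \eg^\top\eg$. With these in hand, applying $(A + UCV)^{-1} = A^{-1} - A^{-1}U(C^{-1} + VA^{-1}U)^{-1}VA^{-1}$ with $A = P$, $U = J^\top$, $C = Q$, $V = J$ and then conjugating by $J$ on the left and $J^\top$ on the right collapses the right-hand side to $R - R(Q^{-1} + R)^{-1}R$. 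A second application of Woodbury, in the form $(R^{-1} + Q)^{-1} = R - R(Q^{-1} + R)^{-1}R$, identifies this with $\bigl(Q + R^{-1}\bigr)^{-1} = \bigl(Q + (J P^{-1} J^\top)^{-1}\bigr)^{-1}$, which is precisely the left-hand side.

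The only place any care is needed is the first step: confirming that the $g$-row block of $\eg$ really is $S(g)\jlg$ — that is, that parent--child consistency is exactly what forces this block structure — after which the rest is bookkeeping to verify the Woodbury hypotheses (invertibility of $P$, $Q$, $R$, and the relevant Schur complements). I would also remark, as orientation, that $\jlg(\eg^\top\eg)^{-1}\jlg^\top$ is (in whitened coordinates) the variance one expects for the fine-to-coarse estimate $\hat{\betab}(g\mid g-)$, while the left-hand side is its inverse-variance-weighted-combination form; so this lemma is the linear-algebra content that will feed the BLUE argument in the next subsection, and no probabilistic input is needed for its proof.
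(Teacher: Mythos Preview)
Your proposal is correct and follows essentially the same route as the paper: both arguments rest on the block decomposition $\eg^\top\eg = \jlg^\top S(g)^\top S(g)\,\jlg + \ec^\top\ec$ and then reduce the identity via two applications of the Sherman--Morrison--Woodbury lemma. The only cosmetic difference is that the paper works from the left-hand side toward the right (using the corollary $(A+B)^{-1}=A^{-1}-(A+AB^{-1}A)^{-1}$ first and then standard SMW), whereas you start from the right-hand side and collapse $J(J^\top QJ+P)^{-1}J^\top$ to $(Q+R^{-1})^{-1}$; your version is also more explicit about the invertibility checks, which the paper leaves implicit.
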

	\begin{proof}
		Given two non-singular matrices, $A,B\in\rr^{N\times N},$ a corollary of the Sherman–Morrison-Woodbury (SMW) lemma is that $(A+B)^{-1}=A^{-1}-(A+A B^{-1}A)^{-1},$ which implies
		\begin{align}
			& (S(v)^\top S(v) + (\jlg (\ec^\top \ec)^{-1} \jlg^\top )^{-1} )^{-1} \nonumber \\
			&= (S(v)^\top S(v))^{-1} - \left( S(v)^\top S(v) + S(v)^\top S(v)    \jlg (\ec^\top \ec)^{-1} \jlg^\top  S(v)^\top S(v) \right)^{-1}.  \label{eq:appendix:var_beta_hat_g}
		\end{align}
		Using $\check{S}$ as shorthand for $S(v)^\top S(v)$ to keep our notation concise in the remainder of the proof, a direct application of the SMW lemma to the second of the two terms in \eqref{eq:appendix:var_beta_hat_g} implies 
		\begin{align*}
			& (\check{S} + (\jlg (\ec^\top \ec)^{-1} \jlg^\top )^{-1} )^{-1} \\
			&= \check{S}^{-1}  - \check{S}^{-1} + \check{S}^{-1} \check{S} \jlg \left(    \ec^\top \ec + \jlg^\top \check{S} \check{S}^{-1} \check{S} \jlg \right)^{-1} \jlg^\top \check{S} \check{S}^{-1} \nonumber \\
			&= \jlg \left( \ec^\top \ec +  \jlg^\top S(v)^\top S(v)\jlg \right)^{-1} \jlg^\top = \jlg (\eg^\top \eg)^{-1} \jlg^\top.
		\end{align*}
	\end{proof}
	
	The next several lemmas will use the orthogonal projection matrix $M,$ which we define as
	\[ M= \begin{bmatrix}
		I_{m(v)} & 0 \\ 
		0 & \ec (\ec^\top \ec)^{-1} \jlg^\top \left(\jlg (\ec^\top \ec)^{-1} \jlg^\top\right)^{-1} \jlg (\ec^\top \ec)^{-1} \ec^\top
	\end{bmatrix},\]
	where $m(v)$ denotes the number of rows of $S(v).$
	
	\begin{lemma} \label{lem:appendix:inv_gram_mat_quad_equality}
		The matrix $M$ satisfies
		\begin{align}
			\jlg (\eg^\top M \eg)^{+} \jlg^\top  = \jlg (\eg^\top \eg)^{-1} \jlg^\top. \label{eq:appendix:inv_gram_mat_quad_equality}
		\end{align}
	\end{lemma}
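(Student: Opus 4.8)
The plan is to evaluate $\eg^\top M\eg$ in closed form and thereby reduce the claim to Lemma \ref{lem:appendix:var_beta_hat_g}. First I would record the block decomposition $\eg = \begin{bmatrix} S(g)\jlg \\ \ec \end{bmatrix}$, which is immediate from the definition of $F_W$ applied to $W=\tree_g$ (with the usual consistent vertex ordering, $g$ first): since every leaf of $\tree_g$ is a descendant of $g$, the rows of $\eg$ attached to $g$ carry $S(g)$ in every block column, i.e.\ they equal $S(g)\jlg$, and the remaining rows are exactly those of $\ec = F_\gamma$ — this is the structure already displayed in Example \ref{eg:appendix:ec}. Note that $M$ is block diagonal conformably with this row partition, with first block $I_{m(g)}$ and second block $P := \ec (\ec^\top\ec)^{-1}\jlg^\top(\jlg(\ec^\top\ec)^{-1}\jlg^\top)^{-1}\jlg(\ec^\top\ec)^{-1}\ec^\top$.

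Next I would compute, using this decomposition and the block-diagonal form of $M$, that $\eg^\top M\eg = \jlg^\top S(g)^\top S(g)\jlg + \ec^\top P\ec$. Here the two outer factors $\ec^\top\ec$ in $\ec^\top P\ec$ cancel against $(\ec^\top\ec)^{-1}$, leaving $\ec^\top P\ec = \jlg^\top(\jlg(\ec^\top\ec)^{-1}\jlg^\top)^{-1}\jlg$. Hence $\eg^\top M\eg = \jlg^\top K\jlg$, where $K := S(g)^\top S(g) + (\jlg(\ec^\top\ec)^{-1}\jlg^\top)^{-1}$; since $S(g)$ has full column rank and $\jlg(\ec^\top\ec)^{-1}\jlg^\top$ is positive definite, $K$ is symmetric positive definite and in particular invertible. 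Lemma \ref{lem:appendix:var_beta_hat_g} gives exactly $K^{-1} = \jlg(\eg^\top\eg)^{-1}\jlg^\top$, so the lemma will follow once I show $\jlg(\jlg^\top K\jlg)^{+}\jlg^\top = K^{-1}$.

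For this last step I would use that $\jlg$ has full row rank (each of its blocks is $I_n$, so $\jlg\jlg^\top = \card(\gamma_L)\,I_n$). Setting $J^{+} := \jlg^\top(\jlg\jlg^\top)^{-1}$, so that $\jlg J^{+} = I_n$ and $(J^{+})^\top\jlg^\top = I_n$, I would claim $(\jlg^\top K\jlg)^{+} = J^{+} K^{-1}(J^{+})^\top$ and verify the four Moore–Penrose conditions: each of $XYX=X$, $YXY=Y$, $(XY)^\top=XY$, $(YX)^\top=YX$ collapses after substituting $\jlg J^{+}=I_n$ (and symmetry of $K^{-1}$ and of $(\jlg\jlg^\top)^{-1}$), since $XY=YX=\jlg^\top(\jlg\jlg^\top)^{-1}\jlg$. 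Then $\jlg(\jlg^\top K\jlg)^{+}\jlg^\top = (\jlg J^{+})K^{-1}((J^{+})^\top\jlg^\top) = K^{-1}$, which together with $K^{-1}=\jlg(\eg^\top\eg)^{-1}\jlg^\top$ finishes the proof.

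The only genuinely non-mechanical point is this final step: because $\eg^\top M\eg=\jlg^\top K\jlg$ is singular as soon as $g$ has more than one leaf descendant, one cannot simply invert and must instead argue about the pseudoinverse of a congruence-type product $\jlg^\top K\jlg$; the factorization $J^{+}K^{-1}(J^{+})^\top$ with the Penrose verification (equivalently, a reverse-order-law argument applied to $(K^{1/2})^\top\jlg$) is what makes it go through. Everything preceding it — the block form of $\eg$, the cancellation in $\ec^\top P\ec$, and the appeal to Lemma \ref{lem:appendix:var_beta_hat_g} — is routine bookkeeping with the definitions of $F_W$ and $M$.
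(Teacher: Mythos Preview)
Your proof is correct and follows the same overall architecture as the paper's: both compute $\eg^\top M\eg = \jlg^\top K\jlg$ with $K = S(g)^\top S(g) + (\jlg(\ec^\top\ec)^{-1}\jlg^\top)^{-1}$, show that $\jlg(\jlg^\top K\jlg)^{+}\jlg^\top = K^{-1}$, and then invoke Lemma~\ref{lem:appendix:var_beta_hat_g}. The only difference is in the pseudoinverse step: the paper writes $\jlg = \mathbf{1}_\kappa^\top\otimes I_n$ so that $\jlg^\top K\jlg = \mathbf{1}_{\kappa\times\kappa}\otimes K$ and uses $(A\otimes B)^{+}=A^{+}\otimes B^{+}$ together with $\mathbf{1}_{\kappa\times\kappa}^{+}=\mathbf{1}_{\kappa\times\kappa}/\kappa^2$, whereas you verify the four Moore--Penrose conditions for the candidate $J^{+}K^{-1}(J^{+})^\top$ directly. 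Both computations are short and yield the same conclusion; your route has the mild advantage of not relying on the Kronecker-product pseudoinverse identity, while the paper's is perhaps slightly more transparent once one recognizes the tensor structure of $\jlg$.
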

	\begin{proof}
		Let $\kappa=\card(\gamma_L).$ Note that the definition of $M$ implies
		\begin{align*}
			& \jlg (\eg^\top M \eg)^{+} \jlg^\top = \jlg \left( \jlg^\top \left(S(v)^\top S(v) + (\jlg (\ec^\top \ec)^{-1} \jlg^\top)^{-1} \right)\jlg \right)^{+}  \jlg^\top \\
			& = \jlg \left(\mathbf{1}_{\kappa\times \kappa} \otimes \left(S(v)^\top S(v) + (\jlg (\ec^\top \ec)^{-1} \jlg^\top )^{-1}\right) \right)^{+} \jlg^\top 
		\end{align*}
		\noindent 
		Since $\mathbf{1}_{\kappa\times \kappa}^+ = \mathbf{1}_{\kappa\times \kappa}/\kappa^2$ and since the second matrix in the Kronecker product above is invertible by assumption \ref{assump:srank}, we have
		\begin{align*}
			&\jlg (\eg^\top M \eg)^{+} \jlg^\top  \\
			&=  \left( \mathbf{1}^\top_\kappa \otimes I_n  \right) \left(\mathbf{1}_{\kappa\times \kappa} / \kappa^2 \otimes \left(S(v)^\top S(v) + (\jlg (\ec^\top \ec)^{-1} \jlg^\top )^{-1}\right)^{-1} \right) \left( \mathbf{1}_\kappa \otimes I_n  \right) \\
			&=  \left(\mathbf{1}^\top_\kappa \mathbf{1}_{\kappa\times \kappa} \mathbf{1}_\kappa/ \kappa^2 \right) \otimes \left(S(v)^\top S(v) + (\jlg (\ec^\top \ec)^{-1} \jlg^\top )^{-1}\right)^{-1}  \\
			& =  \left(S(v)^\top S(v) + (\jlg (\ec^\top \ec)^{-1} \jlg^\top )^{-1}\right)^{-1},
		\end{align*}
		\noindent
		so Lemma \ref{lem:appendix:var_beta_hat_g} implies $\jlg (\eg^\top M \eg)^{+} \jlg^\top  = \jlg (\eg^\top \eg)^{-1} \jlg^\top.$
	\end{proof}
	
	The following lemma provides an alternative formulation for the GLS estimate of $\betab_g$ based on the observations associated with the vertices in $\tree_v,$ for the case in which $V_{\tree_v} = I.$
	
	\begin{lemma} \label{lem:appendix:hat_betab_prime}
		Suppose $V_{\tree_v} = I,$ and let  $\hat{\betab}_v' = \jlg (M \eg)^{+} \yb_{\tree_v}.$ Then, $\hat{\betab}_v'$ and  $\hat{\betab}_v = \jlg (\eg^\top \eg)^{-1} \eg^\top \yb_{\tree_v}$ are equal. 
	\end{lemma}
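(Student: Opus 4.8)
The plan is to prove the stronger matrix identity $\jlg(M\eg)^+ = \jlg(\eg^\top\eg)^{-1}\eg^\top$; since $\hat{\betab}_g' = \jlg(M\eg)^+\yb_{\tree_g}$ and $\hat{\betab}_g = \jlg(\eg^\top\eg)^{-1}\eg^\top\yb_{\tree_g}$ are then the same linear function of $\yb_{\tree_g},$ the lemma follows immediately. Throughout I would use the block decomposition $\eg = \begin{bmatrix} S(g)\jlg \\ \ec \end{bmatrix},$ which holds because every leaf of $\tree_g$ descends from $g$ (so the row-block of $F_{\tree_g}$ corresponding to $g$ is $S(g)$ in every leaf-column, i.e.\ $S(g)\jlg$) while the rows corresponding to the proper descendants of $g$ form exactly $\ec = F_\gamma$; see Example \ref{eg:appendix:ec}. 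I will also abbreviate $G := S(g)^\top S(g)$ and $K := \jlg(\ec^\top\ec)^{-1}\jlg^\top,$ both invertible $n\times n$ matrices under Assumption \ref{assump:srank} (the latter since $\ec$ inherits full column rank from its leaf blocks).

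The first step is to strip the projection $M$ out of the pseudoinverse. Since $M$ is an orthogonal projection, $M^\top = M = M^2,$ so the identity $A^+ = (A^\top A)^+A^\top$ applied to $A = M\eg$ gives $(M\eg)^+ = (\eg^\top M\eg)^+\eg^\top M.$ A short computation with the block form shows $\eg^\top M = \jlg^\top R,$ where $R := \begin{bmatrix} S(g)^\top & K^{-1}\jlg(\ec^\top\ec)^{-1}\ec^\top\end{bmatrix};$ here $\jlg^\top$ factors out on the left because the lower block $P$ of $M$ satisfies $\ec^\top P = \jlg^\top K^{-1}\jlg(\ec^\top\ec)^{-1}\ec^\top.$ Consequently $\jlg(M\eg)^+ = \jlg(\eg^\top M\eg)^+\jlg^\top R,$ and by Lemma \ref{lem:appendix:inv_gram_mat_quad_equality} the prefactor collapses, $\jlg(\eg^\top M\eg)^+\jlg^\top = \jlg(\eg^\top\eg)^{-1}\jlg^\top;$ reusing $\jlg^\top R = \eg^\top M$ then yields $\jlg(M\eg)^+ = \jlg(\eg^\top\eg)^{-1}\eg^\top M.$

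It remains to show $\jlg(\eg^\top\eg)^{-1}\eg^\top M = \jlg(\eg^\top\eg)^{-1}\eg^\top,$ i.e.\ $\jlg(\eg^\top\eg)^{-1}\eg^\top(I-M) = 0.$ With the block form, the only nonzero block of $\eg^\top(I-M)$ is $\ec^\top(I-P) = -W\ec^\top,$ where $W := \jlg^\top K^{-1}\jlg(\ec^\top\ec)^{-1} - I,$ so it suffices to check $\jlg(\eg^\top\eg)^{-1}W = 0.$ For this I would expand $\eg^\top\eg = \ec^\top\ec + \jlg^\top G\,\jlg$ by the Sherman–Morrison–Woodbury identity to get $\jlg(\eg^\top\eg)^{-1} = G^{-1}(G^{-1}+K)^{-1}\jlg(\ec^\top\ec)^{-1};$ multiplying on the right by $W$ and using $\jlg(\ec^\top\ec)^{-1}\jlg^\top = K$ turns the bracketed factor into $K K^{-1}\jlg(\ec^\top\ec)^{-1} - \jlg(\ec^\top\ec)^{-1} = 0.$ Combining the two displays gives $\jlg(M\eg)^+ = \jlg(\eg^\top\eg)^{-1}\eg^\top,$ hence $\hat{\betab}_g' = \hat{\betab}_g.$

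All the computations are routine Sherman–Morrison–Woodbury and Kronecker manipulations; the one point that needs genuine care is that $\eg^\top M\eg$ is singular (it has rank $n,$ not $n\card(\gamma_L)$), so one must keep the Moore–Penrose pseudoinverse throughout, and it is precisely Lemma \ref{lem:appendix:inv_gram_mat_quad_equality} that licenses replacing $(\eg^\top M\eg)^+$ sandwiched between $\jlg$ and $\jlg^\top$ by $(\eg^\top\eg)^{-1}.$ A cleaner but less self-contained alternative would be to note that $\hat{\betab}_g'$ is linear in $\yb_{\tree_g}$ and, via the same algebra, unbiased with variance attaining the Aitken bound, and then invoke uniqueness of the BLUE; I prefer the direct route since it reuses Lemmas \ref{lem:appendix:var_beta_hat_g} and \ref{lem:appendix:inv_gram_mat_quad_equality} verbatim.
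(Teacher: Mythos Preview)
Your proof is correct and takes a genuinely different route from the paper's. The paper argues statistically: it shows that $\hat{\betab}_g'$ is linear and unbiased (the latter via a row-space argument, exhibiting a matrix $Q$ with $Q(M\eg)=\jlg$ so that $\jlg(M\eg)^+(M\eg)=\jlg$), computes $\var(\hat{\betab}_g')=\jlg(\eg^\top M\eg)^+\jlg^\top$, invokes Lemma~\ref{lem:appendix:inv_gram_mat_quad_equality} to match this with $\var(\hat{\betab}_g)$, and then appeals to uniqueness of the BLUE. You instead establish the stronger matrix identity $\jlg(M\eg)^+=\jlg(\eg^\top\eg)^{-1}\eg^\top$ directly, by factoring $\eg^\top M=\jlg^\top R$, applying Lemma~\ref{lem:appendix:inv_gram_mat_quad_equality} to reach $\jlg(\eg^\top\eg)^{-1}\eg^\top M$, and then killing the residual $I-M$ with a Sherman--Morrison--Woodbury computation showing $\jlg(\eg^\top\eg)^{-1}W=0$. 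Your approach is more self-contained as pure linear algebra and yields a slightly stronger conclusion (equality of the linear maps themselves, not just of their values on $\yb_{\tree_g}$), while the paper's BLUE argument is shorter, avoids the second SMW step entirely, and foregrounds the statistical interpretation that drives the rest of the appendix. Amusingly, you anticipated the paper's method in your closing remark and chose the other path.
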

	\begin{proof}
		Under the assumption $V_{\tree_v} = I,$ the GLS estimator of $\betab_g = \jlg \betab_{\gamma_L}$ for the information set given by the observations in $\tree_v,$ is the GLS estimator based on the modeling equation $\yb_{\tree_v}=\eg \betab_{\gamma_L} + \ub_{\tree_v},$ \textit{i.e.},  $\hat{\betab}_v = \jlg \hat{\betab}_{\gamma_L} = \jlg (\eg^\top \eg)^{-1} \eg^\top  \yb_{\tree_v}.$  Consider instead the transformed model $M \yb_{\tree_v}=M \eg \betab_{\gamma_L} + M \ub_{\tree_v}.$ Since $M$ is an orthogonal projection matrix, it is generally singular, so the standard GLS estimator is not necessarily defined in this case. In other words, this transformed model is not equivalent to the original modeling equation because $M$ may be rank deficient, so in general it cannot be used to derive the GLS estimator of $\betab_{\gamma_L}.$ However, in this proof we show that the solution to this transformed model still provides the GLS estimator of $\betab_{v}$ for the relevant information set, and thus $\hat{\betab}_v = \hat{\betab}_v'.$ We show $\hat{\betab}_v'$ is the GLS estimator by showing that it is linear, unbiased, and that its variance matrix is identical to that of $\hat{\betab}_v.$ 
		
		To do this, we first prove an intermediate equality. Let $A=M \eg$ and let $Q$ be defined as the block matrix
		\begin{align*}
			& Q = \begin{bmatrix}
				(S(v)^\top S(v))^{-1} S(v)^\top & 0
			\end{bmatrix}.
		\end{align*}
		The definitions of $M$ and $Q$ imply $Q A = Q M \eg = (S(v)^\top S(v))^{-1} S(v)^\top F_g = \jlg,$ which implies $\jlg$ is in the row space of $A.$ Thus, the property of the pseudoinverse that $A^{+} A$ is the orthogonal projection matrix onto the row space of $A$ implies $A^{+} A \jlg^\top = \jlg^\top.$\footnote{This can be shown by replacing $A$ and $A^+$ in $A^{+} A$ with their singular value decompositions and then simplifying.} Since orthogonal projection matrices are symmetric, this implies 
		\begin{align}
			\jlg A^{+} A  = \jlg. \label{eq:appendix:jlg_a_pinv_a_is_jlg}
		\end{align}
		
		The solution to the transformed model that we consider is
		\begin{align*}
			\hat{\betab}_{\gamma_L}' = \argmin_{\betab\in\rr^{n \card(\gamma_L)}} \lVert \eg \betab - \yb_{\tree_v} \rVert_2^2  = (M \eg)^{+} \yb_{\tree_v} = A^+ \yb_{\tree_v}.\label{eq:appendix:aitkens_thm_specific_case}
		\end{align*}
		Clearly $\hat{\betab}_{v}'=\jlg \hat{\betab}_{\gamma_L}'=\jlg A^+ \yb_{\tree_v}$ is linear in $\yb_{\tree_v}.$ Second, to show $\hat{\betab}_{v}'$ is unbiased, we have
		\begin{align*}
			E(\jlg \hat{\betab}_{\gamma_L}')= \jlg A^+ E(\yb_{\tree_v}) = \jlg A^+ \left(\eg \betab_{\gamma_L} + E(\ub_{\tree_v})\right) = \jlg A^+ \eg \betab_{\gamma_L}.
		\end{align*}
		The property of the pseudoinverse that $A^+=(A^\top A)^{+} A^\top$ and the fact that $M^2 =M$ imply
		\begin{align*}
			\jlg A^+ \eg \betab_{\gamma_L} =&  \jlg (\eg^\top M^\top M\eg)^{+} \eg^\top M^\top  \eg \betab_{\gamma_L} = \jlg \left( (A^\top A)^{+} A^\top \right) M\eg \betab_{\gamma_L} \\
			=& \jlg A^{+}  A \betab_{\gamma_L}.
		\end{align*}
		Thus, property \eqref{eq:appendix:jlg_a_pinv_a_is_jlg} implies $E(\jlg \hat{\betab}_{\gamma_L}') = \jlg \betab_{\gamma_L}  = \betab_g,$ so $\hat{\betab}_{v}'$ is unbiased. Third, after deriving the variance matrix and simplifying, we have
		\begin{align*}
			\var(\hat{\betab}'_{v}) =&  E((\hat{\betab}'_{v} -\betab_{v}) (\hat{\betab}_{v}^{\prime}-\betab_{v})^\top) = \jlg A^{+} E( \ub_{\tree_v} \ub_{\tree_v}^\top) A^{+^\top} \jlg^\top = \jlg A^{+} A^{+^\top} \jlg^\top  \\
			=& \jlg (A^\top A)^{+} \jlg^\top,
		\end{align*}
		so Lemma \ref{lem:appendix:inv_gram_mat_quad_equality} implies $\var(\hat{\betab}'_{v}) = \var(\hat{\betab}_{v}),$ which implies the result.  
	\end{proof}
	
	The next lemma is used in the induction step in the proof of the main result of this subsection, \textit{i.e.}, Theorem \ref{thm:appendix:g_cond_g_minus_is_blue_for_any_level}.
	
	\begin{lemma} \label{lem:appendix:g_cond_g_minus_is_blue_for_level_L_minus_one}
		Suppose each $\hat{\betab}(c|c-) \in \{\hat{\betab}(c|c-)\}_{c\in\gamma_C}$ is equal to the GLS estimator for the information set given by the observations in $\tree_c.$ Then, $\hat{\betab}(v | v-)$ is the GLS estimator for the information set given by the observations in $\tree_v.$ 
	\end{lemma}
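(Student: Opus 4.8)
\emph{Proof proposal.} This lemma is the induction step for Theorem~\ref{thm:appendix:g_cond_g_minus_is_blue_for_any_level}: $g$ is an arbitrary parent vertex all of whose children satisfy the BLUE property. The plan is to show that $\hat{\betab}(g | g-)$, defined recursively by \eqref{eq:up_g_g_minus1}--\eqref{eq:up_g_g_minus2}, coincides with the GLS estimator of $\betab(g)$ based on all of the observations in $\tree_g$, which is the BLUE for that information set by Lemma~\ref{lem:appendix:gauss_markov_thm} and the remark following it. First I would reduce to the case $V_{\tree_g}=I$: replacing $S(r)\mapsto\var(\ub(r))^{-1/2}S(r)$ and $\yb(r)\mapsto\var(\ub(r))^{-1/2}\yb(r)$ for every $r\in\tree_g$ (assuming, as usual, these variances are nonsingular) leaves the coefficient vectors $\betab(\cdot)$ unchanged, makes every error variance on $\tree_g$ equal to the identity, leaves the estimators $\hat{\betab}(g|g)$, $\{\hat{\betab}(c|c-)\}_{c\in\gamma_C}$ and $\hat{\betab}(g|g-)$ unchanged (each being invariant under this whitening), and preserves the property of being the BLUE for a given information set; so we may assume $V_{\tree_g}=I$.

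Next I would combine the children. By the induction hypothesis and uniqueness of the BLUE, each $\hat{\betab}(c|c-)$ equals the GLS estimate of $\betab(c)$ based on $\yb_{\tree_c}$, i.e., the sum over the leaves of $\tree_c$ of the corresponding blocks of $(F_{\tree_c}^\top F_{\tree_c})^{-1}F_{\tree_c}^\top\yb_{\tree_c}$. A row of $\ec=F_\gamma$ indexed by a vertex $r$ of a child subtree $\tree_c$ is supported on the leaf columns lying in $\tree_c$, so $\ec$ is block diagonal across the child subtrees; hence, writing $\yb_\gamma:=\stack(\{\yb_{\tree_c}\}_{c\in\gamma_C})$, the GLS estimate $(\ec^\top\ec)^{-1}\ec^\top\yb_\gamma$ of $\betab_{\gamma_L}$ decomposes child-by-child, and applying $\jlg$ (which sums all of the leaf blocks) yields
\begin{align*}
  \hat{\betab}(g | \children(g)-) &= \sum_{c\in\gamma_C}\hat{\betab}(c|c-) = \jlg(\ec^\top\ec)^{-1}\ec^\top\yb_\gamma, \\
  \var(\hat{\betab}(g | \children(g)-)) &= \sum_{c\in\gamma_C}\var(\hat{\betab}(c|c-)) = \jlg(\ec^\top\ec)^{-1}\jlg^\top .
\end{align*}
Moreover, applying Lemma~\ref{lem:appendix:gauss_markov_thm} with $D$ in \eqref{eq:appendix:aitkens_thm2} taken to be the block matrix $[\,I_n\ \cdots\ I_n\,]$ that performs this summation (exactly as in the remark following that lemma), $\hat{\betab}(g|\children(g)-)$ is the BLUE of $\betab(g)$ for the information set consisting of all observations in $\tree_g$ other than $\yb(g)$.

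Then I would bring in $\yb(g)$. Recall that $\eg=F_{\tree_g}$ has its row block indexed by $g$ equal to $S(g)\jlg$ and its remaining row blocks equal to $\ec$; consequently the definition of the projection matrix $M$ gives $M\eg=\widetilde A\,\jlg$, where $\widetilde A$ has top block $S(g)$ and bottom block $\ec(\ec^\top\ec)^{-1}\jlg^\top(\jlg(\ec^\top\ec)^{-1}\jlg^\top)^{-1}$. Since $\widetilde A$ has full column rank (its top block is $S(g)$) and $\jlg$ has full row rank, this factorization gives $\jlg(M\eg)^{+}=(\widetilde A^\top\widetilde A)^{-1}\widetilde A^\top$. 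A short computation, using the displays of the previous paragraph and $\var(\hat{\betab}(g|g))=(S(g)^\top S(g))^{-1}$, gives
\begin{align*}
  \widetilde A^\top\widetilde A &= S(g)^\top S(g)+\big(\jlg(\ec^\top\ec)^{-1}\jlg^\top\big)^{-1} = \var(\hat{\betab}(g|g))^{-1}+\var(\hat{\betab}(g|\children(g)-))^{-1}, \\
  \widetilde A^\top\yb_{\tree_g} &= \var(\hat{\betab}(g|g))^{-1}\hat{\betab}(g|g)+\var(\hat{\betab}(g|\children(g)-))^{-1}\hat{\betab}(g|\children(g)-).
\end{align*}
By Lemma~\ref{lem:appendix:hat_betab_prime} the BLUE of $\betab(g)$ based on the observations in $\tree_g$ equals $\jlg(M\eg)^{+}\yb_{\tree_g}=(\widetilde A^\top\widetilde A)^{-1}\widetilde A^\top\yb_{\tree_g}$; by the two displays above, together with Lemma~\ref{lem:appendix:var_beta_hat_g} (which identifies $(\widetilde A^\top\widetilde A)^{-1}$ with $\var(\hat{\betab}(g|g-))$ of \eqref{eq:up_g_g_minus1}), this is exactly the right-hand side of \eqref{eq:up_g_g_minus2}. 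Hence $\hat{\betab}(g|g-)$ is the BLUE, which closes the induction step.

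I expect the main obstacle to be the linear-algebra bookkeeping in the last two steps: verifying $M\eg=\widetilde A\,\jlg$ and the pseudoinverse identity $\jlg(M\eg)^{+}=(\widetilde A^\top\widetilde A)^{-1}\widetilde A^\top$ (which rests on the rank structure of $\jlg$ and $\widetilde A$ and is essentially the content already packaged in Lemmas~\ref{lem:appendix:inv_gram_mat_quad_equality} and~\ref{lem:appendix:hat_betab_prime}), together with confirming that the child-subtree block decomposition of $\ec$ really does give $\hat{\betab}(g|\children(g)-)=\sum_{c}\hat{\betab}(c|c-)$ with the stated variance. A more probabilistic alternative would observe that $\hat{\betab}(g|g)$ and $\hat{\betab}(g|\children(g)-)$ are independent BLUEs of $\betab(g)$ for the disjoint information sets $\{\yb(g)\}$ and $\{\yb_{\tree_c}\}_{c\in\gamma_C}$, and that the inverse-variance-weighted mean of two such estimators (Lemma~\ref{lem:inv_var_wtd_mean}) coincides with the GLS estimator from the union via the two-block GLS formula; that route still needs $\hat{\betab}(g|\children(g)-)$ expressed as a GLS estimate in a reduced model, which is exactly what Lemma~\ref{lem:appendix:hat_betab_prime} supplies.
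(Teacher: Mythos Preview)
Your proposal is correct and follows essentially the same route as the paper: reduce to $V_{\tree_g}=I$, express $\hat{\betab}(g\mid\children(g)-)$ and its variance as $\jlg(\ec^\top\ec)^{-1}\ec^\top\yb_\gamma$ and $\jlg(\ec^\top\ec)^{-1}\jlg^\top$, and then identify the inverse-variance-weighted mean \eqref{eq:up_g_g_minus2} with $\jlg(M\eg)^{+}\yb_{\tree_g}$ so that Lemma~\ref{lem:appendix:hat_betab_prime} finishes the argument. The only organizational difference is that you make the rank factorization $M\eg=\widetilde A\,\jlg$ explicit and read off $\jlg(M\eg)^{+}=(\widetilde A^\top\widetilde A)^{-1}\widetilde A^\top$ from the full-column-rank/full-row-rank pseudoinverse product rule, whereas the paper starts from \eqref{eq:up_g_g_minus2}, substitutes, and uses Lemma~\ref{lem:appendix:inv_gram_mat_quad_equality} to rewrite $\jlg(\eg^\top\eg)^{-1}\jlg^\top$ as $\jlg(\eg^\top M\eg)^{+}\jlg^\top$ before collapsing the expression to $\jlg(M\eg)^{+}\yb_{\tree_g}$; the underlying computation is the same.
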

	\begin{proof}
		To keep the notation concise, in this proof we will assume that $V_{\tree_v}$ is an identity matrix. This is without loss of generality because, if it does not hold for the original linear model $\yb_{\tree_v} = \eg \betab_{\gamma_L} + \ub_{\tree_v},$ it is possible to formulate an equivalent model that satisfies this assumption by left multiplying $\var(\ub_{\tree_v})^{-1/2}$ on both sides of this equality. In other words, it is always possible to redefine $\eg,\yb_{\tree_v},$ and $ \ub_{\tree_v}$ as $\var(\ub_{\tree_v})^{-1/2}\eg,\var(\ub_{\tree_v})^{-1/2}\yb_{\tree_v},$ and $ \var(\ub_{\tree_v})^{-1/2}\ub_{\tree_v},$ respectively, to ensure this assumption holds without impacting the minimizer 
		\[\hat{\betab}_{\gamma_L}=\argmin_{\betab_{\gamma_L}\in\rr^{n \card(\gamma_L)}} \lVert  V_{\tree_v}^{-1/2} (\eg \betab_{\gamma_L} - \yb_{\tree_v}) \rVert_2^2.\]
		
		
		Lemma \ref{lem:appendix:gauss_markov_thm} implies that, for the relevant information set, $\hat{\betab}_{\gamma_L} = (\eg^\top \eg)^{-1} \eg^\top \yb_{\tree_v}$ is the GLS estimator of $\betab_{\gamma_L}$ the relevant information set. So, to prove this result we need to show that $\hat{\betab}(v | v-) = \jlg \hat{\betab}_{\gamma_L}= \jlg (\eg^\top \eg)^{-1} \eg^\top \yb_{\tree_v}.$ To do so, we will begin by translating the components of \eqref{eq:up_g_g_minus2} into the matrix notation of this appendix. Specifically, first let $\hat{\betab}_{\gamma_C|\gamma_C-}=\stack(\{\hat{\betab}(c|c-)\}_{c\in\gamma_C}).$ Since each $\hat{\betab}(c|c-) \in \{\hat{\betab}(c|c-)\}_{c\in\gamma_C}$ is the GLS estimator for the information set given by the observations in $\tree_c,$ Aitken's Theorem implies that $\hat{\betab}_{\gamma_C|\gamma_C-}= \jlc (\ec^\top \ec)^{-1}\ec^\top \yb_\gamma,$ and thus 
		\begin{align*}
			\hat{\betab}(v|\children(v)-) &=  \jcg \jlc (\ec^\top \ec)^{-1}\ec^\top \yb_\gamma\\
			&=  \jlg (\ec^\top \ec)^{-1}\ec^\top \yb_\gamma,
		\end{align*}
		which has a variance matrix given by  $\var(\hat{\betab}(v|\children(v)-))=  \jlg (\ec^\top \ec)^{-1} \jlg^\top.$ Second, equations \eqref{eq:up_init1} and \eqref{eq:up_init2} can be written as, $\hat{\betab}(v|v) = (S(v)^\top S(v))^{-1} S(v)^\top \yb(v)$ and $\var(\hat{\betab}(v|v)) = (S(v)^\top S(v))^{-1},$ respectively. Third, after translating \eqref{eq:up_g_g_minus1} into the notation introduced in this appendix and then applying Lemma \ref{lem:appendix:var_beta_hat_g}, we have 
		\begin{align*}
			\var(\hat{\betab}(v|v-)) = ( S(v)^\top S(v) + (\jlg (\ec^\top \ec)^{-1} \jlg^\top )^{-1} )^{-1} = \jlg (\eg^\top \eg)^{-1} \jlg^\top. 
		\end{align*}
		
		After substituting each of these components into the definition of $\hat{\betab}(v|v-)$ in \eqref{eq:up_g_g_minus2}, we have
		\begin{align*}
			\hat{\betab}(v|v-) & =\jlg (\eg^\top \eg)^{-1} \jlg^\top \begin{bmatrix}
				S(v)^\top  &  \left(\jlg (\ec^\top \ec)^{-1} \jlg^\top\right)^{-1} \jlg (\ec^\top \ec)^{-1} \ec^\top
			\end{bmatrix} \yb_{\tree_v},
		\end{align*}
		so after using Lemma \ref{lem:appendix:inv_gram_mat_quad_equality} and the definition of $M$ in each of the next two equalities, respectively, we have 
		\begin{align}
			\hat{\betab}(v|v-) & = \jlg (\eg^\top M \eg)^{+}  \begin{bmatrix}
				\jlg^\top S(v)^\top  &  \jlg^\top \left(\jlg (\ec^\top \ec)^{-1} \jlg^\top\right)^{-1} \jlg (\ec^\top \ec)^{-1} \ec^\top
			\end{bmatrix} \yb_{\tree_v}  \nonumber \\
			& = \jlg (\eg^\top M \eg)^{+}  \eg^\top M \yb_{\tree_v} = \jlg (M \eg)^{+} \yb_{\tree_v}.  \label{eq:appendix:intermediate_beta_hat_g_minus} 
		\end{align}
		Since Lemma \ref{lem:appendix:hat_betab_prime} implies \eqref{eq:appendix:intermediate_beta_hat_g_minus} is equal to the GLS estimator of $\betab_g$ for the relevant information set, this implies the final result.
	\end{proof}
	
	The next result provides the final result of this subsection on the optimality of the estimates computed in the fine-to-coarse recursion for a specific information set. 
	
	\begin{theorem}
		\label{thm:appendix:g_cond_g_minus_is_blue_for_any_level}
		For any $v\in\tree,$ the GLS estimator of $\betab(v)$ based on the information set given by the observations in $\tree_v$ is equal to $\hat{\betab}(v | v-).$
	\end{theorem}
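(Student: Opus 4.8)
The plan is to prove this by induction on the level of $g$, working upward from the leaf level $L$ to the root level $0$. The substantive work has already been done in the supporting lemmas: Lemma \ref{lem:appendix:g_cond_g_minus_is_blue_for_level_L_minus_one} is exactly the induction step, and Aitken's Theorem (Lemma \ref{lem:appendix:gauss_markov_thm}) supplies the base case. So the proof is essentially just an organization of these pieces.

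For the base case, take $g \in \level(\tree, L)$, so that $\tree_g = \{g\}$ and the information set in question is simply $\yb(g)$. By equations \eqref{eq:up_init1} and \eqref{eq:up_init3}, $\hat{\betab}(g \mid g-) = \hat{\betab}(g \mid g) = \left(S(g)^\top \var(\ub(g))^{-1} S(g)\right)^{-1} S(g)^\top \var(\ub(g))^{-1} \yb(g)$, which is precisely the GLS estimator for the model $\yb(g) = S(g)\betab(g) + \ub(g)$. Since $S(g)$ has full column rank by Assumption \ref{assump:srank} and $\ub(g)$ is mean zero with finite variance by Assumption \ref{assump:u_iid}, Lemma \ref{lem:appendix:gauss_markov_thm} gives that this estimator is the BLUE of $\betab(g)$ for the information set given by the observations in $\tree_g$.

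For the induction step, assume the claim holds for every vertex in $\level(\tree, l)$ for some $l \in \{1, \dots, L\}$, and let $g \in \level(\tree, l-1)$ be arbitrary. By Assumption \ref{assump:spine_struct}, every child $c \in \children(g) = \gamma_C$ lies in $\level(\tree, l)$, so the inductive hypothesis states that each $\hat{\betab}(c \mid c-)$ is the BLUE of $\betab(c)$ for the information set given by the observations in $\tree_c$. This is exactly the hypothesis of Lemma \ref{lem:appendix:g_cond_g_minus_is_blue_for_level_L_minus_one}, which therefore yields that $\hat{\betab}(g \mid g-)$ is the BLUE of $\betab(g)$ for the information set given by the observations in $\tree_g$. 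Since every vertex of $\tree$ lies in $\level(\tree, l)$ for some $l \in \{0, \dots, L\}$, induction delivers the claim for all $g \in \tree$.

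There is no serious obstacle here, since Lemma \ref{lem:appendix:g_cond_g_minus_is_blue_for_level_L_minus_one} already absorbs the only nontrivial step. The one point meriting a sentence of care is bookkeeping about the information set: the ``observations in $\tree_g$'' referred to in the theorem and in Lemma \ref{lem:appendix:g_cond_g_minus_is_blue_for_level_L_minus_one} must be recognized as the same object at each level, which follows from $\tree_g = \{g\} \cup \bigcup_{c \in \gamma_C} \tree_c$ with the subtrees $\{\tree_c\}_{c \in \gamma_C}$ pairwise disjoint, so that the observations in $\tree_g$ are exactly $\yb(g)$ together with the observations in the subtrees rooted at the children of $g$.
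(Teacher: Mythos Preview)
Your proof is correct and follows essentially the same approach as the paper: induction on the level, with Aitken's Theorem supplying the base case at the leaves and Lemma \ref{lem:appendix:g_cond_g_minus_is_blue_for_level_L_minus_one} supplying the induction step. Your version is slightly more explicit about the bookkeeping (the decomposition $\tree_g = \{g\} \cup \bigcup_{c \in \gamma_C} \tree_c$), but the argument is the same.
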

	\begin{proof}
		Aitken's Theorem implies that, for each leaf vertex $v\in\level(\tree, L),$  $\hat{\betab}(v|v-)$ is the GLS estimator for the information set consisting of the observations in $\tree_v.$ If each $\hat{\betab}(c|c-) \in \{\hat{\betab}(c|c-)\}_{c\in\gamma_C}$ is equal to the GLS estimator for the information set given by the observations in $\tree_c,$ Lemma \ref{lem:appendix:g_cond_g_minus_is_blue_for_level_L_minus_one} implies that $\hat{\betab}(v | v-)$ is the GLS estimator for the information set given by the observations in $\tree_v,$ so this result follows by induction.
	\end{proof}
	
	\subsection{Coarse-to-Fine Recursion}\label{sec:appendix:tdp}
	
	In this subsection we will prove that the full-information GLS estimator of $\betab_{\tau_L},$ which is equal to $\hat{\betab}_{\tau_L}= (\er^\top V_\tree^{-1}\er)^{-1}\er^\top V_\tree^{-1}\yb_\tree,$ is given by $\stack(\{\tilde{\betab}(v)\}_{v\in\level(\tree,L)}).$ Before doing so, it will be helpful to describe the main implication of our assumption that the rows and columns of design matrices are ordered with respect to the parent vertex $v,$ as defined in more detail in Appendix \ref{sec:appendix:notation}. Specifically, this assumption simplifies our derivations slightly because it allows us to write $\er$ as
	
	\begin{align}
		&\er = \begin{bmatrix}
			F_1 & F_2 \\
			0 & \ec
		\end{bmatrix}, \label{eq:appendix:ordered_wrt_vert1}
	\end{align} 
	\noindent
	where $F_1,F_2$ are defined so that the rows of $\begin{bmatrix} F_1 & F_2 \end{bmatrix}$ encode the linear queries of vertices that are ancestors of $v$ and $v$ itself first and then the linear queries of the remaining vertices in $\{v\in \tree\}/\gamma.$ Example \ref{eg:appendix:spine} provides this ordering for the rooted tree depicted in Figure \ref{fig:appendix:eg_spine}.
	
	\begin{figure}[tbh]
		\centering
		\begin{tikzpicture}[
			level 1/.style = {sibling distance = 4.7cm}, 
			level 2/.style = {sibling distance = 3.7cm},
			level 3/.style = {sibling distance = 2.7cm},
			]
			\node (r) {$r$}
			child { node (1) {$c_1 $ }
				child { node (3) {$v$}
					child { node (6) {$\gamma_L[0]$ }}
					child { node (7) {$\gamma_L[1]$ } }}
				child { node (4) {$c_4$ }
					child {node (8) {$c_8$}}
				}
			}
			child { node (2) { $c_2$}
				child {node (5) {$c_5$}
					child { node (9) {$c_9$} }
					child {node (10) {$c_{10}$} }
				}
			};
		\end{tikzpicture}
		\caption{The rooted tree considered in Example \ref{eg:appendix:spine}.}
		\label{fig:appendix:eg_spine}
	\end{figure}
	\begin{example}\label{eg:appendix:spine}
		This example considers how to represent the design matrix $\er$ so that its rows are ordered with respect to vertex $v,$ for the tree provided in Figure \ref{fig:appendix:eg_spine}. After moving each row corresponding to a query that can be expressed by a sum that includes one or more detailed histogram cell counts of vertex $v$ to the initial rows of $\er,$ and then moving rows corresponding to queries of vertices that are descendants of vertex $v$ to the final rows of $\er,$ $\er$ can be defined according to \eqref{eq:appendix:ordered_wrt_vert1} such that
		\begin{align*}
			F_1 = \begin{bmatrix}
				S(r) & S(r) & S(r) \\
				S(c_1) & 0 & 0 \\
				0 & 0 & 0 \\
				0 & S(c_2) & S(c_2) \\
				S(c_4) & 0 & 0 \\
				0 & S(c_5) & S(c_5) \\
				S(c_8) & 0 & 0 \\
				0 & S(c_9) & 0 \\
				0 & 0 & S(c_{10}) \\
			\end{bmatrix},  \; F_2 = \begin{bmatrix}
				S(r) & S(r) \\
				S(c_1) & S(c_1) \\
				S(v) & S(v) \\
				0 & 0 \\
				0 & 0 \\
				0 & 0 \\
				0 & 0 \\
				0 & 0 \\
				0 & 0 \\
			\end{bmatrix}, \textrm{ and } \ec = \begin{bmatrix}
				S(\gamma_L[0]) & 0 \\
				0 & S(\gamma_L[1])
			\end{bmatrix}.
		\end{align*}
		\noindent
		Likewise, if the rows of $\er$ are assumed to be in this order, it also implies that $\yb_\tree$ is defined so that its elements are ordered as 
		\[(\yb(r)^\top, \yb(c_1)^\top, \yb(v)^\top  \yb(c_2)^\top, \yb(c_4)^\top,\yb(c_5)^\top, \yb(c_8)^\top, \yb(c_9)^\top,\yb(c_{10})^\top,\yb(\gamma_L[0])^\top, \yb(\gamma_L[1])^\top)^\top, \] 
		\noindent
		and likewise that $\ub_\tree$ is defined so that its block elements are in this same vertex order. \qed   
	\end{example}
	
	The following lemma provides three basic properties of the matrices $F_1, F_2, $ and $\ec$ that we will use in the proofs below. 
	
	\begin{lemma} \label{lem:appendix:properties_of_F_r}
		If $\er$ is ordered with respect to vertex $v,$ then the matrices $F_1$ and $\ec$ have full column rank, and there exists a block diagonal matrix $C$ such that
		\begin{align*}
			F_2 = C \begin{bmatrix}
				\jlg \\
				\vdots \\
				\jlg \\
				0 \\
				\vdots \\
				0
			\end{bmatrix}.
		\end{align*}
	\end{lemma}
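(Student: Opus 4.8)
The plan is to read off all three assertions from the combinatorial description of $F_\tree=\er$ together with Assumption~\ref{assump:srank}. Throughout, recall from the definition of $F_W$ (taken with $W=\tree$) that the block of $\er$ sitting in the row of a vertex $v$ and the column of a leaf $\ell$ equals $S(v)$ when $\ell=v$ or $\ell$ is a descendant of $v$, and is the zero block otherwise. Fix the parent vertex $g$, and let $w_0=r,w_1,\dots,w_{d-1}$ be the ancestors of $g$ listed from the root downward, with $w_d:=g$. Since $\er$ is ordered with respect to $g$, the first $d+1$ row-blocks of $\begin{bmatrix}F_1 & F_2\end{bmatrix}$ are those of $w_0,\dots,w_d$ in this order, and the remaining row-blocks are those of the other non-descendants of $g$; the columns of $\begin{bmatrix}F_1\\0\end{bmatrix}$ are indexed by $\tau_L/\gamma_L$ and those of $\begin{bmatrix}F_2\\\ec\end{bmatrix}$ by $\gamma_L$.

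First I would dispatch the rank claims. The observation is that for any vertex set $W$, the matrix $F_W$ contains, for every leaf $\ell\in W$, a row-block equal to $S(\ell)$ in the column indexed by $\ell$ and zero elsewhere (a leaf has no proper descendant). Taking $W=\gamma$, whose leaves are exactly $\gamma_L$ and which also index every column of $\ec$, the submatrix of $\ec$ formed by the rows of $\gamma_L$ is $\diag(\{S(\ell)\}_{\ell\in\gamma_L})$, which has full column rank by Assumption~\ref{assump:srank}; hence so does $\ec$. Likewise, each leaf $\ell\in\tau_L/\gamma_L$ is a non-descendant of $g$ and therefore appears as a row of $F_1$, and the columns of $F_1$ are exactly the leaves in $\tau_L/\gamma_L$, so the submatrix of $F_1$ formed by those rows is $\diag(\{S(\ell)\}_{\ell\in\tau_L/\gamma_L})$, again of full column rank (when $g$ is the root this submatrix, and $F_1$ itself, has no columns, so the claim is vacuous). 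Since a matrix having a full-column-rank submatrix of its rows has full column rank, both $F_1$ and $\ec$ have full column rank.

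The substantive step — and the one place where a bit of care with the tree combinatorics is needed — is the structure of $F_2$. Take a non-descendant $w$ of $g$ and a column $\ell\in\gamma_L$. The $(w,\ell)$ block of $F_2$ is $S(w)$ if $\ell$ descends from $w$ and is zero otherwise; since $\ell$ descends from $g$, it descends from $w$ precisely when $w$ is an ancestor of $\ell$, and the ancestors of $\ell$ that are non-descendants of $g$ are exactly $w_0,\dots,w_{d-1},w_d=g$. Moreover, when $w\in\{w_0,\dots,w_d\}$ the vertex $g$ descends from (or equals) $w$, hence so does every $\ell\in\gamma_L$, so the row-block of $F_2$ at such a $w$ is $S(w)\begin{bmatrix}I_n&\cdots&I_n\end{bmatrix}=S(w)\jlg$, while for every other non-descendant $w$ of $g$ the row-block of $F_2$ is zero. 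By the ordering of $\er$ with respect to $g$ the blocks $S(w_0)\jlg,\dots,S(w_d)\jlg$ come first, so
\begin{align*}
F_2=C\begin{bmatrix}\jlg\\\vdots\\\jlg\\0\\\vdots\\0\end{bmatrix},\qquad C:=\diag(S(w_0),\dots,S(w_d),0,\dots,0),
\end{align*}
where there are $d+1$ copies of $\jlg$ and the trailing diagonal blocks of $C$ — whose values are irrelevant, as they multiply zero blocks — correspond to the remaining non-descendants of $g$. This is the claimed factorization, so the lemma follows.
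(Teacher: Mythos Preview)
Your proof is correct and follows essentially the same approach as the paper: for the rank claims you isolate the block-diagonal submatrix $\diag(\{S(\ell)\})$ coming from the leaf rows and invoke Assumption~\ref{assump:srank}, and for the factorization of $F_2$ you identify the nonzero row-blocks as precisely those of $g$ and its ancestors. The only cosmetic difference is that the paper takes $C=\diag(\{S(v)\}_{v\in\tree/\gamma})$ with all diagonal blocks nonzero, whereas you zero out the irrelevant trailing blocks; since those blocks multiply zero rows either choice works.
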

	\begin{proof}
		$\ec$ has full column rank because, possibly after permuting the rows and columns of $\ec,$ the rows of $\ec$ include the rows of 
		\begin{align*}
			\begin{bmatrix}
				S(\gamma_L[0]) & 0 & \cdots \\
				0 &  S(\gamma_L[1]) &  \\
				\vdots &   & \ddots 
			\end{bmatrix},
		\end{align*}
		which has full column rank by Assumption \eqref{assump:srank}. Likewise, by the same assumption, $F_1$ has full column rank because, again, 
		possibly after permuting, $F_1$ includes the rows of
		\begin{align*}
			\begin{bmatrix}
				S(\left(\level(\tree,L) / \gamma_L\right)[0]) & 0 & \cdots \\
				0 &  S(\left(\level(\tree,L) / \gamma_L\right)[1]) &  \\
				\vdots &   & \ddots 
			\end{bmatrix}.
		\end{align*}
		
		The property for $F_2$ holds for $C$ defined as the block diagonal matrix with diagonal blocks given by $\{S((\tree/\gamma)[i])\}_{i}.$ In other words, if the number of ancestors of vertex $v$ is denoted by $m,$ then
		\begin{align*}
			F_2 = \begin{bmatrix}
				S((\tree/\gamma)[0]) &  \cdots & S((\tree/\gamma)[0])  \\
				\vdots  &  &  \vdots  \\
				S((\tree/\gamma)[m]) &  \cdots &  S((\tree/\gamma)[m])  \\
				0 & \cdots & 0\\
				\vdots  &  &  \vdots  \\
				0 & \cdots & 0\\
			\end{bmatrix} = \begin{bmatrix}
				S((\tree/\gamma)[0]) & 0 & \cdots \\
				0 &  S((\tree/\gamma)[1]) &  \\
				\vdots &   & \ddots 
			\end{bmatrix} \begin{bmatrix}
				I & \cdots & I \\
				\vdots  &  &   \vdots  \\
				I & \cdots & I \\
				0 & \cdots & 0\\
				\vdots  &  &  \vdots  \\
				0 & \cdots & 0\\
			\end{bmatrix} = C \begin{bmatrix}
				\jlg \\
				\vdots \\
				\jlg \\
				0 \\
				\vdots \\
				0
			\end{bmatrix}.
		\end{align*}
	\end{proof}
	
	The next two lemmas are used to prove the induction step in the proof of the main result.
	
	\begin{lemma} \label{lem:appendix:beta_check_eq}
		Suppose $V_\tree=I,$ and let 
		\begin{align*}
			& R=(\ec^\top \ec)^{-1} \jlg^\top  (\jlg (\ec^\top \ec)^{-1} \jlg^\top)^{-1}  \jlg, \\
			& W=I-R, \textrm{ and } \\
			& \check{\betab}_{\gamma_L} = W (\ec^\top \ec)^{-1}  \ec^\top \yb_\gamma+ R \jrl (\er^\top \er)^{-1} \er^\top \yb_\tree.
		\end{align*}
		\noindent Then for $ \hat{\betab}_{\tau_L}= (\er^\top \er)^{-1} \er^\top \yb_\tree,$ we have 
		\[\check{\betab}_{\gamma_L} =\jrl \hat{\betab}_{\tau_L}.\]
	\end{lemma}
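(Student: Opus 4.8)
The plan is to prove the identity $\check{\betab}_{\gamma_L}=\jrl\hat{\betab}_{\tau_L}$ by direct computation, arranged so that the only nontrivial step is a block inversion of $\er^\top\er$ combined with the structural facts about $F_1,F_2,\ec$ already recorded in Lemma \ref{lem:appendix:properties_of_F_r}. Write $\Sigma:=(\ec^\top\ec)^{-1}$ (well defined since $\ec$ has full column rank), so $R=\Sigma\jlg^\top(\jlg\Sigma\jlg^\top)^{-1}\jlg$ and $W=I-R$, where $\jlg\Sigma\jlg^\top$ is invertible because $\jlg=[\,I_n\ \cdots\ I_n\,]$ has full row rank and $\Sigma\succ 0$. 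Two elementary facts drive the proof: $R\,\Sigma\jlg^\top=\Sigma\jlg^\top$, hence $W\Sigma\jlg^\top=0$; and since $R+W=I$ one has $\jrl\hat{\betab}_{\tau_L}=R\,\jrl\hat{\betab}_{\tau_L}+W\,\jrl\hat{\betab}_{\tau_L}$, so that
\[
\check{\betab}_{\gamma_L}-\jrl\hat{\betab}_{\tau_L}=W\Sigma\ec^\top\yb_\gamma-W\,\jrl\hat{\betab}_{\tau_L}=W\bigl(\Sigma\ec^\top\yb_\gamma-\jrl\hat{\betab}_{\tau_L}\bigr).
\]
Because $W\Sigma\jlg^\top=0$, it suffices to show that $\Sigma\ec^\top\yb_\gamma-\jrl\hat{\betab}_{\tau_L}$ lies in the column space of $\Sigma\jlg^\top$; multiplying by $\ec^\top\ec$, this is equivalent to $\ec^\top\yb_\gamma-(\ec^\top\ec)\,\jrl\hat{\betab}_{\tau_L}\in\mathrm{range}(\jlg^\top)$.

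The second step produces a closed form for $\jrl\hat{\betab}_{\tau_L}$. Ordering every vertex set with respect to $g$ and using \eqref{eq:appendix:ordered_wrt_vert1}, we have $\er=\begin{bmatrix}F_1&F_2\\0&\ec\end{bmatrix}$, $\yb_\tree=(\yb_{\tree/\gamma}^\top,\ \yb_\gamma^\top)^\top$, and $\jrl$ is the block selection matrix $[\,0\ \ I\,]$ extracting the $\betab_{\gamma_L}$ coordinates. Hence (recall $V_\tree=I$)
\[
\er^\top\er=\begin{bmatrix}F_1^\top F_1&F_1^\top F_2\\ F_2^\top F_1& F_2^\top F_2+\ec^\top\ec\end{bmatrix},\qquad \er^\top\yb_\tree=\begin{bmatrix}F_1^\top\yb_{\tree/\gamma}\\ F_2^\top\yb_{\tree/\gamma}+\ec^\top\yb_\gamma\end{bmatrix}.
\]
Since $F_1$ has full column rank (Lemma \ref{lem:appendix:properties_of_F_r}), the standard block-inverse/Schur-complement formula applied to the bottom block of $(\er^\top\er)^{-1}\er^\top\yb_\tree$ gives
\[
\jrl\hat{\betab}_{\tau_L}=M^{-1}\bigl(F_2^\top P_1\,\yb_{\tree/\gamma}+\ec^\top\yb_\gamma\bigr),\qquad M:=F_2^\top P_1 F_2+\ec^\top\ec,\qquad P_1:=I-F_1(F_1^\top F_1)^{-1}F_1^\top,
\]
where $M\succ 0$ because $P_1$ is an orthogonal projection and $\ec^\top\ec\succ 0$.

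Finally, using $M-\ec^\top\ec=F_2^\top P_1 F_2$ to write $\ec^\top\ec\,M^{-1}=I-F_2^\top P_1 F_2 M^{-1}$, we compute
\begin{align*}
\ec^\top\yb_\gamma-(\ec^\top\ec)\,\jrl\hat{\betab}_{\tau_L}
&=\ec^\top\yb_\gamma-\bigl(I-F_2^\top P_1 F_2 M^{-1}\bigr)\bigl(F_2^\top P_1\yb_{\tree/\gamma}+\ec^\top\yb_\gamma\bigr)\\
&=F_2^\top P_1\bigl(F_2 M^{-1}(F_2^\top P_1\yb_{\tree/\gamma}+\ec^\top\yb_\gamma)-\yb_{\tree/\gamma}\bigr),
\end{align*}
which lies in the column space of $F_2^\top$. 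By the factorization of $F_2$ in Lemma \ref{lem:appendix:properties_of_F_r} (i.e.\ $F_2=G\jlg$ for a suitable matrix $G$), we have $\mathrm{range}(F_2^\top)\subseteq\mathrm{range}(\jlg^\top)$, which is exactly the containment required in the first paragraph; therefore $W\bigl(\Sigma\ec^\top\yb_\gamma-\jrl\hat{\betab}_{\tau_L}\bigr)=0$ and $\check{\betab}_{\gamma_L}=\jrl\hat{\betab}_{\tau_L}$. The only genuinely computational step is the Schur-complement inversion yielding the closed form for $\jrl\hat{\betab}_{\tau_L}$; I expect the main thing to keep straight is the bookkeeping of the ``ordered with respect to $g$'' block partition, after which everything rests on the range inclusion $\mathrm{range}(F_2^\top)\subseteq\mathrm{range}(\jlg^\top)$ and the identities $W\Sigma\jlg^\top=0$ and $\ec^\top\ec\,M^{-1}=I-F_2^\top P_1 F_2 M^{-1}$.
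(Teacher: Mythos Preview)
Your argument is correct and is more direct than the paper's. Both proofs hinge on the same two structural inputs---the Schur-complement block inversion of $\er^\top\er$ and the factorization $F_2=G\jlg$ from Lemma~\ref{lem:appendix:properties_of_F_r}---but they are organized differently. The paper splits the identity into an $R$-part (immediate from $R^2=R$, $RW=0$) and a $W$-part, and establishes the $W$-part by a statistical argument: it shows that $W\check{\betab}_{\gamma_L}$ and $W\jrl\hat{\betab}_{\tau_L}$ are both BLUEs of $W\jrl\betab_{\tau_L}$, checking linearity, unbiasedness, and equality of variances, the last of which requires an SMW-type identity together with the observation that $F_2(\ec^\top\ec)^{-1}W^\top=0$. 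Your route is purely algebraic: you write the difference as $W(\Sigma\ec^\top\yb_\gamma-\jrl\hat{\betab}_{\tau_L})$, produce the Schur-complement closed form for $\jrl\hat{\betab}_{\tau_L}$, and then show directly that $\ec^\top\yb_\gamma-(\ec^\top\ec)\jrl\hat{\betab}_{\tau_L}\in\mathrm{range}(F_2^\top)\subseteq\mathrm{range}(\jlg^\top)$, so that $W\Sigma\jlg^\top=0$ finishes. This avoids the BLUE-uniqueness detour and the SMW step entirely; what it loses is the statistical interpretation that the $W$-component of the estimator is unaffected by observations outside $\tree_g$, which the paper's variance argument makes explicit.
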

	\begin{proof}
		This proof involves a variety of Gram matrices, and a few notes about which of these matrices are full rank will be helpful. Specifically, the first two properties of Lemma \ref{lem:appendix:properties_of_F_r} imply that $F_1^\top F_1$ and $ \ec^\top\ec$ have full rank, which in turn implies that $ \er^\top \er $ also has full rank. However, generally $F_2^\top F_2$ does not have full rank, since the column rank of $F_2$ is $n,$ \textit{i.e.}, the dimension of $\betab(v)$ for each $v\in \tree,$ but $F_2$ has $n \card(\gamma_L) \geq n$ columns.
		
		We will start by proving two intermediate equalities. First, using the properties that $R^2 = R$ and $R W=0,$ we have
		\begin{align*}
			R \check{\betab}_{\gamma_L} = R(W (\ec^\top \ec)^{-1}  \ec^\top \yb_\gamma+ R \jrl (\er^\top \er)^{-1} \er^\top \yb_\tree) = R \jrl (\er^\top \er)^{-1} \er^\top \yb_\tree,
		\end{align*}
		and, since $\hat{\betab}_{\tau_L} =(\er^\top \er)^{-1} \er^\top \yb_\tree$ under the assumption that $V_\tree=I,$ this implies the first intermediate equality
		\begin{align}
			R \check{\betab}_{\gamma_L} = R \jrl \hat{\betab}_{\tau_L} . \label{eq:appendix:td_nts1}
		\end{align}
		
		The second intermediate equality we will prove is 
		\begin{align}
			& W\check{\betab}_{\gamma_L} = W \jrl \hat{\betab}_{\tau_L} . \label{eq:appendix:td_nts2}
		\end{align}
		To establish this equality, note that $W \jrl \hat{\betab}_{\tau_L} $ is the GLS estimator for $W\jrl \betab_{\tau_L}.$ Since the GLS estimator is unique, we will prove this equality by showing that  $W\check{\betab}_{\gamma_L}$ is also the GLS estimator of $W\jrl \betab_{\tau_L}.$ To do this, note that $W\check{\betab}_{\gamma_L}$ is linear in $\yb_\tree.$ Second, $W\check{\betab}_{\gamma_L}$ is unbiased because 
		\begin{align*}
			E(W \check{\betab}_{\gamma_L})& = W (\ec^\top \ec)^{-1}  \ec^\top E(\yb_\gamma)  =W (\ec^\top \ec)^{-1}  \ec^\top E(\ec \jrl \betab_{\tau_L}  + \ub_{\gamma}) \\
			&  = W (\ec^\top \ec)^{-1}  \ec^\top \ec \jrl \betab_{\tau_L}  + W (\ec^\top \ec)^{-1}  \ec^\top E(\ub_{\gamma})) = W \jrl \betab_{\tau_L}.
		\end{align*}
		\noindent
		To show $\var(W\check{\betab}_{\gamma_L}) = \var(W \jrl \hat{\betab}_{\tau_L} ),$ suppose the design matrix $\er$ is ordered with respect to vertex $v.$ This allows us to write $\var(W \jrl \hat{\betab}_{\tau_L} )$ concisely as
		
		\[ \var(W \jrl \hat{\betab}_{\tau_L} ) = W \jrl \left[
		\begin{array}{cc}
			F_1^\top F_1  & F_1^\top F_2  \\
			F_2^\top F_1 & F_2^\top F_2 + \ec^\top \ec
		\end{array}
		\right] ^{-1} \jrl^\top  W^\top.\]
		Note that multiplying the inner inverse matrix by $\jrl$ on the left and $\jrl^\top$ on the right in the expression above amounts to isolating the bottom right block of this inverse matrix. Thus, the formula for the inverse of a $2\times 2$ block matrix implies
		\begin{align}
			\var(W \jrl \hat{\betab}_{\tau_L} ) = W \left( \ec^\top \ec + F_2^\top \left( I-   F_1  (F_1^\top F_1)^{-1}   F_1^\top \right) F_2 \right)^{-1} W^\top. \label{eq:appendix:intmdt_var_beta_hat}
		\end{align}
		
		To make our derivations more concise, let $M_1$ be defined as the orthogonal projection matrix onto the null space of $F_1,$ \textit{i.e.}, $M_1 = I-F_1  (F_1^\top F_1)^{-1}   F_1^\top.$ Since $F_2^\top M_1 F_2$ is not generally full rank, the SMW lemma cannot be used to directly to simplify \eqref{eq:appendix:intmdt_var_beta_hat}, so instead we will use the variant of this lemma provided by \cite{henderson1981deriving}, which does not require this matrix to be non-singular. Specifically, this lemma implies
		\begin{align*}
			\var(W \jrl \hat{\betab}_{\tau_L} ) &= W \left( \ec^\top \ec + F_2^\top M_1 F_2 \right)^{-1} W^\top \\
			& = W \left( \left( \ec^\top \ec\right)^{-1} - \left(I+ \left( \ec^\top \ec\right)^{-1}F_2^\top   M_1 F_2   \right)^{-1}    \left( \ec^\top \ec\right)^{-1}   F_2^\top M_1 F_2   \left( \ec^\top \ec\right)^{-1} \right) W^\top.
		\end{align*}
		This, along with the fact that $\var(W\check{\betab}_{\gamma_L}) = \var(W (\ec^\top \ec)^{-1}  \ec^\top \yb_\gamma) = W (\ec^\top \ec)^{-1} W^\top,$ implies that \eqref{eq:appendix:td_nts2} holds if and only if
		\begin{align}
			& \var(W\check{\betab}_{\gamma_L}) - \var(W \jrl \hat{\betab}_{\tau_L} )=0  \nonumber \\
			\iff &  W \left(I+ \left( \ec^\top \ec\right)^{-1}F_2^\top   M_1 F_2   \right)^{-1}    \left( \ec^\top \ec\right)^{-1}   F_2^\top M_1 F_2   \left( \ec^\top \ec\right)^{-1}  W^\top = 0 \nonumber \\
			\iff & T F_2   \left( \ec^\top \ec\right)^{-1}  W^\top = 0, \label{eq:appendix:td_is_zero}
		\end{align}
		\noindent
		where 
		\[T=  W \left(I+ \left( \ec^\top \ec\right)^{-1}F_2^\top   M_1 F_2   \right)^{-1}    \left( \ec^\top \ec\right)^{-1}   F_2^\top M_1.\]
		To show \eqref{eq:appendix:td_is_zero} holds, note that the definition of $W$ and the final property provided by Lemma \ref{lem:appendix:properties_of_F_r} imply that there exists a matrix $C$ such that 
		\begin{align*}
			& T F_2   \left( \ec^\top \ec\right)^{-1}  W^\top  = T C \begin{bmatrix}
				\jlg \\
				\vdots \\
				\jlg \\
				0 \\
				\vdots
			\end{bmatrix} \left( \ec^\top \ec\right)^{-1} \left(I - \jlg^\top  (\jlg (\ec^\top \ec)^{-1} \jlg^\top)^{-1}  \jlg (\ec^\top \ec)^{-1} \right) \\
			& = T C \begin{bmatrix}
				I \\
				\vdots \\
				I  \\
				0 \\
				\vdots 
			\end{bmatrix} \left(\jlg \left( \ec^\top \ec\right)^{-1} -  \jlg (\ec^\top \ec)^{-1} \right) = 0.
		\end{align*}
		Since this implies \eqref{eq:appendix:td_is_zero} holds, $W\check{\betab}_{\gamma_L}$ is the GLS estimator of $W\jrl \betab_{\tau_L},$ which in turn implies equality \eqref{eq:appendix:td_nts2}).
		
		The final result follows from summing over equalities \eqref{eq:appendix:td_nts1} and \eqref{eq:appendix:td_nts2}; since $R+W=I,$ we have 
		\begin{align*}
			& R\check{\betab}_{\gamma_L} + W\check{\betab}_{\gamma_L} = R \jrl \hat{\betab}_{\tau_L}  +  W \jrl \hat{\betab}_{\tau_L} \implies \check{\betab}_{\gamma_L} = \jrl \hat{\betab}_{\tau_L} .
		\end{align*}
	\end{proof}
	
	\begin{lemma} \label{lem:appendix:beta_tilde_c_is_blue_for_level_l}
		Suppose that $l<L,$ $v\in\level(\tree,l),$ $\tilde{\betab}(v)$ is the full-information GLS estimator, and that, for each $c\in\children(v),$ $\hat{\betab}(c|c-)$ is the GLS estimator for the information set given the observations in $\tree_C.$ Then, for each $c\in\children(v),$ $\tilde{\betab}(c)$ is the full-information GLS estimator. 
	\end{lemma}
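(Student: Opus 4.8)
The plan is to show that the coarse-to-fine update \eqref{eq:down1} reproduces, for each $c\in\children(g)$, the full-information BLUE of $\betab(c)$, by identifying $\tilde{\betab}(c)$ with a block of the vector $\check{\betab}_{\gamma_L}$ of Lemma~\ref{lem:appendix:beta_check_eq} and then invoking uniqueness of the minimizer of the projection problem \eqref{eq:top_down_proj_opt_prob}. As in the other proofs of this appendix I would first reduce to the case $V_\tree = I$ by the usual reparametrization (left-multiplying $\er$, $\yb_\tree$, $\ub_\tree$ by $V_\tree^{-1/2}$), which leaves the claimed identity unaffected; under it $\hat{\betab}_{\tau_L} := (\er^\top\er)^{-1}\er^\top\yb_\tree$ is the full-information BLUE of $\betab_{\tau_L}$. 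Since $g\in\level(\tree,l)$, the hypothesis that $\tilde{\betab}(g)$ is the full-information BLUE together with the remark after Lemma~\ref{lem:appendix:gauss_markov_thm} gives $\tilde{\betab}(g) = \jrg\hat{\betab}_{\tau_L}$. The target is $\tilde{\betab}(c) = J_{\tau_L,c}\hat{\betab}_{\tau_L}$, the full-information BLUE of $\betab(c)$; and since Lemma~\ref{lem:appendix:beta_check_eq} gives $\check{\betab}_{\gamma_L} = \jrl\hat{\betab}_{\tau_L}$ and the elementwise-sum matrices compose as $J_{\gamma_L,c}\jrl = J_{\tau_L,c}$, it suffices to prove $\stack(\{\tilde{\betab}(c)\}_{c\in\children(g)}) = \jlc\check{\betab}_{\gamma_L}$.

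Since the objective of \eqref{eq:top_down_proj_opt_prob} is strictly convex and its constraint is affine and feasible, that problem has a unique minimizer, namely the KKT point $\{\tilde{\betab}(c)\}_{c\in\children(g)}$ of \eqref{eq:down1}; so it is enough to check that $\jlc\check{\betab}_{\gamma_L}$ satisfies the KKT conditions. Translating the recursion into the matrix notation of this appendix exactly as in the proof of Lemma~\ref{lem:appendix:g_cond_g_minus_is_blue_for_level_L_minus_one}, we have $\hat{\betab}(c|c-) = J_{\gamma_L,c}(\ec^\top\ec)^{-1}\ec^\top\yb_\gamma$ and $\var(\hat{\betab}(c|c-)) = J_{\gamma_L,c}(\ec^\top\ec)^{-1}J_{\gamma_L,c}^\top$. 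Primal feasibility is immediate from $\jcg\jlc = \jlg$, $\jlg\jrl=\jrg$, Lemma~\ref{lem:appendix:beta_check_eq}, and the induction hypothesis: $\jcg(\jlc\check{\betab}_{\gamma_L}) = \jlg\jrl\hat{\betab}_{\tau_L} = \jrg\hat{\betab}_{\tau_L} = \tilde{\betab}(g)$.

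The stationarity condition is the crux. It requires $\var(\hat{\betab}(c|c-))^{-1}\bigl(J_{\gamma_L,c}\check{\betab}_{\gamma_L} - \hat{\betab}(c|c-)\bigr)$ to be a single vector $\lambdab$ for every $c\in\children(g)$. Using $W = I-R$ and the definition of $\check{\betab}_{\gamma_L}$, one gets $\check{\betab}_{\gamma_L} - (\ec^\top\ec)^{-1}\ec^\top\yb_\gamma = R\bigl(\jrl\hat{\betab}_{\tau_L} - (\ec^\top\ec)^{-1}\ec^\top\yb_\gamma\bigr) = (\ec^\top\ec)^{-1}\jlg^\top\lambdab$ with $\lambdab := \bigl(\jlg(\ec^\top\ec)^{-1}\jlg^\top\bigr)^{-1}\jlg\bigl(\jrl\hat{\betab}_{\tau_L} - (\ec^\top\ec)^{-1}\ec^\top\yb_\gamma\bigr)$ not depending on $c$, so the stationarity residual for $c$ is $\var(\hat{\betab}(c|c-))^{-1}J_{\gamma_L,c}(\ec^\top\ec)^{-1}\jlg^\top\lambdab$, and it suffices to show $J_{\gamma_L,c}(\ec^\top\ec)^{-1}\jlg^\top = \var(\hat{\betab}(c|c-))$. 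Writing $\jlg^\top = \sum_{c'\in\children(g)}J_{\gamma_L,c'}^\top$, this reduces to the observation that $(\ec^\top\ec)^{-1}$ has no entries linking a leaf of $\tree_c$ to a leaf of $\tree_{c'}$ when $c\neq c'$: every observation contributing to $\ec = F_\gamma$ is taken at a vertex descending from a single child of $g$, so its design row is supported only on that child's leaves, whence $\ec^\top\ec$ — and hence $(\ec^\top\ec)^{-1}$ — is block diagonal with respect to the partition of $\gamma_L$ by child-subtree. The cross terms $J_{\gamma_L,c}(\ec^\top\ec)^{-1}J_{\gamma_L,c'}^\top$ with $c'\neq c$ therefore vanish and the surviving $c'=c$ term equals $\var(\hat{\betab}(c|c-))$, so the residual is $\lambdab$ for all $c$. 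I expect this block-diagonality step — a structural consequence of the hierarchy — to be the main point; the rest is bookkeeping with the composition identities for the $J$ matrices.

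Combining the two checks, $\jlc\check{\betab}_{\gamma_L}$ (with multiplier $\lambdab$) is the unique KKT point of \eqref{eq:top_down_proj_opt_prob}, hence equals $\stack(\{\tilde{\betab}(c)\}_{c\in\children(g)})$, and therefore $\tilde{\betab}(c) = J_{\gamma_L,c}\check{\betab}_{\gamma_L} = J_{\gamma_L,c}\jrl\hat{\betab}_{\tau_L} = J_{\tau_L,c}\hat{\betab}_{\tau_L}$, which by the remark after Lemma~\ref{lem:appendix:gauss_markov_thm} is the full-information BLUE of $\betab(c)$, completing the induction step. Degenerate cases such as added leaf vertices carrying infinite-variance errors are handled as elsewhere in the paper, by the convention that such vertices receive no weight, so that the estimates of the remaining children are unchanged.
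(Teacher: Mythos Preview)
Your proof is correct and follows essentially the same architecture as the paper's: reduce to $V_\tree=I$, use the hypotheses to write $\tilde{\betab}(g)=\jrg\hat{\betab}_{\tau_L}$ and $\hat{\betab}(c|c-)=J_{\gamma_L,c}(\ec^\top\ec)^{-1}\ec^\top\yb_\gamma$, establish $\stack(\{\tilde{\betab}(c)\}_c)=\jlc\check{\betab}_{\gamma_L}$, and conclude via Lemma~\ref{lem:appendix:beta_check_eq}. The only cosmetic difference is that the paper verifies $\tilde{\betab}_c=\jlc\check{\betab}_{\gamma_L}$ by expanding \eqref{eq:down1} forward and simplifying into the form $\jlc(W(\ec^\top\ec)^{-1}\ec^\top\yb_\gamma+R\jrl\hat{\betab}_{\tau_L})$, whereas you verify it backward by checking that $\jlc\check{\betab}_{\gamma_L}$ satisfies the KKT conditions of \eqref{eq:top_down_proj_opt_prob}; the block-diagonality of $(\ec^\top\ec)^{-1}$ you single out is exactly what underlies the paper's identity $\var(\hat{\betab}_{c|c-})=\jlc(\ec^\top\ec)^{-1}\jlc^\top$.
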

	\begin{proof}
		Throughout the proof we assume that $\var(\ub(v))=I$ for each $v\in\tree,$ which is without loss of generality by the same logic given in the first paragraph of Lemma \ref{lem:appendix:g_cond_g_minus_is_blue_for_level_L_minus_one}. 
		
		We will start with translating equations \eqref{eq:down1} and \eqref{eq:A_of_c_def} into the notation introduced in this appendix, \textit{i.e.}, we will derive both $\hat{\betab}_{c | c-} = \stack(\{\hat{\betab}(c|c-) \}_{c\in\gamma_C})$ and $A_c =\stack(\{A(c)\}_{c\in\gamma_C}).$ The assumptions in the statement of the Lemma imply
		\begin{align*}
			& \hat{\betab}_{c | c-}  = \jlc (\ec^\top \ec)^{-1} \ec^\top \yb_\gamma \textrm{  and} \\
			& \tilde{\betab}(v) = \jrg (\er^\top \er)^{-1} \er^\top \yb_\tree,
		\end{align*}
		so \eqref{eq:A_of_c_def} can be derived using
		\begin{align*}
			\var(\hat{\betab}_{c | c-}) &= \begin{bmatrix}
				\var(\hat{\betab}(\gamma_C[0]|\gamma_C[0]-)) & 0 & \cdots \\
				0 & \var(\hat{\betab}(\gamma_C[1]|\gamma_C[1]-)) & \\
				\vdots &  & \ddots \\
			\end{bmatrix} = \jlc (\ec^\top \ec)^{-1} \jlc^\top \implies \\
			A_c &= \var(\hat{\betab}_{c | c-}) \jcg^\top (\jcg \var(\hat{\betab}_{c | c-}) \jcg^\top )^{-1} \\
			& = \jlc (\ec^\top \ec)^{-1} \jlg^\top  (\jlg (\ec^\top \ec)^{-1} \jlg^\top )^{-1},
		\end{align*}
		where the final equality used the fact that $\jlg=\jcg\jlc.$ Likewise, to translate \eqref{eq:down1} to the notation in this appendix, we have
		\begin{align}
			\tilde{\betab}_c & = \hat{\betab}_{c | c-} + A_c \left( \tilde{\betab}(v) - \jcg \hat{\betab}_{c | c-} \right) \nonumber \\
			& = \jlc (\ec^\top \ec)^{-1} \ec^\top \yb_\gamma + A_c \left( \jrg (\er^\top \er)^{-1} \er^\top \yb_\tree - \jcg \jlc (\ec^\top \ec)^{-1} \ec^\top \yb_\gamma \right).  \label{eq:appendix:betab_c_1}
		\end{align} 
		
		After substituting in the definition of $A_c$ into \eqref{eq:appendix:betab_c_1} and simplifying using the definitions of $R,W,$ and $\check{\betab}_{\gamma_L}$ provided in Lemma \ref{lem:appendix:beta_check_eq}, we have
		
		\begin{align*}
			\tilde{\betab}_c =&  \jlc (\ec^\top \ec)^{-1} \ec^\top \yb_\gamma - \jlc R (\ec^\top \ec)^{-1} \ec^\top \yb_\gamma \\
			& + \jlc (\ec^\top \ec)^{-1} \jlg^\top (\jlg (\ec^\top \ec)^{-1} \jlg^\top )^{-1} \jrg (\er^\top \er)^{-1} \er^\top \yb_\tree  \\
			=&  \jlc \big( (I - R ) (\ec^\top \ec)^{-1} \ec^\top \yb_\gamma \\
			& + (\ec^\top \ec)^{-1} \jlg^\top (\jlg (\ec^\top \ec)^{-1} \jlg^\top )^{-1}   \jlg \jrl (\er^\top \er)^{-1} \er^\top \yb_\tree\big)  \\
			=&  \jlc \left( \left(I - R \right) (\ec^\top \ec)^{-1} \ec^\top \yb_\gamma  + R \jrl (\er^\top \er)^{-1} \er^\top \yb_\tree\right)  \\
			=& \jlc \left( W (\ec^\top \ec)^{-1}  \ec^\top \yb_\gamma+ R \jrl (\er^\top \er)^{-1} \er^\top \yb_\tree \right) = \jlc \check{\betab}_{\gamma_L}.
		\end{align*}
		Thus, Lemma \ref{lem:appendix:beta_check_eq} implies that 
		\[\tilde{\betab}_c = \jlc \check{\betab}_{\gamma_L} = \jlc\jrl (\er^\top \er)^{-1} \er^\top \yb_\tree= \jrc (\er^\top \er)^{-1} \er^\top \yb_\tree=\jrc\hat{\betab}_{\tau_L},\]
		which is the full-information GLS estimator of $\betab_{\gamma_C}$ by Aitken's Theorem. 
	\end{proof}
	
	The following theorem on the optimality of $\tilde{\betab}(v)$ for each $v\in\tree$ is our main result. 
	
	\begin{theorem} \label{thm:appendix:beta_tilde_is_blue}
		If Assumptions \ref{assump:spine_struct}-\ref{assump:u_iid} hold, then for each $v\in\tree$  $\tilde{\betab}(v),$ as defined in Section \ref{sec:two_pass_est}, and the value of $\tilde{\betab}_{H,\qb}$ returned from Algorithm \ref{alg:est_ci} is the full-information GLS estimator for $\betab(v).$ 
	\end{theorem}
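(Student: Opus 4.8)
The plan is to establish the two assertions of the statement in turn: first that $\tilde{\betab}(g)$ is the full-information BLUE of $\betab(g)$ for every $g\in\tree$, and then, using that, that the scalar returned by Algorithm \ref{alg:est_ci} is the full-information BLUE of $\beta_{H,\qb}$. The first assertion I would prove by induction on the level index $l$, running from the root down toward the leaves. For the base case $l=0$, let $r$ be the root; by the initialization \eqref{eq:down_init1} we have $\tilde{\betab}(r) = \hat{\betab}(r\mid r-)$, and Theorem \ref{thm:appendix:g_cond_g_minus_is_blue_for_any_level} applied with $g=r$ says $\hat{\betab}(r\mid r-)$ is the BLUE of $\betab(r)$ for the information set consisting of the observations in $\tree_r$; since $\tree_r=\tree$, this is exactly the full-information BLUE. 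For the induction step, suppose $l<L$ and that $\tilde{\betab}(g)$ is the full-information BLUE for every $g\in\level(\tree,l)$. Fix such a $g$: Theorem \ref{thm:appendix:g_cond_g_minus_is_blue_for_any_level} guarantees that $\hat{\betab}(c\mid c-)$ is the BLUE for the information set given by the observations in $\tree_c$ for each $c\in\children(g)$, which is precisely the hypothesis of Lemma \ref{lem:appendix:beta_tilde_c_is_blue_for_level_l}; that lemma then yields that $\tilde{\betab}(c)$ is the full-information BLUE for each $c\in\children(g)$. Ranging over all $g\in\level(\tree,l)$ covers $\level(\tree,l+1)$, so by induction $\tilde{\betab}(g)$ is the full-information BLUE for every $g\in\tree$.

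For the second assertion, I would first specialize the result just proved to leaf vertices: for each $u\in\level(\tree,L)$ the proof of Lemma \ref{lem:appendix:beta_tilde_c_is_blue_for_level_l} exhibits $\tilde{\betab}(u)$ as a fixed linear image of $\hat{\betab}_{\tau_L}=(\er^\top V_\tree^{-1}\er)^{-1}\er^\top V_\tree^{-1}\yb_\tree$, and stacking these over all leaves shows $\tilde{\betab}=\stack(\{\tilde{\betab}(u)\}_{u\in\level(\tree,L)})$ coincides with the full-information GLS estimator $\hat{\betab}_{\tau_L}$ of \eqref{eq:appendix:aitkens_thm1}. When $H\subset\level(\tree,L)$, Algorithm \ref{alg:est_ci} returns $\sum_{u\in H}\qb^\top\tilde{\betab}(u) = (\hb\otimes\qb)^\top\hat{\betab}_{\tau_L}$ with $\hb$ as in \eqref{eq:tilde_beta_H_q}; being a fixed linear image of $\hat{\betab}_{\tau_L}$, the arbitrary-$D$ clause of Aitken's theorem \eqref{eq:appendix:aitkens_thm2} together with uniqueness of the BLUE (the remark following Lemma \ref{lem:appendix:gauss_markov_thm}) identifies it as the BLUE of $(\hb\otimes\qb)^\top\betab_{\tau_L} = \sum_{u\in H}\qb^\top\betab(u) = \beta_{H,\qb}$. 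It then remains to verify that the optional redefinition of $H$ in the first comment of Algorithm \ref{alg:est_ci} changes neither the returned value nor the estimand: $\tilde{\betab}(\cdot)$ satisfies the parent–child consistency relation \eqref{eq:par_child_cons} — this follows from the equality constraint in the coarse-to-fine optimization problem \eqref{eq:top_down_proj_opt_prob}, or equivalently from $\sum_{c\in\children(g)}J_{\tau_L,\{c\}} = J_{\tau_L,\{g\}}$ — so replacing any $\children(g)\subset H$ by $g$ leaves $\sum_u\qb^\top\tilde{\betab}(u)$ unchanged, and the same identity \eqref{eq:par_child_cons} for the true $\betab(\cdot)$ leaves $\beta_{H,\qb}$ unchanged.

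I expect essentially all of the substantive work to already reside in Theorem \ref{thm:appendix:g_cond_g_minus_is_blue_for_any_level} and Lemmas \ref{lem:appendix:beta_check_eq}–\ref{lem:appendix:beta_tilde_c_is_blue_for_level_l}, so the residual difficulty in this final theorem is bookkeeping: correctly anchoring the induction at the root (observing that the $\tree_g$-information-set BLUE there \emph{is} the full-information BLUE), propagating it exactly one level at a time, and justifying the passage from the BLUE of the leaf-histogram vector $\betab_{\tau_L}$ to the BLUE of the scalar $\beta_{H,\qb}$ — and, for non-leaf $g$, to the BLUE of $\betab(g)$ — via the arbitrary-$D$ version of Aitken's theorem and uniqueness. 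The one place a careful reader may pause is the redefinition-of-$H$ step, so I would spell out the parent–child consistency of $\tilde{\betab}(\cdot)$ explicitly rather than leaving it implicit.
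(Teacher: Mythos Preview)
Your proposal is correct and, for the first assertion, follows essentially the same induction as the paper: anchor at the root via Theorem~\ref{thm:appendix:g_cond_g_minus_is_blue_for_any_level} (since $\tree_r=\tree$) and propagate down one level at a time using Lemma~\ref{lem:appendix:beta_tilde_c_is_blue_for_level_l}. You go beyond the paper's own proof by explicitly treating the $\tilde{\betab}_{H,\qb}$ claim and the optional redefinition of $H$ via parent--child consistency of $\tilde{\betab}(\cdot)$; the paper's proof leaves that part tacit, so your additional bookkeeping is a welcome clarification rather than a deviation.
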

	\begin{proof}
		Theorem \ref{thm:appendix:g_cond_g_minus_is_blue_for_any_level} implies that, for each $c\in\tree,$ $\hat{\betab}(c|c-)$ is the GLS estimator for the information set given the observations in $\tree_c.$ Also, when the parent vertex $v\in\tree$ is the root of $\tree,$ Theorem \ref{thm:appendix:g_cond_g_minus_is_blue_for_any_level} also implies $\tilde{\betab}(v)$ is the full-information GLS estimator. 
		
		For a parent vertex $v\in\tree,$ if $\tilde{\betab}(v)$ is the full-information GLS estimator, Lemma \ref{lem:appendix:beta_tilde_c_is_blue_for_level_l} implies that $\tilde{\betab}(c)$ is the full-information GLS estimator for each $c\in\children(v),$ so this result follows by induction.  
	\end{proof}
	
	\section{Closed-Form Finite Sample Distribution of the GLS Estimator} \label{sec:appendix:beta_tilde_is_normal}
	
	The CI estimator proposed here is statistically valid even if the sample size is finite, when the errors are Gaussian, which is proved in the following Theorem. 
    
    One can also show that the GLS estimator is statistically valid without a normality assumption, at least asymptotically as the sample size, \textit{i.e.}, the dimension of $\yb$ in \eqref{eq:yb_def_intro}, diverges, using the central limit theorem; for more detail, see \cite{greene2003econometric}. In the context of formally private mechanisms, note that care must be taken in the context of asymptotic results because observing additional noisy measurements requires expending a higher privacy loss budget (PLB). However, in cases in which relatively few (non-Gaussian) mean-zero errors are used within a formally private matrix mechanism, data curators may still view additional PLB expenditures to support more accurate inferences on the uncertainty introduced by disclosure limitation methods to be worthwhile.

	\begin{theorem}\label{thm:appendix:beta_tilde_is_normal}
		If Assumptions \ref{assump:spine_struct}-\ref{assump:normality} hold, then $\tilde{\betab}_{H,\qb}$ is normally distributed, and the $1-\alpha$ CI of $\betab_{H,\qb}$ output from Algorithm \ref{alg:est_ci} is statistically valid.
	\end{theorem}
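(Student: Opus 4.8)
The plan is to reduce the statement to two facts already in hand: (i) from Theorem \ref{thm:appendix:beta_tilde_is_blue} together with the uniqueness clause of Aitken's Theorem (Lemma \ref{lem:appendix:gauss_markov_thm}), for every $g\in\tree$ the estimate $\tilde{\betab}(g)$ is a \emph{fixed linear} function of the stacked observation vector $\yb_\tree$, namely $\tilde{\betab}(g)=\jrg(\er^\top V_\tree^{-1}\er)^{-1}\er^\top V_\tree^{-1}\yb_\tree$, and it is unbiased for $\betab(g)$; and (ii) the covariance formulas of Section \ref{sec:cov} (Algorithm \ref{alg:cov_est}) compute $\cov(\tilde{\betab}(u),\tilde{\betab}(v))$ exactly. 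Everything else is bookkeeping.

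First I would establish normality. Under Assumption \ref{assump:normality} each block $\ub(g)$ is Gaussian, and by Assumption \ref{assump:u_iid} the blocks are mutually independent, so the stacked vector $\ub_\tree=\stack(\{\ub(g)\}_{g\in\tree})$ is jointly Gaussian with block-diagonal variance $V_\tree$; hence $\yb_\tree=\er\betab_{\tau_L}+\ub_\tree$ is an affine image of a Gaussian vector and is itself Gaussian. Since (after the optional $H$-coarsening step in Algorithm \ref{alg:est_ci}, discussed below) $\tilde{\betab}_{H,\qb}=\sum_{g\in H}\qb^\top\tilde{\betab}(g)$ is a fixed linear functional of $\yb_\tree$, it is a univariate Gaussian random variable. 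Its mean equals $\sum_{g\in H}\qb^\top E(\tilde{\betab}(g))=\sum_{g\in H}\qb^\top\betab(g)=\beta_{H,\qb}$ by unbiasedness, and its variance equals $\sum_{g,h\in H}\qb^\top\cov(\tilde{\betab}(g),\tilde{\betab}(h))\qb$.

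Next I would identify this variance with the output of Algorithm \ref{alg:est_ci}: the nested loop of that algorithm accumulates exactly $\qb^\top\cov(\tilde{\betab}(g),\tilde{\betab}(h))\qb$ over all $g,h\in H$, each covariance block being returned by Algorithm \ref{alg:cov_est}, whose correctness is established in Section \ref{sec:cov}. Hence the constant $c$ formed in the penultimate line of Algorithm \ref{alg:est_ci} equals $\sqrt{\var(\tilde{\betab}_{H,\qb})}\,\Phi^{-1}(1-\alpha/2)$. When $\var(\tilde{\betab}_{H,\qb})>0$ the standardized variable $z_{H,\qb}=(\tilde{\betab}_{H,\qb}-\beta_{H,\qb})/\sqrt{\var(\tilde{\betab}_{H,\qb})}$ is standard normal, so $P(|z_{H,\qb}|\le\Phi^{-1}(1-\alpha/2))=1-\alpha$, which rearranges to $P\big(\beta_{H,\qb}\in(\tilde{\betab}_{H,\qb}-c,\ \tilde{\betab}_{H,\qb}+c)\big)=1-\alpha$; in the degenerate case $\var(\tilde{\betab}_{H,\qb})=0$ the estimate equals $\beta_{H,\qb}$ almost surely and the (possibly trivial) interval covers with probability one. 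Either way the coverage is at least $1-\alpha$, i.e., the CI is statistically valid. Finally I would dispatch the optional $H$-redefinition: replacing a set of siblings $\children(g)$ by their parent $g$ leaves $\beta_{H,\qb}$ unchanged by the parent--child constraint \eqref{eq:par_child_cons}, and leaves $\tilde{\betab}_{H,\qb}$ and its law unchanged because the coarse-to-fine recursion enforces $\sum_{c\in\children(g)}\tilde{\betab}(c)=\tilde{\betab}(g)$ (the equality constraint of \eqref{eq:top_down_proj_opt_prob}), so the argument applies verbatim to the redefined $H$.

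There is no real obstacle here: the result is essentially immediate given Theorem \ref{thm:appendix:beta_tilde_is_blue} and the covariance derivations of Section \ref{sec:cov}. The only points needing a moment of care are making explicit that marginal normality plus independence of the $\ub(g)$ yields \emph{joint} normality of $\ub_\tree$ (so that a linear functional of $\yb_\tree$ is normal, not merely each block), and verifying the two invariance claims for the optional $H$-coarsening step; I expect the latter to be the most delicate item, and even it is routine.
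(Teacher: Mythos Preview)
Your proposal is correct and follows essentially the same route as the paper: invoke Theorem \ref{thm:appendix:beta_tilde_is_blue} to identify $\tilde{\betab}_{H,\qb}$ with a fixed linear functional of $\yb_\tree$, use normality of $\ub_\tree$ to conclude that this linear functional is Gaussian with mean $\beta_{H,\qb}$ and the displayed variance, and read off exact coverage of the interval in Algorithm \ref{alg:est_ci}. Your write-up is in fact more careful than the paper's on several points the paper leaves implicit---the passage from marginal to joint normality of $\ub_\tree$ via independence, the identification of Algorithm \ref{alg:est_ci}'s accumulated variance with the true $\var(\tilde{\betab}_{H,\qb})$ through Section \ref{sec:cov}, the degenerate zero-variance case, and invariance under the optional $H$-coarsening---but none of these change the argument's structure.
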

	\begin{proof}
		Theorem \ref{thm:appendix:beta_tilde_is_blue} implies 
		\begin{align*}
			\tilde{\betab}_{H,\qb} &= (\boldsymbol{h}\otimes \qb)^\top (\er^\top V_\tree^{-1} \er)^{-1}  \er^\top V_\tree^{-1} \yb_\tree  =  (\boldsymbol{h}\otimes \qb)^\top (\er^\top V_\tree^{-1} \er)^{-1}  \er^\top V_\tree^{-1} \left( \er \betab_{\tau_L} + \ub_\tree \right) \\
			&= \betab_{H,\qb} + (\boldsymbol{h}\otimes \qb)^\top (\er^\top V_\tree^{-1} \er)^{-1}  \er^\top V_\tree^{-1}  \ub_\tree 
		\end{align*}
		Since a linear combination of normally distributed random variables is itself normally distributed, this implies
		\begin{align*}
			\tilde{\betab}_{H,\qb} \sim N\left( \betab_{H,\qb}, (\boldsymbol{h}\otimes \qb)^\top  (\er^\top V_\tree^{-1} \er)^{-1}  (\boldsymbol{h}\otimes \qb) \right).
		\end{align*}
		This implies that the $1-\alpha$ CI is statistically valid, since this closed form distribution was used to generate the endpoints of the CI interval in Algorithm \ref{alg:est_ci}.
	\end{proof}

    \section{Time Complexity of Algorithm 1} \label{sec:appendix:time_complexity}

    This section provides the computational cost of Algorithm \ref{alg:two_pass_est}, when the algorithms used for the product of two $n\times n$ matrices and for the matrix inverse have a time complexity of $O(n^3).$

    \begin{theorem}\label{thm:appendix:time_complexity}
        Suppose that for each vertex $v\in\tree,$ $\yb(v) \in\rr^m$ and $\betab(v)\in \rr^n.$ Also, let $V$ be defined as the total number of vertices in $\tree,$ \textit{i.e.}, $V=\sum_l \card(\level(\tree, l)).$ Then the time complexity of Algorithm \ref{alg:two_pass_est} is $O(m^2 n V).$
	\end{theorem}
	\begin{proof}
		The first step in this algorithm is to compute the GLS estimator $\betab(v)$ and its variance matrix based solely on $\yb(v),$ for each vertex $v\in\tree,$ which has a time complexity of $O((m^2 n + n^3) V).$ Since $S(v)$ has full rank for each vertex $v\in\tree,$ we have $m\geq n,$ so this time complexity can be written as, $O(m^2 n V).$ 
        
        Next we will show that the remaining steps in the Algorithm have a time complexity less than $O(m^2 n V).$ First, all of the matrix sums in the fine-to-coarse pass have a time complexity bounded above by $O(n^2 V),$ \textit{i.e.}, the time complexity of adding together the variance matrices from all $V$ vertices. By similar reasoning, the time complexity of all of the matrix inversions in this pass is no more than $O(n^3 V).$ 

        The variance matrix sums, and the inversion operation carried out on this sum afterward, in the coarse-to-fine pass are the same as the ones that were already computed as part of the fine-to-coarse pass, so the operations with the highest time complexity in this pass are products of $n\times n$ matrices. Since the number of these matrix products can be bounded above by a value that is proportional to $V,$ the time complexity of this pass is no more than $O(n^3 V).$
	\end{proof}

	\section{Inverse-Variance Weighted Vectors}  \label{sec:appendix:preliminary_result}
	
	Since the fine-to-coarse recursion uses the inverse-variance weighted mean of two random vectors, the next Lemma provides the formula for the resulting vector and its variance. 
	
	\begin{lemma} \label{lem:inv_var_wtd_mean}
		Suppose $\ab,\bb\in\rr^n$ are realizations of random variables, each with finite variances and with a mean equal to $\psib\in \rr^n.$ Then, the minimum variance unbiased estimate of $\psib$ that is linear in $\ab,\bb$ is given by
		\begin{align*}
			\hat{\psib} = \var(\hat{\psib}) \left( \var(\ab)^{-1}\ab + \var(\bb)^{-1}\bb\right)
		\end{align*}
		\noindent
		where
		\begin{align*}
			\var(\hat{\psib}) = \left( \var(\ab)^{-1} + \var(\bb)^{-1}\right)^{-1}
		\end{align*}
	\end{lemma}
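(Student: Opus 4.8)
The plan is to recognize this as a special case of generalized least squares and invoke Aitken's Theorem (Lemma~\ref{lem:appendix:gauss_markov_thm}). One point worth flagging at the outset: in the fine-to-coarse recursion where this lemma is applied, $\ab$ and $\bb$ are \emph{independent}, and the stated formula relies on this; I would make the independence of $\ab$ and $\bb$ explicit in the hypotheses, since without it the cross-covariance $\cov(\ab,\bb)$ would enter and the optimal linear combination would no longer be the one claimed.

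Granting independence, I would stack the two observations into the linear model
\[
\stack(\{\ab,\bb\}) = X\psib + \ub, \qquad X := \stack(\{I_n, I_n\}), \qquad \ub := \stack(\{\ab - \psib,\; \bb - \psib\}),
\]
where $\ub$ is mean-zero with $\var(\ub) = \diag(\var(\ab), \var(\bb))$ by independence. The matrix $X\in\rr^{2n\times n}$ has full column rank, so Aitken's Theorem applies and yields a unique BLUE. An estimator that is ``linear in $\ab,\bb$'' and unbiased for $\psib$ is precisely an estimator of the form $A\stack(\{\ab,\bb\})$ that is unbiased in the model above, so the minimum-variance (in the Loewner order $\leq$) element of that class is exactly the Aitken/GLS estimator for this model.

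It then remains to evaluate the two closed forms in \eqref{eq:appendix:aitkens_thm1}. A direct block computation gives $X^\top \var(\ub)^{-1} X = \var(\ab)^{-1} + \var(\bb)^{-1}$ and $X^\top \var(\ub)^{-1}\stack(\{\ab,\bb\}) = \var(\ab)^{-1}\ab + \var(\bb)^{-1}\bb$, so
\[
\hat{\psib} = \big(\var(\ab)^{-1} + \var(\bb)^{-1}\big)^{-1}\big(\var(\ab)^{-1}\ab + \var(\bb)^{-1}\bb\big), \qquad \var(\hat{\psib}) = \big(\var(\ab)^{-1} + \var(\bb)^{-1}\big)^{-1},
\]
which is the claim.

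The argument has no real obstacle; the only thing to be careful about is the independence hypothesis noted above. If instead one prefers a proof that does not cite Aitken's Theorem, I would parametrize a generic linear unbiased estimator as $C\ab + (I_n - C)\bb$ (unbiasedness forces the two coefficient matrices to sum to $I_n$), minimize $C\var(\ab)C^\top + (I_n-C)\var(\bb)(I_n-C)^\top$ over $C$ by setting the matrix derivative to zero to obtain $C = \var(\bb)\big(\var(\ab)+\var(\bb)\big)^{-1}$, and then reconcile this with the stated form via the ``parallel sum'' identity $\big(\var(\ab)^{-1}+\var(\bb)^{-1}\big)^{-1}\var(\ab)^{-1} = \var(\bb)\big(\var(\ab)+\var(\bb)\big)^{-1}$; verifying that identity (e.g.\ by left-multiplying by $\var(\ab)^{-1}+\var(\bb)^{-1}$ and right-multiplying by $\var(\ab)+\var(\bb)$) is the only mildly technical step on that alternative route.
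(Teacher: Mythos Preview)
Your proof is correct and follows essentially the same approach as the paper: stack $\ab,\bb$ into a single observation vector with design matrix $X=\stack(\{I_n,I_n\})$, apply Aitken's Theorem (the paper cites Gauss--Markov), and read off the GLS formulas. Your observation that independence of $\ab$ and $\bb$ is tacitly assumed is well taken---the paper's proof also uses it when writing $\Omega$ as block diagonal without stating it in the hypotheses.
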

	\begin{proof}
		Let $\boldsymbol{c} = (\ab^\top, \bb^\top)^\top,$ $X = \stack(\{I_n, I_n\}),$ and let the block matrix $\Omega$ be defined as
		
		\begin{align*}
			\Omega = \begin{bmatrix}
				\var(\ab) & 0 \\
				0 & \var(\bb)
			\end{bmatrix}.
		\end{align*}
		
		\noindent
		Now consider applying a GLS estimator to a dataset with the $i^{\textrm{th}}$ independent and dependent variables given by $Y_i$ and $X_{i,\cdot}^\top,$ respectively. The resulting estimator is equal to
		$$\check{\psib} = \left( X^\top \Omega^{-1} X \right)^{-1} X^\top \Omega^{-1} \boldsymbol{c},$$
		\noindent
		which has variance given by 
		$$ \var(\check{\psib}) = \left( X^\top \Omega^{-1} X \right)^{-1}.$$ 
		\noindent
		Also, the Gauss-Markov theorem implies this estimator is the best linear unbiased estimator for $\psib;$ see for example, \citep{greene2003econometric}. The final result follows from 
		\begin{align*}
			\var(\check{\psib}) = \left( \var(\ab)^{-1} + \var(\bb)^{-1}\right)^{-1} = \var(\hat{\psib}),
		\end{align*}
		\noindent
		and
		\begin{align*}
			\check{\psib} = \left( X^\top \Omega^{-1} X \right)^{-1} X^\top \Omega^{-1} \boldsymbol{c} =\var(\hat{\psib}) \left( \var(\ab)^{-1}\ab + \var(\bb)^{-1}\bb\right) = \hat{\psib}.
		\end{align*}
	\end{proof}

\end{document}